\newcommand{\parenthezises}[1]{\arabic{#1}}
\numberwithin{equation}{section}
\definecolor{bleu1}{RGB}{0,57,128}
\def\bleu1{\color{bleu1}}
\patchcmd{\section}{\normalfont}{\normalfont \bleu1}{}{}
\patchcmd{\subsection}{\normalfont}{\normalfont \bleu1}{}{}
\patchcmd{\subsubsection}{\normalfont}{\normalfont \bleu1}{}{}
\xpatchcmd{\proof}{\itshape}{\it \bleu1 \proofnamefont}{}{}
\newcommand{\proofnamefont}{}
\newtheorem{proposition}{Proposition}[section]
\newtheorem{theorem}{Theorem}[section]
\newtheorem{definition}{Definition}[section] 
\newtheorem{lemma}{Lemma}[section]
\newtheorem{remark}{Remark}[section]
\newtheorem{corollary}{Corollary}[section]
\newcommand{\Z}{{\mathbb Z}}
\newcommand{\R}{{\mathbb R}}
\begin{document}
	
\title[]{Winding number, density of states and acceleration}

\author{Xueyin Wang}
\address{
Chern Institute of Mathematics and LPMC, Nankai University, Tianjin 300071, China
}

\email{xueyinwang@mail.nankai.edu.cn}

\author{Zhenfu Wang}
\address{
	Chern Institute of Mathematics and LPMC, Nankai University, Tianjin 300071, China
}

\email{zhenfuwang@mail.nankai.edu.cn}

\author {Jiangong You}
\address{
Chern Institute of Mathematics and LPMC, Nankai University, Tianjin 300071, China} \email{jyou@nankai.edu.cn}

\author{Qi Zhou}
\address{
Chern Institute of Mathematics and LPMC, Nankai University, Tianjin 300071, China
}

\email{qizhou@nankai.edu.cn}

\begin{abstract}
Winding number and density of states are two fundamental physical quantities for non-self-adjoint quasi-periodic Schr\"odinger operators, which reflect the asymptotic distribution of zeros of the characteristic determinants of the truncated operators under Dirichlet boundary condition,  with respect to complexified phase and the energy respectively.  We will prove that the winding number is in fact Avila's acceleration and it is also closely related to the density of states by a generalized Thouless formula for non-self-adjoint Schr\"odinger operators and Avila's global theory. 
\end{abstract}

\maketitle

\section{Introduction}

We study the analytic quasi-periodic Schr\"odinger operators 
\begin{equation}\label{sch}
	(H\psi)_{n} = \psi_{n+1}+\psi_{n-1}+v(x+n\alpha)\psi_{n}=E\psi_{n},
\end{equation}
where $x\in\mathbb{T}^{d}:=\mathbb{R}^{d}/(2\pi\mathbb{Z})^{d}$ is the phase, $v\in C^{\omega}(\mathbb{T}^{d},*)$ is the potential, $\alpha$ is the frequency with $(1,\alpha)$ rationally independent. Here $*$ denotes $\mathbb{R}$ or $\mathbb{C}$, then \eqref{sch} defines a self-adjoint or non-self-adjoint  quasi-periodic Schr\"odinger operator (non-Hermitian quasi-crystals in physical literature) on $\ell^2(\mathbb{Z})$. 

In the self-adjoint case, ever since the almost-periodic flu around the 1980s \cite{Sim2},  due to its importance in physics \cite{AOS, TKNN}  and surprising spectral richness \cite{Av0, Av1, AK06, AYZ1, BG, GS11, J, JL08}, many breakthroughs have appeared in recent years (readers are invited to consult \cite{J1} and the reference therein for details). On the other hand, while Hermiticity lies at the heart of quantum mechanics, recent experimental advances in controlling dissipation have brought about unprecedented flexibility in engineering non-Hermitian Hamiltonians in open classical and quantum systems \cite{Gong}; Non-Hermitian Hamiltonians exhibits rich phenomena without Hermitian counterparts \cite{AGU,BBK}, while all of these phenomena can be exhibited in  non-Hermitian quasi-crystals \cite{longhi,jiang2019interplay}. In this paper, we are more interested in the non-self-adjoint case.

A fruitful way to study \eqref{sch} is by method of quasi-periodic $\mathrm{SL}(2,\mathbb{C})$ cocycle. Recall that the eigenvalue equations $H\psi=E\psi$ are equivalent to a certain family of the discrete dynamical systems, so called Schr\"odinger cocycle $(\alpha,A)\in \mathbb{T}^{d}\times  C^{\omega}( \mathbb{T}^{d}, {\rm SL}(2,\mathbb{C}))$, i.e., 
\begin{equation*}
	\begin{pmatrix}
		\psi_{n+1}\\ \psi_{n}
	\end{pmatrix}=A(x+n\alpha )\begin{pmatrix}
		\psi_{n}\\ \psi_{n-1}
	\end{pmatrix}, \ \text{where} \ A(x)=S_{E,v}(x)=\begin{pmatrix}
		E- v(x)&-1\\1&0
	\end{pmatrix}.
\end{equation*}
The celebrated Avila's global theory gives a classification of one-frequency analytic quasi-periodic $\mathrm{SL}(2,\mathbb{C})$ cocycles by two dynamical invariants: Lyapunov exponent and acceleration. Let us introduce these two invariants: 
Lyapunov exponent is defined by
\begin{equation*}
	L(E,\mathrm{i}y):=L(\alpha,A(\cdot+\mathrm{i}y)):=\lim_{n\rightarrow \infty}\frac{1}{(2\pi)^{d}}\int_{\mathbb{T}^{d}}\frac{1}{n}\log \|M_{n}(x+\mathrm{i}y)\| \mathrm{d}x,
\end{equation*}
where $M_{n}$ is the $n$-step transfer matrix given by 
\begin{equation*}
	M_{n}(x+\mathrm{i}y)=A(x+\mathrm{i}y+n\alpha)\cdots A(x+\mathrm{i}y+\alpha).
\end{equation*}
  In the one-frequency case $(d=1)$,   according to Avila's global theory \cite{Av0}, Lyapunov exponent is a piecewise linear and convex function in $y$, and one can define the acceleration of $(\alpha, A_{\mathrm{i}y}):=(\alpha, A(\cdot+\mathrm{i}y))$ as:
\begin{equation*}
	\omega^{\pm}(E,\mathrm{i}y):=\omega^{\pm}(\alpha, A_{\mathrm{i}y}):=\lim _{\epsilon \rightarrow 0^{\pm}} \frac{1}{\epsilon} (L(\alpha, A_{\mathrm{i}(y+\epsilon)})-L(\alpha, A_{\mathrm{i}y})),
\end{equation*}
i.e. the right (left) derivatives of $L(E,\mathrm{i}y)$ with respect to $y$. The key observation of Avila is that the acceleration  is quantized: $\omega (\alpha, A_{\mathrm{i}y}) \in \mathbb{Z}$. 
This observation  plays a fundamental role in his global theory  \cite{Av0}.

On the other hand, in studying \eqref{sch}, no matter in the physical setting  \cite{AGU,BBK,Gong,jiang2019interplay,LZC,longhi} or in the mathematical setting \cite{Av1, AJ,BG,GS08,GS11,GSV19,J,JL08}, a key object is the  determinant of truncation with Dirichlet boundary condition
\begin{eqnarray*}
	f_{n}(x,E)&:=&\det(H_{n}(x)-E)\\
	&:=& \det \begin{pmatrix}
		v(x+\alpha)-E&1&0&\cdots&0\\
		1&v(x+2\alpha)-E&1&\cdots&0\\
		\vdots&\ddots&\ddots&\ddots&\vdots\\
		0&\vdots&\ddots&\ddots&1\\
		0&\cdots&0&1&v(x+n\alpha)-E
	\end{pmatrix}.
\end{eqnarray*}

If $v$ is complex, then in studying the distribution of the zeros of $f_{n}(x,E)$, two fundamental physical quantities appears, i.e. winding number and density of states.  In this paper, we will prove that the winding number and the density of states are closely related and Avila's acceleration establishes the bridge between them.

\subsection{Winding number and topological phase transition}
Let us introduce the first physical quantity called \textit{winding number} \cite{Gong,LZC,longhi}:
\begin{equation*}
	\nu^{\pm}(E,\mathrm{i}y)=\lim_{\epsilon\rightarrow 0^{\pm}}\lim_{n\rightarrow\infty}\nu_{n}(E,\mathrm{i}(y+\epsilon)),
\end{equation*} 
where
\begin{equation*}
	\nu_{n}(E,\mathrm{i}y)=\frac{-1}{2\pi\mathrm{i}n}\int_{0}^{2\pi} \frac{\partial}{\partial x} \log \det(H_{n}(x+\mathrm{i}y)-E) \mathrm{d}x.
\end{equation*}
One can prove that the winding number is an integer, which represents the topological phase in one-dimensional non-Hermitian lattices. Therefore,  topological phase transition of the non-Hermitian systems can be observed via detecting the variation of winding number \cite{Kita10}. 

Let us first explain the physical meaning of winding number. Let $E_{j}(x+\mathrm{i}y), j=1,\cdots,n$ be the eigenvalues of $H_{n}(x+\mathrm{i}y)$. Then $\det (H_{n}(x+\mathrm{i}y)-E)=\prod_{j=1}^{n}(E_{j}(x+\mathrm{i}y)-E)$ implies that  the winding number can be expressed as
\begin{equation*}
	\nu_{n}(E,\mathrm{i}y)=\frac{1}{n}\sum_{j=1}^{n}\frac{-1}{2\pi}\int_{0}^{2\pi} \frac{\partial}{\partial x} \arg(E_{j}(x+\mathrm{i}y)-E)\mathrm{d}x,
\end{equation*}
where $\arg (E_{j}(x+\mathrm{i}y)-E)$ is the argument of the complex energy $E_{j}(x+\mathrm{i}y)-E$.  
In this sense, a non-trivial winding number gives the number of times the complex eigenenergies encircle the base point $E$. 

To make clearly the relation of winding number and acceleration, we consider the complex extension of \eqref{sch}:
\begin{equation}\label{compentend}
	(H_{v,y}\psi)_{n}=\psi_{n+1}+\psi_{n-1}+v(x+\mathrm{i}y+n\alpha)\psi_{n},
\end{equation}
and denote by $\Sigma_{v,y}$ the spectrum of $H_{v,y}$. As explained in \cite{Gong},  the winding number $\nu^-(E,\mathrm{i}y)=\nu^+(E,\mathrm{i}y)$ if the base point $E\notin\Sigma_{v,y}$, i.e. $(\alpha, S_{E,v})$ is uniformly hyperbolic\footnote{One may consult Proposition \ref{equi} for this.}.  For $E\in \Sigma_{v,y}$ with positive Lyapunov exponent, we have 

\begin{theorem}\label{windacc}
	For any $(\alpha,v)\in \mathbb{R}\backslash\mathbb{Q}\times C^{\omega}(\mathbb{T}, \mathbb{C})$, if $L(E,\mathrm{i}y)>0$, then
	\begin{equation}\label{wiac}
		\begin{split}
			\nu^{+}(E,\mathrm{i}y)&=\omega^{+}(E,\mathrm{i}y),\\
			\nu^{-}(E,\mathrm{i}y)&=\omega^{-}(E,\mathrm{i}y).
		\end{split}
	\end{equation}
	In particular, if $E\notin\Sigma_{v,y}$, then
	\begin{equation*}
		\nu^{+}(E,\mathrm{i}y)=\nu^{-}(E,\mathrm{i}y)=\omega^{+}(E,\mathrm{i}y)=\omega^{-}(E,\mathrm{i}y).
	\end{equation*}
\end{theorem}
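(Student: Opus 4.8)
The plan is to connect the winding number $\nu_n(E,\mathrm{i}y)$ and the Lyapunov exponent $L(E,\mathrm{i}y)$ through the transfer matrix, and then take derivatives in $y$. The starting point is the standard identity relating the Dirichlet determinant to the transfer matrix: if $M_n(x+\mathrm{i}y)$ denotes the $n$-step transfer matrix of the Schr\"odinger cocycle $(\alpha, S_{E,v})$, then $f_n(x,E) = \det(H_n(x+\mathrm{i}y)-E)$ equals (up to sign) the $(1,1)$-entry of $M_n(x+\mathrm{i}y)$, and more precisely one has the exact formula
\begin{equation*}
	M_n(x+\mathrm{i}y) = \begin{pmatrix} f_n(x,E) & -f_{n-1}(x+\alpha,E) \\ f_{n-1}(x,E) & -f_{n-2}(x+\alpha,E) \end{pmatrix}.
\end{equation*}
Thus $\log f_n(x,E)$ is, modulo lower-order contributions, $\log$ of the growing entry of $M_n$, so that $\frac{1}{2\pi n}\int_0^{2\pi}\log|f_n(x+\mathrm{i}y,E)|\,\mathrm{d}x \to L(E,\mathrm{i}y)$ when $L>0$, by subadditivity and the fact that in the positive-exponent regime the norm of $M_n$ is dominated by (any) single nonzero entry on average. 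Simultaneously, $\nu_n(E,\mathrm{i}y) = \frac{-1}{2\pi\mathrm{i}n}\int_0^{2\pi}\partial_x \log f_n(x+\mathrm{i}y,E)\,\mathrm{d}x$ is, by a Cauchy--Riemann / argument-principle computation, exactly the $y$-derivative of $\frac{1}{2\pi n}\int_0^{2\pi}\log|f_n(x+\mathrm{i}y,E)|\,\mathrm{d}x$: indeed, writing $g_n(y) = \frac{1}{2\pi n}\int_0^{2\pi}\log|f_n(x+\mathrm{i}y,E)|\,\mathrm{d}x$, the harmonicity of $\log|f_n|$ (away from its zeros) in the strip gives $\partial_y g_n(y) = \frac{1}{2\pi n}\int_0^{2\pi}\partial_y\log|f_n|\,\mathrm{d}x = \frac{-1}{2\pi n}\int_0^{2\pi}\partial_x \arg f_n\,\mathrm{d}x = \nu_n(E,\mathrm{i}y)$, using $\partial_y\log|f_n| = -\partial_x\arg f_n$ from the Cauchy--Riemann equations for the analytic function $f_n(x+\mathrm{i}y,E)$ in the variable $z = x+\mathrm{i}y$.

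Next I would pass to the limit. By Avila's convexity, $y\mapsto L(E,\mathrm{i}y)$ is a convex piecewise-linear function, and $g_n(y) \to L(E,\mathrm{i}y)$ uniformly on compact subsets of a neighborhood of the given $y_0$ where $L>0$ (positivity of $L$ on a neighborhood follows from convexity plus $L(E,\mathrm{i}y_0)>0$ together with continuity, unless $L$ vanishes identically to one side, a case handled separately — but even there the one-sided statement is what we need). The functions $g_n$ are themselves convex in $y$ (as averages of subharmonic functions restricted to vertical lines, or directly because $\log|f_n|$ is subharmonic so its circle/line averages are convex), hence convergence of convex functions forces convergence of one-sided derivatives: $\lim_{n\to\infty}\partial_y^\pm g_n(y) \to \partial_y^\pm L(E,\mathrm{i}y)$ at every point, and in fact $\lim_{\epsilon\to 0^\pm}\lim_{n\to\infty}\partial_y g_n(y+\epsilon) = \omega^\pm(E,\mathrm{i}y)$ because $L$ is piecewise linear so its slope is locally constant on $(y_0, y_0+\delta)$ and equals $\omega^+$, respectively on $(y_0-\delta,y_0)$ equals $\omega^-$. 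Combined with $\partial_y g_n(y) = \nu_n(E,\mathrm{i}y)$, this yields $\nu^\pm(E,\mathrm{i}y) = \lim_{\epsilon\to 0^\pm}\lim_{n\to\infty}\nu_n(E,\mathrm{i}(y+\epsilon)) = \lim_{\epsilon\to 0^\pm}\lim_{n\to\infty}\partial_y g_n(y+\epsilon) = \omega^\pm(E,\mathrm{i}y)$, which is \eqref{wiac}. The last assertion of the theorem is then immediate: if $E\notin\Sigma_{v,y}$ the cocycle $(\alpha, S_{E,v,\mathrm{i}y})$ is uniformly hyperbolic (Proposition \ref{equi}), hence $L(E,\mathrm{i}y)>0$ and, by Avila's global theory, uniform hyperbolicity persists in a neighborhood of $y_0$ so the acceleration — the slope of $L$ — is continuous and locally constant there, giving $\omega^+ = \omega^-$, and therefore $\nu^+ = \nu^- = \omega^+ = \omega^-$.

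The main obstacle I anticipate is making rigorous the step "$g_n(y)\to L(E,\mathrm{i}y)$ and this convergence is good enough to commute with the one-sided $y$-derivatives." The convexity argument reduces derivative-convergence to pointwise convergence of the $g_n$, so the real content is: (i) showing $g_n(y) \to L(E,\mathrm{i}y)$ pointwise, which needs that the $(1,1)$-entry $f_n$ of $M_n$ captures the full growth rate $\|M_n\|$ on average when $L>0$ — this is where positivity of the Lyapunov exponent is essential and where one invokes that a uniquely ergodic cocycle with $L>0$ has $\frac1n\log|(M_n)_{11}|\to L$ in $L^1(\mathbb{T})$ (a consequence of e.g. the Avila--Jitomirskaya--Sadel type analysis, or more elementarily from the fact that the sum of the four entry-log-averages each converge to $L$ and none can exceed it); and (ii) controlling the order of the two limits in $\epsilon$ and $n$ in the definition of $\nu^\pm$ — here one uses that the convex functions $g_n$ converge to the piecewise-linear $L$, so for $\epsilon$ in a small one-sided interval where $L$ is affine, $\lim_n \partial_y g_n(y+\epsilon)$ already equals the constant slope $\omega^\pm$, making the outer $\epsilon$-limit trivial. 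A secondary technical point is justifying $\partial_y g_n = \nu_n$ at points $y$ where $f_n(x+\mathrm{i}y,E)$ has zeros for some $x$ (so $\log|f_n|$ is not smooth there): this is handled by noting such bad $y$ form a discrete set for each fixed $n$ (zeros of the analytic-in-$z$ family), off which the identity holds, and by continuity/monotonicity of the convex $g_n$ it extends, or one simply works with $y+\epsilon$ generic.
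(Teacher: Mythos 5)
Your reduction of the theorem to the convergence of the phase-averages $g_n(y)=\frac{1}{2\pi n}\int_0^{2\pi}\log|f_n(x+\mathrm{i}y,E)|\,\mathrm{d}x$ to $L(E,\mathrm{i}y)$ is essentially the paper's argument: the paper packages your ``Cauchy--Riemann plus convexity of $g_n$'' step as a Jensen formula on the closed annulus (Proposition \ref{jensen}), which yields the sandwich $\nu_n(E,\mathrm{i}y_1)\le \frac{g_n(y_2)-g_n(y_1)}{y_2-y_1}\le \nu_n(E,\mathrm{i}y_2)$ and at the same time disposes of your worry about zeros of $f_n$ on the bounding circles (this is why the paper proves a closed-annulus version with a principal-value boundary term rather than invoking genericity of $y+\epsilon$). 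Your use of the affineness of $L$ on a one-sided neighborhood and of derivative-convergence for convex functions then matches the paper's choice of test points $y+\epsilon$, $y+2\epsilon$, $y+3\epsilon$, and the uniform-hyperbolicity argument for $E\notin\Sigma_{v,y}$ is the same.

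The genuine gap is in your item (i), the convergence $g_n(y)\to L(E,\mathrm{i}y)$. This is exactly the paper's Theorem \ref{converge}, and it is the technical core of the whole paper, proved via a large deviation theorem for the determinants (Theorem \ref{LDTf}) using a \L{}ojasiewicz-type estimate, Cartan estimates, and BMO/John--Nirenberg machinery. Neither of your proposed shortcuts works. Knowing that $\frac1n\int\log\|M_n\|\to L$ and that each entry satisfies $\frac1n\log|(M_n)_{ij}|\le L+o(1)$ does not imply that any single entry's log-average converges to $L$: an entry can be anomalously small (even vanish) on a set of phases of positive measure, and since $\log$ is unbounded below this can drag $\frac1n\int\log|f_n|$ strictly below $L_n$ without affecting the norm; ruling this out is precisely the content of the estimate \eqref{cartan} and of Proposition \ref{lower}. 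Nor is there a soft unique-ergodicity result giving $\frac1n\log|(M_n)_{11}|\to L$ in $L^1(\mathbb{T})$ for all irrational $\alpha$: the paper emphasizes that the pointwise statement fails for some phases (Jitomirskaya--Liu), and the previously known results of this type for the entry (Goldstein--Schlag) required $\alpha\in\mathrm{SDC}$. So your proof of Theorem \ref{windacc} is complete only modulo Theorem \ref{converge}, which your sketch does not actually establish.
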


\begin{remark}
	\eqref{wiac} was first observed in \cite{LZC}, with a physics level of proof, here we provide a rigorous proof.
\end{remark}

As we said, the significance of winding number indicates the topological phase transition in non-Hermitian Hamiltonian. For some models, the transition points can  be exactly located. For example, consider the following simple but well-studied physical model  \cite{jiang2019interplay,LZC,longhi}:
	\begin{equation}\label{namo}
			(H_{2\lambda\cos,y}\psi)_{n}=\psi_{n+1}+\psi_{n-1}+2\lambda\cos(x+\mathrm{i}y+n\alpha) \psi_{n},
		\end{equation}
		i.e. almost Mathieu operator with complex extension. Then we have the following:
\begin{corollary}\label{amo}
	Let $\alpha\in\mathbb{R}\backslash\mathbb{Q}$, $\lambda\in(0,1)$.
	Then  $-\log\lambda$ is the topological transition point of \eqref{namo}. More precisely, we have the following:
	
	\begin{enumerate}[font=\normalfont, label={(\arabic*)}]
			\item  If $0<y<-\log\lambda$, then $\nu^{+}(E,\mathrm{i}y)=\nu^{-}(E,\mathrm{i}y)=0$ for any $E\notin \Sigma_{2\lambda \cos,0}$.
			\item If $y>-\log\lambda$, then there exists $E\notin \Sigma_{2\lambda \cos,0}$ with  $\nu^{+}(E,\mathrm{i}y)=\nu^{-}(E,\mathrm{i}y)=1$.
		\end{enumerate}	
\end{corollary}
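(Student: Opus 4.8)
The plan is to deduce everything from Theorem~\ref{windacc} together with the explicit Lyapunov exponent of the complexified almost Mathieu cocycle. For $\lambda\in(0,1)$, recall that $L(E,0)=0$ exactly when $E\in\Sigma_{2\lambda\cos,0}$ (the Aubry--Andr\'e/Bourgain--Jitomirskaya value of the Lyapunov exponent on the spectrum, combined with the fact that uniform hyperbolicity forces $L(E,0)>0$ off the spectrum). The key claim I would establish is the identity
\[
	L(E,\mathrm{i}y)=\max\{\,L(E,0),\ \log\lambda+|y|\,\},\qquad E\in\mathbb C,\ y\in\mathbb R .
\]
For $E\in\Sigma_{2\lambda\cos,0}$ this is Avila's computation that $(\alpha,S_{E,2\lambda\cos})$ is subcritical with subcriticality radius exactly $-\log\lambda$. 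The content is the case $E\notin\Sigma_{2\lambda\cos,0}$, which I would argue as follows.

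First, Herman's subharmonicity bound gives $L(E,\mathrm{i}y)\geq\max\{0,\log\lambda+|y|\}$, while the elementary estimate $\log\|S_{E,2\lambda\cos}(x+\mathrm{i}y)\|\leq\log\lambda+|y|+o(1)$ as $|y|\to\infty$ forces $L(E,\mathrm{i}y)=\log\lambda+|y|$ for all sufficiently large $|y|$, since by Avila's global theory $y\mapsto L(E,\mathrm{i}y)$ is convex and piecewise affine with integer slopes. Second, on the maximal strip $|y|<\delta(E)$ on which the complexified cocycle $(\alpha,S_{E,2\lambda\cos}(\cdot+\mathrm{i}y))$ is uniformly hyperbolic, $L(E,\mathrm{i}\cdot)$ is real analytic, hence affine there; a nonzero slope would, by convexity and the asymptotics $|y|+\log\lambda$, force the value $L(E,0)=\log\lambda<0$, which is impossible, so the slope is $0$ and $L(E,\mathrm{i}\cdot)\equiv L(E,0)$ on that strip. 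At $|y|=\delta(E)$ uniform hyperbolicity breaks down while $L>0$, so Avila's characterization of uniform hyperbolicity—a cocycle with positive Lyapunov exponent is uniformly hyperbolic iff its acceleration vanishes—forces a jump of the acceleration there (by at least $1$, by convexity), and comparison with the asymptote and with Herman's bound then pins down $\delta(E)=L(E,0)-\log\lambda>-\log\lambda$ together with $L(E,\mathrm{i}y)=\log\lambda+|y|$ for $|y|\geq\delta(E)$. Equivalently, $\Sigma_{2\lambda\cos,y}=\Sigma_{2\lambda\cos,0}$ for $|y|<-\log\lambda$, so $E\notin\Sigma_{2\lambda\cos,y}$ and Proposition~\ref{equi} yields uniform hyperbolicity on that range. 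Ruling out an ``intermediate regime'' off the spectrum in this way is the step I expect to be the main obstacle.

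Granting the identity, both assertions are immediate from Theorem~\ref{windacc}. For~(1): if $0<y<-\log\lambda$ and $E\notin\Sigma_{2\lambda\cos,0}$, then $\log\lambda+|y|<0<L(E,0)$, hence $L(E,\mathrm{i}y)=L(E,0)>0$ and $L(E,\mathrm{i}\cdot)$ is flat near $y$, so $\omega^{+}(E,\mathrm{i}y)=\omega^{-}(E,\mathrm{i}y)=0$ and Theorem~\ref{windacc} gives $\nu^{+}(E,\mathrm{i}y)=\nu^{-}(E,\mathrm{i}y)=0$. For~(2): fix $y>-\log\lambda$; since $L(\cdot,0)$ is continuous and vanishes on the compact set $\Sigma_{2\lambda\cos,0}\subsetneq\mathbb R$, I would choose $E\notin\Sigma_{2\lambda\cos,0}$ (e.g.\ slightly below $\min\Sigma_{2\lambda\cos,0}$) with $0<L(E,0)<y+\log\lambda$, i.e.\ $\delta(E)<y$; then $L(E,\mathrm{i}y)=y+\log\lambda>0$ and $L(E,\mathrm{i}\cdot)$ has slope $1$ in a neighbourhood of $y$, so $\omega^{+}(E,\mathrm{i}y)=\omega^{-}(E,\mathrm{i}y)=1$ and Theorem~\ref{windacc} gives $\nu^{+}(E,\mathrm{i}y)=\nu^{-}(E,\mathrm{i}y)=1$; in particular $-\log\lambda$ is the topological transition point of~\eqref{namo}.
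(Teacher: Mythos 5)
Your proposal is correct and follows essentially the same route as the paper: establish the identity $L(E,\mathrm{i}y)=\max\{\log\lambda+y,\,L(E,\mathrm{i}0)\}$, read off the turning point $\gamma_{*}=L(E,\mathrm{i}0)-\log\lambda$, and apply Theorem~\ref{windacc}. The only difference is that the paper simply cites Avila \cite{Av0} for this identity (valid for all $E\in\mathbb{C}$), whereas you re-derive it for $E\notin\Sigma_{2\lambda\cos,0}$ via Herman's bound, the large-$|y|$ asymptotics, quantization of the acceleration and the uniform-hyperbolicity characterization --- which is in substance Avila's own argument and is sound.
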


\begin{remark}
	It was previously proved in \cite{WYZ} that if $0<y<-\log\lambda$, then $\Sigma_{2\lambda\cos,0}=\Sigma_{2\lambda\cos,y}$. 
\end{remark}

To analyze the topological transition for the general one-frequency analytic quasi-periodic Schr\"odinger operators, one can apply the quantitative global theory developed by Ge-Jitomirskaya-You-Zhou \cite{GJYZ} to study the transition points.

Our second application concerns the distribution of zeros of $\det (H_{n}(z)-E)$ with respect to the phase $z$. As was first claimed by  Goldstein-Schlag  \cite{GS08}: ``In fact, these zeros turn out to be relatively uniformly distributed along certain circles.''  Here, we exactly give the location of these circles. Denote by $\mathcal{A}(r_{1},r_{2})$ the annulus $\{z\in\mathbb{C}:r_{1}<|z|<r_{2}\}$, then $f_{n}(E,z)$ is holomorphic in $z$ on the annulus $\mathcal{A}(\mathrm{e}^{-h},\mathrm{e}^{h})$ if $v$ is analytic on the strip $\mathbb{T}_{h}$ where $\mathbb{T}_{h} = \{x+{\rm i}y:  x \in \mathbb{T}, y\in (-h,h)\}$. Let $\mathcal{Z}_{f_{n}}=\{z\in\mathbb{C}: f_{n}(z)=0\}$, and denote the number of zeros (counted with multiplicities) by
\begin{equation*}
	N_{n}(r_{1},r_{2})=\#\{z\in\mathcal{Z}_{f_{n}}: z\in\mathcal{A}(r_{1},r_{2})\}.
\end{equation*}
Then we show the winding number actually counts the averaged number of zeros in the annulus:
\begin{corollary}\label{zeros}
	Let $(\alpha,v)\in \mathbb{R}\backslash\mathbb{Q}\times C^{\omega}(\mathbb{T}_{h},\mathbb{C})$. Suppose $L(E,\mathrm{i}y)>0$ for any $y\in[y_{1},y_{2}]$. Then we have 
	\begin{enumerate}[font=\normalfont, label={(\arabic*)}]
		\item \label{item:zeros1} 
		\begin{equation}\label{z}
			\lim_{\epsilon\rightarrow0^{+}}\lim_{n\rightarrow\infty} \frac{1}{n}N_{n}(\mathrm{e}^{-(y_{2}+\epsilon)},\mathrm{e}^{-(y_{1}-\epsilon)})=\nu^{+}(E,\mathrm{i}y_{2})-\nu^{-}(E,\mathrm{i}y_{1}).
		\end{equation}
		Especially, 
		\begin{equation*}
			\lim_{\epsilon\rightarrow0^{+}}\lim_{n\rightarrow\infty} \frac{1}{n}N_{n}(\mathrm{e}^{-(y+\epsilon)},\mathrm{e}^{-(y-\epsilon)})=\nu^{+}(E,\mathrm{i}y)-\nu^{-}(E,\mathrm{i}y).
		\end{equation*}
		\item \label{item:zeros2} Furthermore,
		\begin{equation*}
			\lim_{\epsilon\rightarrow0^{+}}\lim_{n\rightarrow\infty} \frac{1}{n} N_{n}(\mathrm{e}^{-\epsilon},\mathrm{e}^{\epsilon})=2\nu^{+}(E,\mathrm{i}0),
		\end{equation*}
	if $v\in C^{\omega}(\mathbb{T},\mathbb{R})$,  $E\in \R$ with $L(E,\mathrm{i}0)>0$.
	\end{enumerate}
\end{corollary}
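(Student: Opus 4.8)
The plan is to route everything through the convex function
\[
\phi_n(y):=\frac{1}{2\pi n}\int_0^{2\pi}\log\bigl|f_n(E,\mathrm{e}^{\mathrm{i}x-y})\bigr|\,\mathrm{d}x ,
\]
the normalized circular mean of $\log|f_n(E,\cdot)|$ over the circle $|z|=\mathrm{e}^{-y}$ (recall $f_n(x+\mathrm{i}y,E)=f_n(E,\mathrm{e}^{\mathrm{i}(x+\mathrm{i}y)})$). I would first record three elementary facts. (i) As $z\mapsto f_n(E,z)$ is holomorphic and $\not\equiv 0$ on $\mathcal{A}(\mathrm{e}^{-h},\mathrm{e}^{h})$, the function $\log|f_n(E,\cdot)|$ is subharmonic there, so its circular mean is convex in $\log r$; hence $\phi_n$ is convex in $y$. (ii) The substitution $z=\mathrm{e}^{\mathrm{i}(x+\mathrm{i}y)}$ turns $\int_0^{2\pi}\partial_x\log f_n(x+\mathrm{i}y,E)\,\mathrm{d}x$ into $\oint_{|z|=\mathrm{e}^{-y}}\frac{f_n'(E,z)}{f_n(E,z)}\,\mathrm{d}z=2\pi\mathrm{i}\,W_n(\mathrm{e}^{-y})$, where $W_n(r)\in\mathbb{Z}$ is the winding number of $f_n(E,\cdot)$ about $0$ along $|z|=r$; combined with $\frac{\mathrm{d}}{\mathrm{d}\log r}\bigl(\tfrac1{2\pi}\int_0^{2\pi}\log|f_n(E,r\mathrm{e}^{\mathrm{i}\theta})|\,\mathrm{d}\theta\bigr)=W_n(r)$ this gives $\nu_n(E,\mathrm{i}y)=-\tfrac1n W_n(\mathrm{e}^{-y})=\phi_n'(y)$ whenever $f_n(E,\cdot)$ has no zero on $|z|=\mathrm{e}^{-y}$. (iii) The distributional second derivative of $\phi_n$ is the atomic measure $\phi_n''=\tfrac1n\sum_j\delta_{-\log|z_j|}$, the sum over the zeros $z_j$ of $f_n(E,\cdot)$ repeated with multiplicity; so Jensen's formula for an annulus gives $\tfrac1n N_n(\mathrm{e}^{-y''},\mathrm{e}^{-y'})=\phi_n''\bigl((y',y'')\bigr)$ for $y'<y''$.

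Next I would let $n\to\infty$. By the generalized Thouless formula proved earlier — concretely, $\frac1{2\pi}\int_0^{2\pi}\tfrac1n\log|f_n(x+\mathrm{i}y,E)|\,\mathrm{d}x\to L(E,\mathrm{i}y)$ for each $y$ in the strip — we have $\phi_n\to\phi_\infty:=L(E,\mathrm{i}\cdot)$ pointwise. Since $\phi_n,\phi_\infty$ are all convex, this convergence is uniform on compact subintervals of the strip and $\phi_n''\to\phi_\infty''$ in the weak-$\ast$ sense there; moreover, by Avila's global theory $\phi_\infty$ is piecewise linear with discrete vertex set and one-sided slopes $\omega^{\pm}(E,\mathrm{i}\cdot)$. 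Fix $\epsilon>0$ small enough that $[y_1-\epsilon,y_2+\epsilon]$ lies in the strip and, by continuity of $L$ and $L>0$ on $[y_1,y_2]$, that $L>0$ on $[y_1-\epsilon,y_2+\epsilon]$; shrinking $\epsilon$ if necessary we may assume neither $y_1-\epsilon$ nor $y_2+\epsilon$ is a vertex of $\phi_\infty$, so $\phi_\infty''$ puts no mass at these points. Applying weak-$\ast$ convergence on the open interval $(y_1-\epsilon,y_2+\epsilon)$ and using (iii),
\[
\lim_{n\to\infty}\tfrac1n N_n\bigl(\mathrm{e}^{-(y_2+\epsilon)},\mathrm{e}^{-(y_1-\epsilon)}\bigr)=\phi_\infty'(y_2+\epsilon)-\phi_\infty'(y_1-\epsilon)=\omega\bigl(E,\mathrm{i}(y_2+\epsilon)\bigr)-\omega\bigl(E,\mathrm{i}(y_1-\epsilon)\bigr).
\]
Since $\phi_\infty$ is linear on a one-sided neighbourhood of $y_2$ and of $y_1$, the right-hand side equals $\omega^{+}(E,\mathrm{i}y_2)-\omega^{-}(E,\mathrm{i}y_1)$ for all sufficiently small such $\epsilon$; and since $N_n(\cdot,\cdot)$ is monotone in its two radii, an arbitrary small $\epsilon$ can be squeezed between two vertex-free ones, so the limit exists and equals this same constant for \emph{every} small $\epsilon>0$. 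Letting $\epsilon\to0^{+}$ and then using Theorem \ref{windacc} (applicable because $L>0$ at $y_1$ and at $y_2$) to replace $\omega^{\pm}$ by $\nu^{\pm}$ yields \eqref{z}; the case $y_1=y_2=y$ is the ``especially'' identity, completing \ref{item:zeros1}.

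For \ref{item:zeros2}, apply \ref{item:zeros1} with $y_1=y_2=0$ — legitimate, as a real-analytic $v$ extends to $C^{\omega}(\mathbb{T}_h,\mathbb{C})$ for some $h>0$ and $L(E,\mathrm{i}0)>0$ is assumed — to obtain $\lim_{\epsilon\to0^{+}}\lim_{n\to\infty}\tfrac1n N_n(\mathrm{e}^{-\epsilon},\mathrm{e}^{\epsilon})=\nu^{+}(E,\mathrm{i}0)-\nu^{-}(E,\mathrm{i}0)$. It then remains to see that $\nu^{-}(E,\mathrm{i}0)=-\nu^{+}(E,\mathrm{i}0)$. When $v$ and $E$ are real, $S_{E,v}$ takes real values on $\mathbb{T}$, so by Schwarz reflection $\overline{M_n(\bar z)}=M_n(z)$ on the strip, whence $L(E,\mathrm{i}y)=L(E,-\mathrm{i}y)$: the map $y\mapsto L(E,\mathrm{i}y)$ is even, so its one-sided slopes at $0$ obey $\omega^{-}(E,\mathrm{i}0)=-\omega^{+}(E,\mathrm{i}0)$, and Theorem \ref{windacc} upgrades this to $\nu^{-}(E,\mathrm{i}0)=-\nu^{+}(E,\mathrm{i}0)$. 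Therefore the limit is $2\nu^{+}(E,\mathrm{i}0)$, as claimed.

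The main obstacle, I expect, is step (ii)--(iii): establishing the identity $\nu_n=\phi_n'$ and the zero-counting formula while sidestepping the fact that $\nu_n(E,\mathrm{i}y)$ is literally undefined when $f_n(E,\cdot)$ has a zero on $|z|=\mathrm{e}^{-y}$ --- which is precisely why the limiting argument is carried out through the convex functions $\phi_n$, whose second-derivative measures are insensitive to such zeros --- together with justifying the interchange of $\lim_{n\to\infty}$ with $\lim_{\epsilon\to0^{+}}$, which is handled via monotonicity of $N_n$ in the radii and discreteness of the vertex set of $L(E,\mathrm{i}\cdot)$.
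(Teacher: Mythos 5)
Your proof is correct, and it reaches the conclusion by a somewhat different route than the paper. The paper's own argument is very short: it introduces $\widehat{N}_{n}(r_1,r_2)$ (zeros in the open annulus plus half the boundary zeros), notes via \eqref{count} that $\frac1n\widehat{N}_{n}(\mathrm{e}^{-y_2},\mathrm{e}^{-y_1})=\nu_n(E,\mathrm{i}y_2)-\nu_n(E,\mathrm{i}y_1)$, sandwiches $N_n$ between two $\widehat{N}_n$'s at slightly shifted radii, and then quotes Theorem \ref{windacc} — whose proof already establishes that $\lim_{n}\nu_n(E,\mathrm{i}(y\pm\epsilon))$ exists and identifies the iterated limits with $\nu^{\pm}=\omega^{\pm}$. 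You instead re-derive the $n\to\infty$ step from scratch: you encode the argument-principle identity as $\phi_n'=\nu_n$ and $\phi_n''=\frac1n\sum_j\delta_{-\log|z_j|}$, invoke Theorem \ref{converge} to get pointwise (hence, by convexity, locally uniform) convergence $\phi_n\to L(E,\mathrm{i}\cdot)$, and pass to the limit through weak-$\ast$ convergence of the second-derivative measures, using the discreteness of the vertex set of $L(E,\mathrm{i}\cdot)$ and monotonicity of $N_n$ in the radii to control the endpoints and the interchange of limits; Theorem \ref{windacc} is then used only for the final relabeling $\omega^{\pm}\mapsto\nu^{\pm}$. The two proofs rest on the same two pillars (Jensen/argument principle on the annulus, and Theorems \ref{converge}/\ref{windacc}); yours is longer but more self-contained and makes the existence of the inner limit $\lim_n\frac1nN_n$ for every small $\epsilon$ explicit, while the paper's is more economical because \eqref{count} and Theorem \ref{windacc} already package exactly the limits needed. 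For part \ref{item:zeros2} you also spell out the evenness $L(E,\mathrm{i}y)=L(E,-\mathrm{i}y)$ via Schwarz reflection to get $\nu^{-}(E,\mathrm{i}0)=-\nu^{+}(E,\mathrm{i}0)$, which the paper leaves implicit (it alludes to the symmetry of the zeros only in a remark); this is a worthwhile addition and is exactly the same symmetry the paper itself uses later in the proof of Corollary \ref{relation}.
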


Indeed, simple calculation shows that 
\begin{equation}\label{count}
	\begin{split}
		\nu_{n}(E,\mathrm{i}y_{2})-\nu_{n}(E,\mathrm{i}y_{1})&=\frac{1}{2\pi\mathrm{i}n}\oint_{|z|=\mathrm{e}^{-y_{1}}}\frac{f_{n}'(z)}{f_{n}(z)} \mathrm{d}z-\frac{1}{2\pi\mathrm{i}n}\oint_{|z|=\mathrm{e}^{-y_{2}}}\frac{f_{n}'(z)}{f_{n}(z)} \mathrm{d}z\\
		& =\frac{1}{n}N_{n}(\mathrm{e}^{-y_{2}},\mathrm{e}^{-y_{1}}) + \frac{1}{2n}\#\{z\in\mathcal{Z}_{f_{n}}: z\in\partial \mathcal{A}(\mathrm{e}^{-y_{2}},\mathrm{e}^{-y_{1}})\}.
	\end{split}
\end{equation}
Then Corollary \ref{zeros} follows immediately from \eqref{count} and Theorem \ref{windacc}.

Note, by Avila's global theory \cite{Av0}, $L(E,\mathrm{i}y)$ is convex, piecewise linear in $y$, and we denote by $\{\gamma_{j}\}$ the set of turning points.  Corollary \ref{zeros} implies that these so-called ``certain circles'' are concentric circles and the radii of the circles are just $\{\mathrm{e}^{-\gamma_{j}}\}$, see the diagram in Figure \ref{circletikz}. Moreover, if $v\in C^{\omega}(\mathbb{T},\mathbb{R})$, then by the symmetry of zeros the number of the concentric circles is odd when $E\in\Sigma_{v,y}$ and it is even when $E\in\mathbb{R}\backslash\Sigma_{v,y}$. 

\begin{figure}[htbp]
	\centering
	\begin{tikzpicture} 
		\draw (0,0) circle (2cm);
		\draw[mark=ball,mark repeat=1.5,smooth,domain = 0: 2*pi] plot ({1.5*cos(\x r)},{1.5*sin(\x r)});
		\draw[mark=ball,mark repeat=2,smooth,domain = 0: 2*pi] plot ({1*cos(\x r)},{1*sin(\x r)});
		
		\draw[densely dotted, black] (0,0) node[below] {}-- (0,2) node[below left] {$1$};
		\draw[densely dotted, black] (0,0) node[below] {}-- (1.0608,1.0608) node[left] {$\mathrm{e}^{-\gamma_{1}}$};
		\draw[densely dotted, black] (0,0) node[below] {}-- (1,0) node[below left] {$\mathrm{e}^{-\gamma_{2}}$};
		\fill[red] (0,0) ellipse (1pt and 1pt);
	\end{tikzpicture}
	\qquad \qquad
	\begin{tikzpicture}
		\draw[->] (0,0) -- (3,0) node[below] {$y$};
		\draw[->] (0,0) -- (0,3.5) node[right] {$L(E,\mathrm{i}y)$};
		\draw [black, thick] (0,1) node {}-- (1,1) node [above] {};
		\draw [black, thick] (1,1) node {}-- (1.75,1.75) node [above] {};
		\draw [black, thick] (1.75,1.75) node {}-- (2.25,2.75) node [above] {};
		\fill[black] (1,1) circle (0.4pt);
		\fill[black] (1.75,1.75) circle (0.4pt);
		\draw[densely dotted, black] (1,0) node[below] {$\gamma_{1}$}-- (1,1) node {};
		\draw[densely dotted, black] (1.75,0) node[below] {$\gamma_{2}$}-- (1.75,1.75) node {};
		\draw[] (0,0) node[below] {$0$}-- (0,0) node {};
	\end{tikzpicture}
	\caption{Location of the circles.}
	\label{circletikz}
\end{figure}
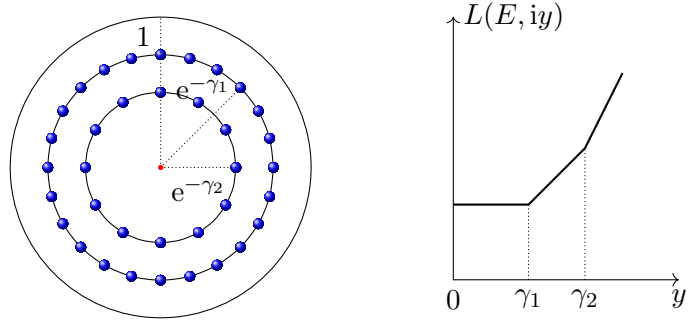

\begin{remark}
	A prior to us, by  developing potential theory on annulus, 
	Han-Schlag \cite{HS22}  gave a quantitative estimates of  \eqref{z} assuming $\alpha \in \mathrm{SDC}\footnote{If there exists 
		$\kappa>0$ and $\tau>1$, such that 
		\begin{equation*}
			\|\mathbf{k}\alpha\|_{\mathbb{T}}\geq \frac{\kappa}{|\mathbf{k}|(\log |\mathbf{k}|)^{\tau}},  \  \text{for all}  \  \mathbf{k}\neq 0.
	\end{equation*}}:$ 
	\begin{equation*}
		\bigg|\frac{1}{n}N_{n}(\mathrm{e}^{-(y_{2}+\epsilon)},\mathrm{e}^{-(y_{1}-\epsilon)})- \nu^{+}(E,\mathrm{i}y_{2})+ \nu^{-}(E,\mathrm{i}y_{1})  \bigg|\leq  C\epsilon^{-2}n^{-c}.
	\end{equation*}
\end{remark}

\subsection{Density of states and Thouless formula} 
Let us consider the second physical quantity: the density of states, which describes the limit distribution of the energy spectrum. For any $x\in\mathbb{T}^{d}$, the determinant $f_{n}(E,x+\mathrm{i}y)$ is a polynomial in $E$ of degree $n$. Its roots $E_{j}(x+\mathrm{i}y), j=1,\cdots,n$ are exactly the eigenvalues of the  truncated operator $H_{n}(x+\mathrm{i}y)$. Denote the normalized eigenvalues counting measure of $H_{n}(x+\mathrm{i}y)$ by
\begin{equation*}
	\mathrm{d}\mathcal{N}_{n}^{x+\mathrm{i}y}=\frac{1}{n} \sum_{j=1}^{n} \delta_{E_{j}(x+\mathrm{i}y)}.
\end{equation*}
Then the density of states measure is defined as the weak limit of $\mathrm{d}\mathcal{N}_{n}^{x+\mathrm{i}y}$ as $n\rightarrow \infty$, which exists and is same for a full Lebesgue measure set of  $x\in\mathbb{T}^{d}$, we denote it by $\mathrm{d}\mathcal{N}^{\mathrm{i}y}$. In the self-adjoint case ($v$ is real-valued and $y=0$), we denote it as $\mathrm{d}\mathcal{N}$.

In the early 1970s, for the one-dimensional Anderson model, Thouless \cite{Thou72}  observed that (with a physical level proof) 
\begin{equation*}
	L(E)= \int_{\mathbb{C}} \log |\tilde{E}-E|\mathrm{d}\mathcal{N}(\tilde{E}),
\end{equation*}
which is known as Thouless formula in the literature. This formula 
relates a mathematical physics object, the density of states, and a dynamical systems object, the Lyapunov exponent of the Schr\"odinger cocycle, via Hilbert transform.  
In this paper, we extend the Thouless formula for the non-self-adjoint quasi-periodic Schr\"odinger operators. To state the result, we denote by $\mathrm{d}m$ the  Lebesgue measure in complex plane $\mathbb{C}$  and 
\begin{equation*}
	\mathcal{Z}_{L,y}=\{E\in\mathbb{C}: L(E,\mathrm{i}y)=0\}.
\end{equation*}
Let  $\mathbb{T}^{d}_{h} = \mathbb{T}_{h}\times \cdots \times \mathbb{T}_{h}$ be the product torus with analytic extension. Then we have the following:
\begin{theorem}\label{open}
	Let $v\in C^{\omega}(\mathbb{T}_{h}^{d},\mathbb{C})$,  $\alpha\in\mathbb{T}^{d}$ with $(1,\alpha)$ rationally independent. Suppose that  $m(\mathcal{Z}_{L,y})=0$. Then 
	\begin{enumerate}[font=\normalfont, label={(\arabic*)}]
		\item \label{open1} (Weak convergence) For a.e. $x\in \mathbb{T}^{d}$, 
		\begin{equation*}
			\mathrm{d}\mathcal{N}^{x+\mathrm{i}y}_{n}\rightharpoonup \mathrm{d}\mathcal{N}^{\mathrm{i}y} :=\frac{1}{2\pi} \Delta L(E,\mathrm{i}y) \mathrm{d}m ,\quad \text{as} \ n\rightarrow \infty,
		\end{equation*}
		where $\Delta$ is the distributional Laplacian with respect to $E$.
		\item\label{open2} (Thouless formula) For every $E\in \mathbb{C}$,
		\begin{equation}\label{thou}
			L(E,\mathrm{i}y)= \int_{\mathbb{C}} \log |\tilde{E}-E|\mathrm{d}\mathcal{N}^{\mathrm{i}y}(\tilde{E}).
		\end{equation}
	\end{enumerate}	
\end{theorem}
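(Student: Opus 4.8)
\textbf{Proof proposal for Theorem \ref{open}.}

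The plan is to realize both statements as consequences of a single convergence at the level of subharmonic functions, namely that the logarithmic potentials of $\mathrm{d}\mathcal{N}^{x+\mathrm{i}y}_n$ converge to $L(E,\mathrm{i}y)$. Concretely, set $u_n(E) := \frac{1}{n}\log|f_n(E,x+\mathrm{i}y)| = \int_{\mathbb{C}} \log|\tilde E - E|\,\mathrm{d}\mathcal{N}^{x+\mathrm{i}y}_n(\tilde E)$, using that $f_n$ is monic of degree $n$ in $E$ with roots $E_j(x+\mathrm{i}y)$. The first step is to show that for a.e.\ $x$ one has $u_n(E) \to L(E,\mathrm{i}y)$ for every $E$, together with local uniform $L^1_{\mathrm{loc}}(\mathrm{d}m)$ control. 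The pointwise limit is the standard identification of $\frac1n\log\|M_n\|$ (equivalently $\frac1n\log|f_n|$, since $f_n$ is the $(1,1)$-entry of $M_n$ up to the transfer-matrix normalization) with the Lyapunov exponent: by the uniform upper bound $\frac1n\log|f_n|\le C$ coming from analyticity of $v$ on $\mathbb{T}^d_h$ and compactness, plus the Kingman/Fürstenberg--Kesten pointwise ergodic statement for $\frac1n\log\|M_n\|$, one gets $u_n\to L$ a.e.\ in $E$ for a.e.\ $x$; then the uniform upper bound upgrades a.e.\ convergence to $L^1_{\mathrm{loc}}$ convergence of the subharmonic functions $u_n$ by a standard Vitali/dominated-convergence argument for subharmonic sequences with a common majorant.

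Given that, the weak convergence in \ref{open1} is immediate: $\mathrm{d}\mathcal{N}^{x+\mathrm{i}y}_n = \frac{1}{2\pi}\Delta u_n$ in the sense of distributions (Riesz representation of the subharmonic function $u_n$, whose Riesz mass is exactly the counting measure of the zeros), and $L^1_{\mathrm{loc}}$ convergence $u_n \to L(\cdot,\mathrm{i}y)$ forces $\Delta u_n \to \Delta L(\cdot,\mathrm{i}y)$ in the sense of distributions; testing against a compactly supported continuous $\varphi$ and noting the total masses are all $1$ gives genuine weak-$*$ convergence of probability measures. This also shows $\mathrm{d}\mathcal{N}^{\mathrm{i}y}$ is well-defined, $x$-independent, and equals $\frac{1}{2\pi}\Delta L(E,\mathrm{i}y)$, and that it is a probability measure — here is where the hypothesis $m(\mathcal{Z}_{L,y})=0$ enters, guaranteeing that no mass escapes to the region $\{L=0\}$ in a pathological way and that the limiting measure is supported where it should be; I would use it to rule out loss of mass at infinity and to ensure $\int \mathrm{d}\mathcal{N}^{\mathrm{i}y}=1$ rather than $\le 1$.

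For the Thouless formula \ref{open2}, I would pass to the limit in the identity $u_n(E) = \int_{\mathbb{C}}\log|\tilde E - E|\,\mathrm{d}\mathcal{N}^{x+\mathrm{i}y}_n(\tilde E)$. The left side converges to $L(E,\mathrm{i}y)$ for every fixed $E$. On the right side, weak convergence of the measures would immediately give the claim if $\log|\tilde E - E|$ were bounded and continuous, which it is not: it has a $-\infty$ singularity at $\tilde E = E$ and grows at infinity. The growth at infinity is handled by the a priori bound on the supports (all $E_j(x+\mathrm{i}y)$ lie in a fixed disk, since $\|H_n(x+\mathrm{i}y)\|$ is bounded uniformly in $n$ by $\|v\|_\infty$-type estimates on the strip). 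The singularity is the genuinely delicate point: one needs uniform integrability of $\log|\tilde E - E|$ against the family $\mathrm{d}\mathcal{N}^{x+\mathrm{i}y}_n$ near $\tilde E = E$, i.e.\ that the zeros $E_j$ do not cluster too fast at $E$. I expect this to be the main obstacle, and I would resolve it exactly as in the classical Thouless-formula proof (Craig--Simon, Avron--Simon): split $\log|\tilde E-E| = \log_+ - \log_-$; for $\log_-$ use that a uniformly bounded sequence of subharmonic functions with an $L^1_{\mathrm{loc}}$ limit cannot lose potential mass — concretely $\int \log_-|\tilde E - E|\,\mathrm{d}\mathcal{N}^{x+\mathrm{i}y}_n(\tilde E)$ is controlled because $\liminf$ of the potentials is $\ge$ the potential of the limit, i.e.\ there is no upward jump, and the hypothesis $m(\mathcal{Z}_{L,y})=0$ prevents a cancelling downward defect. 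This gives $\int\log|\tilde E-E|\,\mathrm{d}\mathcal{N}^{\mathrm{i}y}(\tilde E) \ge \limsup u_n(E) = L(E,\mathrm{i}y)$ and $\le \liminf u_n(E) = L(E,\mathrm{i}y)$, which is \eqref{thou}. I would then remark that \eqref{thou} in fact holds for \emph{every} $E\in\mathbb{C}$, not just a.e., because both sides are subharmonic in $E$ and agree a.e.
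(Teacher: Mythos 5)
Your overall route for both parts --- a.e.\ convergence of the logarithmic potentials $u_n(E)=\frac1n\log|f_n(E,x+\mathrm{i}y)|$ to $L(E,\mathrm{i}y)$, then a Widom-type argument for weak convergence of $\frac{1}{2\pi}\Delta u_n$, then a lower-envelope/principle-of-descent argument to pass to the limit in the potential identity, and finally an upgrade from a.e.\ $E$ to every $E$ by subharmonicity --- is exactly the paper's (Lemma \ref{widom}, Lemma \ref{schwartz}, and weak uniqueness of subharmonic functions). However, there is a genuine gap at the foundational step. You assert that $\frac1n\log|f_n|\to L$ is the ``standard identification'' of $\frac1n\log\|M_n\|$ with the Lyapunov exponent via Kingman, ``since $f_n$ is the $(1,1)$-entry of $M_n$.'' This inference is false: an entry of a matrix can be exponentially smaller than its norm, and Kingman only controls $\frac1n\log\|M_n\|$. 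The passage from the norm to the determinant $f_n$ is precisely Theorem \ref{converge}, the paper's main technical result, whose proof occupies all of Section 3 (an $L^p$ bound and a large deviation theorem for $\log|f_n|$ valid for arbitrary rationally independent $\alpha$). Without invoking that theorem (or reproving it), your first step does not stand. Moreover, even granting Theorem \ref{converge}, it gives convergence for each fixed $E$ with $L(E,\mathrm{i}y)>0$ for a.e.\ $x$; converting this into ``for a.e.\ $x$, convergence for a.e.\ $E$'' (which is what Widom's lemma needs) requires the Fubini exchange the paper performs explicitly, and your stronger claim of convergence ``for every $E$'' at a.e.\ $x$ is not available.

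Relatedly, you misplace the role of the hypothesis $m(\mathcal{Z}_{L,y})=0$. It has nothing to do with loss of mass at infinity or normalization --- the measures $\mathrm{d}\mathcal{N}^{x+\mathrm{i}y}_n$ are probability measures with uniformly compact support, as you yourself note. Its actual role is that Theorem \ref{converge} says nothing at energies where $L(E,\mathrm{i}y)=0$, so the hypothesis is needed to make this exceptional energy set Lebesgue-null, which is what lets the Fubini argument produce a.e.-in-$E$ convergence of the potentials. The remainder of your argument (the Riesz/Laplacian identification for part \ref{open1}, the splitting of the logarithmic kernel and the descent/envelope argument for part \ref{open2}, and the final subharmonic identity principle) is sound and coincides with the paper's use of Lemmas \ref{widom} and \ref{schwartz}.
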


\begin{remark}
	The two-dimensional spectrum, i.e. $m(\mathcal{Z}_{L,y})>0$,  does exist for some  non-self-adjoint quasi-periodic Jacobi operators and  the Thouless formula \eqref{thou} does not hold in this case. See our forthcoming paper \cite{WWYZ2}.
\end{remark}

Let us first explain motivations for  establishing the Thouless formula for non-self-adjoint operators. 

First, in non-Hermitian physics, Thouless formula is important, since it provides an approximated expression of localization length (inverse of Lyapunov exponent) at complex spectrum \cite{longhi, Cai21}, i.e., for sufficiently large $n$
\begin{equation*}
	L(E,\mathrm{i}y)\approx \frac{1}{n}\sum_{j=1}^{n} \log |E_{j}(x+\mathrm{i}y)-E|,
\end{equation*}
where $E_{j}(x+\mathrm{i}y)$ are eigenvalues of $H_{n}(x+\mathrm{i}y)$.

The  second motivation comes from the recent study of  \textit{quantitative} global theory of one-frequency Schr\"odinger operators \cite{GJYZ}, i.e. studying \eqref{compentend} by studying its dual operator 
\begin{equation*}
	(\mathcal{L}u)(n)=\sum\limits_{k=-d}^{d} \mathrm{e}^{-k y}\hat{v}_k u_{n+k}+2\cos(\theta+n\alpha)u_n, \ \ n\in\Z.
\end{equation*}
The core analysis is reduced to establish the Thouless-type formula for \eqref{compentend} and dual operator simultaneously. In the self-adjoint case (i.e. $y=0$), it was studied by Haro-Puig \cite{HP13}, however their method can't be generalized to the non-self-adjoint setting due to lack of spectral theorem. 

Now let us back to the progress of Thouless formula. In the one-dimensional case, the first rigorous proof of the Thouless formula was given by Avron-Simon \cite{AS83}. Later Craig-Simon \cite{CS83b} provided a simple proof by using subharmonicity of the Lyapunov exponent.  In the multi-dimensional case, Craig-Simon \cite{CS83a} proved the Thouless formula for the Schr\"odinger operators with bounded ergodic stationary potential on a strip. In \cite{KS88}, Kotani-Simon proved the Thouless formula for the general bounded measurable ergodic Schr\"odinger operators on a strip. Finally, a more general version of the Thouless formula for the self-adjoint block Jacobi matrices with dynamically defined entries was obtained by Chapman-Stolz \cite{CS15}.  Thouless formula automatically implies the density of states measure is log-H\"older \cite{CS83b} continuous, which turns out to be optimal for self-adjoint quasi-periodic Schr\"odinger operator \cite{ALSZ}. 

All the above results focus on the self-adjoint Schr\"odinger operators, i.e., the potential is real. If the potential of the Schr\"odinger operator is complex, as far as we know, the only existed result was due to Goldsheid-Khoruzhenko \cite{GK05}, where they assumed the potential is a complex i.i.d. random sequence. 

\subsection{Relationship, methods of the proof}

Theorem \ref{windacc} and Theorem \ref{open} describe the distribution of zeros of $\det (H_{n}(z)-E)$ with respect to the phase $z$ and the energy $E$.  These two kinds of distribution are  nicely related in the following formula:

\begin{corollary}\label{relation}
	Let $(\alpha,v)\in\mathbb{R}\backslash\mathbb{Q} \times C^{\omega}(\mathbb{T}_{h},\mathbb{R})$. Suppose $|y|<h$ and $L(E,\mathrm{i}0)> 0$. Then
	\begin{equation}\label{accden}
		\int_{0}^{y} \nu^{+}(E,\mathrm{i}\tilde{y}) \mathrm{d}\tilde{y}+L(E,\mathrm{i}0)= \int_{\mathbb{C}} \log |\tilde{E}-E|\mathrm{d}\mathcal{N}^{\mathrm{i}y}(\tilde{E}).
	\end{equation}
\end{corollary}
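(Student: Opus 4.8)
The plan is to read both sides of \eqref{accden} through the two main theorems and the piecewise-linear convex structure of $\tilde{y}\mapsto L(E,\mathrm{i}\tilde{y})$ furnished by Avila's global theory. First I would dispose of the right-hand side. In the present one-frequency Schr\"odinger setting with $v\in C^{\omega}(\mathbb{T}_{h},\mathbb{R})$ the hypothesis $m(\mathcal{Z}_{L,y})=0$ needed for Theorem~\ref{open} is in force: for $\tilde{y}=0$ any non-real $E$ lies off the spectrum of the self-adjoint operator $H_{v,0}$, so $(\alpha,S_{E,v})$ is uniformly hyperbolic there and $L(E,\mathrm{i}0)>0$, whence $\mathcal{Z}_{L,0}\subset\Sigma_{v,0}\subset\mathbb{R}$; the case $y\ne 0$ is handled in the same spirit via Proposition~\ref{equi}. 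Applying the Thouless formula \eqref{thou} of Theorem~\ref{open} then rewrites the right-hand side of \eqref{accden} as $L(E,\mathrm{i}y)$, so the corollary reduces to the identity
\[
\int_{0}^{y}\nu^{+}(E,\mathrm{i}\tilde{y})\,\mathrm{d}\tilde{y}=L(E,\mathrm{i}y)-L(E,\mathrm{i}0).
\]

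Next I would show that $L(E,\mathrm{i}\tilde{y})>0$ for every $\tilde{y}$ on the segment between $0$ and $y$; this is exactly what makes Theorem~\ref{windacc} available uniformly along the integration path. Since $v$ is real-valued and $E\in\mathbb{R}$, complex conjugation of the transfer matrices preserves norms and yields $L(E,\mathrm{i}\tilde{y})=L(E,-\mathrm{i}\tilde{y})$, so the convex function $g(\tilde{y}):=L(E,\mathrm{i}\tilde{y})$ is even, hence minimized at $\tilde{y}=0$; combined with $g(0)=L(E,\mathrm{i}0)>0$ this forces $g(\tilde{y})\ge g(0)>0$ throughout the segment. Theorem~\ref{windacc} then gives $\nu^{+}(E,\mathrm{i}\tilde{y})=\omega^{+}(E,\mathrm{i}\tilde{y})$ for all such $\tilde{y}$, and by definition $\omega^{+}(E,\mathrm{i}\tilde{y})$ is precisely the right derivative $g'_{+}(\tilde{y})$.

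The final step is calculus for a convex function. By Avila's global theory $g$ is convex and piecewise linear on the compact interval with endpoints $0$ and $y$, hence Lipschitz and absolutely continuous there, and its right derivative $g'_{+}=\omega^{+}(E,\mathrm{i}\,\cdot)$ agrees with $g'$ off the finite set of turning points; the fundamental theorem of calculus therefore gives $\int_{0}^{y}\omega^{+}(E,\mathrm{i}\tilde{y})\,\mathrm{d}\tilde{y}=g(y)-g(0)$ (the sign bookkeeping being the same whether $y>0$ or $y<0$). Chaining the three steps,
\[
\int_{0}^{y}\nu^{+}(E,\mathrm{i}\tilde{y})\,\mathrm{d}\tilde{y}+L(E,\mathrm{i}0)=g(y)=L(E,\mathrm{i}y)=\int_{\mathbb{C}}\log|\tilde{E}-E|\,\mathrm{d}\mathcal{N}^{\mathrm{i}y}(\tilde{E}),
\]
which is \eqref{accden}.

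The main obstacle is the positivity step: Theorem~\ref{windacc} is stated only under $L>0$, so one genuinely needs $L(E,\mathrm{i}\tilde{y})$ to stay positive between $0$ and $y$. The evenness-plus-convexity argument settles this cleanly for real $E$, but a genuinely complex base point would require separately ruling out — and controlling $\nu^{+}$ across — a subinterval on which $g$ vanishes while avoiding $0$. A secondary technical point is the verification of $m(\mathcal{Z}_{L,y})=0$ for $y\ne 0$ that legitimizes invoking Theorem~\ref{open}; the remaining ingredients (the conjugation symmetry and the fundamental theorem of calculus for convex functions) are routine.
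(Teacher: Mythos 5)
Your overall architecture coincides with the paper's: represent $L(E,\mathrm{i}y)-L(E,\mathrm{i}0)$ as $\int_0^y\omega^{+}(E,\mathrm{i}\tilde y)\,\mathrm{d}\tilde y$ via convexity and piecewise linearity, convert $\omega^{+}$ into $\nu^{+}$ with Theorem~\ref{windacc}, and identify the right-hand side with $L(E,\mathrm{i}y)$ via Theorem~\ref{open}\ref{open2}. However, there is a genuine gap that you yourself flag but do not close: the corollary is stated for an arbitrary $E\in\mathbb{C}$ with $L(E,\mathrm{i}0)>0$, not only for real $E$. Your evenness argument $L(E,\mathrm{i}\tilde y)=L(E,-\mathrm{i}\tilde y)$ relies on $\overline{S_{E,v}(x+\mathrm{i}\tilde y)}=S_{E,v}(x-\mathrm{i}\tilde y)$, which for real-analytic $v$ requires $E=\bar E$; for non-real $E$ one only gets $L(E,\mathrm{i}\tilde y)=L(\bar E,-\mathrm{i}\tilde y)$, which does not show that $\tilde y\mapsto L(E,\mathrm{i}\tilde y)$ is minimized at $0$. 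Consequently neither the positivity of $L(E,\mathrm{i}\tilde y)$ along the integration path (needed to invoke Theorem~\ref{windacc}) nor the hypothesis $m(\mathcal{Z}_{L,y})=0$ (needed to invoke Theorem~\ref{open}) is established for non-real $E$.

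The paper closes exactly this gap with Lemma~\ref{zeroset}: for $E\notin\mathbb{R}$ one has $\omega^{+}(E,\mathrm{i}0)=\omega^{-}(E,\mathrm{i}0)=0$, proved by combining uniform hyperbolicity of $(\alpha,S_{E,v})$ (Theorem~\ref{regular}, Theorem~\ref{cuh}) with Jensen's formula on the annulus (Proposition~\ref{jensen}), the vanishing of the winding number $\frac{1}{2\pi\mathrm{i}n}\oint_{|z|=1}f_n'/f_n\,\mathrm{d}z=0$ coming from the reality of the eigenvalues of the self-adjoint truncation $H_n(x)$, and Theorem~\ref{converge}. Once the one-sided derivatives at $\tilde y=0$ vanish, convexity gives $L(E,\mathrm{i}\tilde y)\ge L(E,\mathrm{i}0)>0$ for all $|\tilde y|<h$, which simultaneously yields $\mathcal{Z}_{L,y}\subset\mathbb{R}$ and the positivity along the path. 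Note also that your proposed shortcut for $m(\mathcal{Z}_{L,y})=0$ when $y\ne 0$ (``in the same spirit via Proposition~\ref{equi}'') does not work as stated: Proposition~\ref{equi} only gives $\mathcal{Z}_{L,y}\subset\Sigma_{v,y}$, and for $y\ne 0$ the operator $H_{v,y}$ is non-self-adjoint, so its spectrum need not be real nor of zero planar Lebesgue measure (the paper's remark after Theorem~\ref{open} points out that two-dimensional spectra do occur for related models). So the missing ingredient is precisely Lemma~\ref{zeroset}\ref{item:zeroacc}; with it supplied, the rest of your argument is correct.
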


One can view $L(E,\mathrm{i}y)$ as a function with two variables $E$ and $y$. The L.H.S. of \eqref{accden} is in fact the integral representation of $L(E,\mathrm{i}y)$ as  a   piecewise linear convex function in  $y$, while  the R.H.S. of \eqref{accden} is in fact  the Riesz representation of the $L(E,\mathrm{i}y)$ as  a subharmonic function in $E$.

The proof of Theorem \ref{windacc} and Theorem \ref{open} are reduced to the following theorem.
\begin{theorem}\label{converge}
	Let $v\in C^{\omega}(\mathbb{T}^{d}_{h},\mathbb{C})$, $\alpha\in\mathbb{T}^{d}$ with $(1,\alpha)$ rationally independent. Suppose that $L(E,\mathrm{i}y)>0$. Then for a.e. $x\in\mathbb{T}^{d}$,
	\begin{equation}\label{fiLE}
		\lim_{n\rightarrow\infty} \frac{1}{n}\log|f_{n}(E,x+\mathrm{i}y)|=\lim_{n\rightarrow\infty}\frac{1}{(2\pi)^{d}}\int_{\mathbb{T}^{d}}\frac{1}{n} \log|f_{n}(E,x+\mathrm{i}y)|\mathrm{d}x=L(E,\mathrm{i}y).
	\end{equation}
\end{theorem}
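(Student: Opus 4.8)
The plan is to connect $f_n(E,x+\mathrm{i}y)$ with the transfer matrix cocycle and then invoke subadditive ergodic theory together with the structure of Avila's global theory. The starting point is the standard identity expressing the truncated determinant in terms of the transfer matrix: writing $M_n(z) = S_{E,v}(z+n\alpha)\cdots S_{E,v}(z+\alpha)$, one has
\begin{equation*}
	M_n(z) = \begin{pmatrix} f_n(E,z) & -f_{n-1}(E,z+\alpha) \\ f_{n-1}(E,z) & -f_{n-2}(E,z+2\alpha) \end{pmatrix},
\end{equation*}
so $|f_n(E,z)| \le \|M_n(z)\|$ pointwise, and conversely all four entries of $M_n$ are shifted copies of $f_k$'s, which forces $\|M_n(z)\|$ to be comparable to $\max$ over a bounded window of $|f_k|$'s at shifted phases. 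First I would use this to get the easy inequality $\limsup_n \frac1n\log|f_n(E,x+\mathrm{i}y)| \le L(E,\mathrm{i}y)$ for a.e.\ $x$, via the upper bound $|f_n|\le\|M_n\|$, Fatou/Fubini, and the fact that $\frac1n\log\|M_n\|$ converges in $L^1$ to $L$ by Kingman's subadditive ergodic theorem (here $(1,\alpha)$ rationally independent gives unique ergodicity of the shift on $\mathbb{T}^d$, hence a.e.\ convergence of the Birkhoff-type averages).

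The substance is the matching lower bound $\liminf_n \frac1n\log|f_n(E,x+\mathrm{i}y)| \ge L(E,\mathrm{i}y)$ under the hypothesis $L(E,\mathrm{i}y)>0$. The key tool is the Avalanche Principle (Goldstein--Schlag): when the Lyapunov exponent is positive, a product of $\mathrm{SL}(2,\mathbb{C})$ matrices whose norms are large and whose consecutive products do not collapse has norm essentially equal to the product of the individual norms, up to controlled errors. Concretely, I would fix a large block length $\ell$, note that for a.e.\ $x$ (and, by uniform ergodicity, uniformly enough in $x$) $\frac1\ell\log\|M_\ell(x+\mathrm{i}y)\|$ is close to $L$, then apply the Avalanche Principle to the chain $M_n = M_\ell(\cdot+(k-1)\ell\alpha)\cdots M_\ell(\cdot)$ to conclude $\frac1n\log\|M_n(x+\mathrm{i}y)\| \to L$ pointwise a.e.\ — not merely in $L^1$. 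The remaining step is to transfer this from $\|M_n\|$ to $|f_n|$: since $f_n(E,z)$ is the $(1,1)$ entry of $M_n(z)$, it could a priori be much smaller than $\|M_n(z)\|$ if the large direction of $M_n$ avoids the first coordinate. To rule this out I would argue that the unstable/stable directions of $M_n(x+\mathrm{i}y)$ equidistribute (by unique ergodicity applied to the projective cocycle, or by a telescoping argument comparing $f_n$ at two nearby phases $x$ and $x+\alpha$), so that $|f_n(E,x+\mathrm{i}y)|/\|M_n(x+\mathrm{i}y)\|$ cannot decay exponentially for a.e.\ $x$; equivalently, one uses that at least one of the four entries $f_n(x), f_{n-1}(x), f_{n-1}(x+\alpha), f_{n-2}(x+2\alpha)$ has size $\asymp \|M_n\|$ and then invokes a Borel--Cantelli / measure-theoretic argument over the shift to push this back to $f_n(x)$ itself for a.e.\ $x$.

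Finally, the middle equality in \eqref{fiLE} — that the $L^1$-average of $\frac1n\log|f_n|$ also converges to $L$ — follows by combining the a.e.\ convergence just established with a uniform-integrability (or uniform upper bound plus Vitali) argument: the functions $\frac1n\log|f_n(E,\cdot+\mathrm{i}y)|$ are dominated above by $\frac1n\log\|M_n\|$ which is uniformly bounded, and their negative parts are controlled using subharmonicity of $\frac1n\log|f_n(E,\cdot)|$ in $z$ together with the mean value property, preventing mass from escaping to $-\infty$; this is exactly the Craig--Simon-style argument, adapted to the complex setting by working with $z = x+\mathrm{i}y$ on the fixed horizontal line. The main obstacle I anticipate is the entry-versus-norm comparison in the lower bound: controlling $|f_n|$ from below rather than just $\|M_n\|$ requires ruling out a conspiracy in which the expanding direction is persistently orthogonal to $e_1$, and making this rigorous for a.e.\ $x$ (as opposed to merely generic $x$) is where the unique ergodicity of $\alpha$ and a careful use of the non-vanishing of the Lyapunov exponent must be deployed.
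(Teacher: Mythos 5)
Your upper bound ($|f_n|\le\|M_n\|$ plus Kingman for the uniquely ergodic shift) is fine, and you correctly identify the two real difficulties: bounding the $(1,1)$ entry from \emph{below} by the norm, and controlling the negative part of $\log|f_n|$ for the $L^1$ statement. But at both of these points the proposal stops at the level of a wish rather than an argument, and these are exactly where the content of the theorem lies. For the entry-versus-norm step, knowing that \emph{some} entry of $M_n(x)$ is comparable to $\|M_n(x)\|$ for each $x$ gives no summable measure estimate on the set where $f_n(x)$ itself is small, so there is no Borel--Cantelli argument to run; and ``equidistribution of the unstable directions'' is not a mechanism that rules out $u_n^-(x)$ spending a full-measure set of phases near $\vec e_1$. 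The paper resolves this with a quantitative large deviation theorem for $\log|f_n|$ itself (Theorem \ref{LDTf}\ref{item:ldt}), whose lower-bound half rests on the three-determinants lemma (Lemma \ref{three} and Lemma \ref{wedgebug}): if $f_n$ is small at three well-separated phases along the orbit, the stable/unstable directions are forced to align with $\vec e_1$ or $\vec e_2$, which makes a \emph{short} determinant $f_{j_2-j_1-1}$ exponentially small on a set of definite measure, contradicting a {\L}ojasiewicz-type estimate. That estimate in turn uses $\det M_n=1$ (via Lemma \ref{closeone}: if $f_{l-1}(x)f_{l-1}(x+\alpha)\approx 1$ uniformly then $\|M_l\|$ is bounded, contradicting $L>0$), together with Cartan's estimate, the BMO splitting lemma and John--Nirenberg to convert the average bounds into an exponentially small measure bound --- only then does Borel--Cantelli apply.

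The same gap affects your $L^1$ step: subharmonicity and the mean value property alone do not prevent $\langle\log|f_n|\rangle$ from being very negative (the zeros of $f_n$ could cluster near the real torus); the paper's control of the negative part is precisely Theorem \ref{LDTf}\ref{lp}, proved via the {\L}ojasiewicz estimate \eqref{cartan}, not via a soft Craig--Simon argument. Finally, note that the Avalanche Principle route you propose for pointwise convergence of $\frac1n\log\|M_n\|$ needs large deviation estimates at scale $\ell$ which, for arbitrary rationally independent $\alpha$ (no Diophantine condition), are only available in the weak form of Bourgain's Theorem \ref{LDT}; the paper deliberately avoids the Avalanche Principle and instead works with the submean value property of separately subharmonic functions (Proposition \ref{upper}) to get the uniform upper bound. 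So the proposal is not a complete proof: the two central lemmas it would need (the three-determinants alternative and the {\L}ojasiewicz estimate from $\det M_n=1$) are missing, and no substitute for them is supplied.
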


Before explaining more on Theorem \ref{converge}, let us first comment on the main ideas of how to use Theorem \ref{converge} to prove Theorem \ref{windacc} and Theorem \ref{open}.

Concerning the winding number, one of our main motivations comes from \cite{GJYZ}, where they proved that besides convexity and piecewise linearity, $L(E,\mathrm{i}y)$ satisfies Jensen's formula. Thus the zeros in Jensen's formula should be related to the acceleration. It is more transparent by the following well-known Jensen's formula \cite{SS} on the annulus: 
\begin{equation*}
	\begin{split}
		\frac{1}{2\pi}\int_{0}^{2\pi}\log|g(r_{2}\mathrm{e}^{\mathrm{i}x})|\mathrm{d}x&-\frac{1}{2\pi}\int_{0}^{2\pi}\log|g(r_{1}\mathrm{e}^{\mathrm{i}x})|\mathrm{d}x\\
		&= \sum_{r_{1}<|z_{i}|<r_{2}}\log\frac{r_{2}}{|z_{i}|}+\log\frac{r_{2}}{r_{1}}\frac{1}{2\pi\mathrm{i}}\oint_{|z|=r_{1}}\frac{g'(z)}{g(z)} \mathrm{d}z,
	\end{split}
\end{equation*}
where $g$ is holomorphic on $\mathcal{A}(r_{1},r_{2})$ and continuous on $\overline{\mathcal{A}(r_{1},r_{2})}$, and  $\{z_{i}\}$ are the zeros of $g$ such that $r_{1}<|z_{i}|\leq r_{2}$. Take $g=f_{n}(E,x+\mathrm{i}y)$,  then Theorem \ref{windacc} essentially follows from Theorem \ref{converge}. 

Turning to the density of states measure, let us first recall the key point for the proof of the Thouless formula in the self-adjoint case. In the self-adjoint case, Thouless formula follow from the fact that $\mathrm{d}\mathcal{N}_{n}$ converges weakly to a limiting measure $\mathrm{d}\mathcal{N}$ \cite{AS83, CS83a, CS83b}, which depends on functional calculus. However, in the non-self-adjoint case, this method fails due to the non-uniqueness of the spectral measure. To overcome this difficulty, we first prove \eqref{fiLE}, then prove the weak convergence of $\mathrm{d}\mathcal{N}_{n}$ from \eqref{fiLE}. In fact, once \eqref{fiLE} holds, then according to the subharmonicity of logarithmic potential, the convergence of $\mathrm{d}\mathcal{N}_{n}$ is ensured by Widom lemma from potential theory. To apply the Widom lemma, the only technical barrier comes from the convergence of $n^{-1}\log |f_{n}(E,x)|$ for all rationally independent $\alpha\in\mathbb{T}^{d}$ and for a.e. $x\in\mathbb{T}^{d}$ and a.e. $E\in\mathbb{C}$, which is established by using Theorem \ref{converge} and Fubini theorem.


Finally, let us make a comment on Theorem \ref{converge}.  Usually, one can only anticipate \begin{equation}\label{LEdef}
	L(E,\mathrm{i}y)=\lim_{n\rightarrow \infty}\frac{1}{2|n|} \log (|f_{n}(E,x+\mathrm{i}y)|^{2}+|f_{n-1}(E,x+\mathrm{i}y)|^{2}),\ \text{a.e. }x \in\mathbb{T}^{d}.	
\end{equation}
However, it is completely non-trivial to get \eqref{fiLE} from \eqref{LEdef}.  For $v\in C^{\omega}(\mathbb{T}^{d},\mathbb{C})$, Goldstein-Schlag \cite{GS08} proved that \eqref{fiLE} is true for the quasi-periodic shift $x\mapsto x+\alpha$ with $\alpha\in\mathrm{SDC}$. If $v$ is a trigonometric polynomial on $\mathbb{T}^{d}$, they also proved \eqref{fiLE} in \cite{GS08} for the skew shift $(x,y)\mapsto (x+y,y+\alpha)$ for a.e. $\alpha\in\mathbb{T}^{d-1}$. 
Geng-Tao \cite{GT} proved \eqref{fiLE} for Jacobi operators with Brjuno-R\"ussmann frequency. We emphasize that ``a.e.'' in Theorem \ref{converge} can not be improved to ``all'' by a counter-example of Jitomirskaya-Liu \cite{JL08}.  We also mention in the random case, Goldsheid-Khoruzhenko \cite{GK05} proved that if $\{v_{n}\}$ is a sequence of complex i.i.d. random sequence, then
\begin{equation*}
	\lim_{n\rightarrow \infty} \frac{1}{n}\log|f_{n}(E)|=L(E),\quad \text{with probability one}.
\end{equation*}
Concerning the proof of Theorem \ref{converge}, we will give more comments in Section 3.

\section{Preliminary}
Denote by $C^\omega(\mathbb{T}^d,*)$ the
set of all  $*$-valued analytic functions ($*$ will usually denote $\mathbb{C}$, 
$\mathrm{SL}(2,\mathbb{C})$ and etc.). 
For any $F\in C^\omega(\mathbb{T}^d,*)$, we let 
$\displaystyle \langle F\rangle:= \frac{1}{(2\pi)^{d}}\int_{\mathbb{T}^{d}} F(x) \mathrm{d}x$ and define the norm $\|F\|_h:=  \sup_{x\in\mathbb{T}^{d}_{h} } \| F(x)\|$, where $\|\cdot\|$ denotes the usual matrix norm. For any $\mathbf{k}\in\mathbb{Z}^{d}$, we define $|\mathbf{k}|=|\mathbf{k}_{1}|+\cdots+|\mathbf{k}_{d}|$.

\subsection{Quasi-periodic cocycle and uniform hyperbolicity}
	
Given $A \in C^\omega(\mathbb{T}^d, \mathrm{SL}(2, \mathbb{C}))$ and $(1, \alpha)$ rationally independent, one can define the quasi-periodic cocycle $(\alpha, A)$:
\begin{eqnarray*}
	(\alpha, A):\left\{\begin{array}{c}
		\mathbb{T}^d \times \mathbb{C}^2 \rightarrow \mathbb{T}^d \times \mathbb{C}^2, \\
		(x, v) \mapsto(x+\alpha, A(x) \cdot v).
	\end{array}\right.	
\end{eqnarray*}	
We say that $(\alpha,A)$ is uniformly hyperbolic if there exist continuous functions $u,s: \mathbb{T}^{d} \rightarrow \mathbb{PC}^{2}$, called the unstable and stable directions, and $n \geq 1$ such that
\begin{itemize}
	\item[(1)]  for every $x \in \mathbb{T}^{d}$ we have	$A(x)\cdot u(x)=u(x+\alpha)$ and $A(x) \cdot s(x)=s(x+\alpha),$
	\item[(2)] for every unit vector $w \in s(x)$ we have $\|A_n(x) \cdot w\|<1$,
	\item[(3)] for every unit vector $w \in u(x)$ we have $\|A_n(x) \cdot w\|>1$.
	\end{itemize}
It is clear that if $(\alpha,A)$ is uniformly hyperbolic then $L(\alpha,A)>0$. From now on, $(\alpha, A)\in \mathcal{UH}$ means $(\alpha,A)$ is uniformly hyperbolic.
	
\subsection{Schrödinger operators and Schrödinger cocycles}
 We consider the following complex-valued quasi-periodic Schrödinger operators:
\begin{equation*}
	(H(x)\psi)_n=\psi_{n+1}+\psi_{n-1}+v( x+n\alpha) \psi_n, \quad n \in \mathbb{Z}.
\end{equation*}
Given the initial data $(\psi_1,\psi_0)^{\rm T}$, the solution of the difference equation $H(x) \psi=E \psi$ for $n>0$ can be expressed in the form
\begin{equation*}
	\begin{pmatrix}
		\psi_{n+1} \\
		\psi_{n}
	\end{pmatrix}=M_n(E,x)
	\begin{pmatrix}
		\psi_1 \\
		\psi_0
	\end{pmatrix},	
\end{equation*}
where 
\begin{equation*}	
	M_n(E,x)=\prod_{k=n}^1 S_{E,v}(x+k\alpha) \ \text{with} \ S_{E,v}(x)=\begin{pmatrix}
		E-v(x) & -1 \\
		1 & 0
	\end{pmatrix}.
\end{equation*}
We call $(\alpha, S_{E,v})$ the Schrödinger cocycle. The matrix $M_n(x)$ is called the $n$-step transfer matrix. Denote
\begin{equation*}
	f_{n}(E,x):=\det(R_{[1, n]}(H(x)-E) R_{[1, n]}),
\end{equation*}
where $R_{\Lambda}$ denoting the restriction operator to $\Lambda \subset \mathbb{Z}$. Then the $n$-step transfer matrix can be written as	
\begin{equation*}
	M_{n}(E,x)=\begin{pmatrix}
			f_{n}(E,x)&-f_{n-1}(E,x+\alpha)\\
			f_{n-1}(E,x)&-f_{n-2}(E,x+\alpha)
		\end{pmatrix}.
\end{equation*}
The spectrum, denote by $\Sigma_{x}$, of  $H(x)$ is closely related with the dynamical behavior of the Schr\"odinger cocycle  $(\alpha,S_{E,v})$. In the self-adjoint case, i.e. the potential $v$ is real-valued, then by the well-known result of Johnson \cite{John86},  $E\notin \Sigma_{x}$ if and only if $(\alpha,S_{E,v})\in\mathcal{UH}$. The following result extends  Johnson's result \cite{John86} to the non-self-adjoint case.
	\begin{proposition}\label{equi} \cite{GJYZ}
		Suppose that   $v:\mathbb{T}^d\rightarrow \mathbb{C}$ a complex-valued continuous function, then there is some $\Sigma\subset \mathbb{C}$ such that $\Sigma_x=\Sigma$ for all $x\in\mathbb{T}^d$. Moreover,  $E\notin \Sigma$ if and only if $(\alpha,S_{E,v})\in \mathcal{UH}$.
	\end{proposition}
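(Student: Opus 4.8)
\emph{Overview.} The plan is to prove the two assertions separately --- the $x$-independence of $\Sigma_{x}$ by a soft minimality argument, then the dynamical characterization by its two implications, of which the hard direction is the non-self-adjoint analogue of Johnson's theorem \cite{John86} and will rest on a Combes--Thomas estimate together with a construction of stable and unstable bundles. For the constancy: $v$ is uniformly continuous on the compact torus, so $\sup_{n\in\Z}|v(x+n\alpha)-v(x'+n\alpha)|\to 0$ as $x'\to x$, and hence $x\mapsto H(x)$ is norm-continuous into the bounded operators on $\ell^{2}(\Z)$. Since the invertible operators form an open set and inversion is continuous, $U_{E}:=\{x:E\notin\Sigma_{x}\}$ is open for each fixed $E$, and it is invariant under $x\mapsto x+\alpha$ because $H(x+\alpha)$ is conjugate to $H(x)$ by the shift on $\ell^{2}(\Z)$. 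The rotation is minimal ($(1,\alpha)$ rationally independent), so the closed invariant set $\mathbb{T}^{d}\setminus U_{E}$ is empty or everything; thus $\Sigma_{x}$ does not depend on $x$, and we call it $\Sigma$.

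\emph{The implication $\mathcal{UH}\Rightarrow E\notin\Sigma$.} Uniform hyperbolicity gives, for each $x$, a solution $\psi^{+}_{x}$ of $H(x)\psi=E\psi$ decaying exponentially at $+\infty$ and a solution $\psi^{-}_{x}$ decaying exponentially at $-\infty$; they are linearly independent, for otherwise there would be an $\ell^{2}(\Z)$ eigenfunction and $E^{s}(x)=E^{u}(x)$. Assembling $\psi^{\pm}_{x}$ and their nonzero Wronskian into the Green's function $G_{x}(m,n)$, the exponential decay gives $|G_{x}(m,n)|\le Ce^{-c|m-n|}$, so $G_{x}$ is the kernel of a bounded operator on $\ell^{2}(\Z)$ inverting $H(x)-E$; hence $E\notin\Sigma$.

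\emph{The converse $E\notin\Sigma\Rightarrow\mathcal{UH}$.} By the above $H(x)-E$ is invertible for every $x$, and compactness of $\mathbb{T}^{d}$ together with norm-continuity of $x\mapsto(H(x)-E)^{-1}$ gives $M:=\sup_{x}\|(H(x)-E)^{-1}\|<\infty$. Apply a Combes--Thomas estimate: conjugating $H(x)$ by the unbounded weight $e^{\eta X}$, $(X\psi)_{n}=n\psi_{n}$, alters only the two off-diagonal bands, with $\|e^{\eta X}H(x)e^{-\eta X}-H(x)\|\le 2(e^{|\eta|}-1)$, so for $|\eta|$ below a threshold depending only on $M$ the operator $e^{\eta X}(H(x)-E)e^{-\eta X}$ stays invertible with inverse of norm $\le 2M$; reading off matrix elements yields
\begin{equation*}
	|G_{x}(m,n)|\le 2M\,e^{-c|m-n|},\qquad m,n\in\Z,\ \text{uniformly in }x\in\mathbb{T}^{d}.
\end{equation*}
Now, for fixed $m_{0}$, the column $n\mapsto G_{x}(n,m_{0})$ lies in $\ell^{2}(\Z)$ and solves the homogeneous equation off $n=m_{0}$, so extending its two halves by the transfer-matrix recursion yields solutions $\psi^{+}_{x}\in\ell^{2}(\Z_{\ge m_{0}})$ and $\psi^{-}_{x}\in\ell^{2}(\Z_{\le m_{0}})$ of $H(x)\psi=E\psi$ with Wronskian $G_{x}(m_{0},m_{0})$; define $s(x),u(x)\in\mathbb{PC}^{2}$ as the lines through the initial data of $\psi^{+}_{x}$ and of $\psi^{-}_{x}$. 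The translation covariance of $H$ makes $s,u$ into an $A$-invariant splitting, the uniform bound on $G_{x}$ converts the decay of $\psi^{\pm}_{x}$ into uniform exponential contraction/expansion of $M_{n}$ along $s/u$, and continuity of the bundles follows from norm-continuity of $x\mapsto(H(x)-E)^{-1}$ --- giving conditions (1)--(3) of $\mathcal{UH}$.

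\emph{Main obstacle.} The hard part is this last direction, upgrading ``$H(x)-E$ invertible for all $x$'' to a \emph{uniform} exponential dichotomy. Three points need care: (i) non-degeneracy, i.e.\ $G_{x}(m_{0},m_{0})\neq0$, so that $\psi^{\pm}_{x}$ are nonzero and transverse --- this is precisely where $E\notin\Sigma$ (absence of a whole-line $\ell^{2}$ solution) enters; (ii) dichotomy constants uniform in $x$, which need a uniform lower bound on the norm of the initial data of $\psi^{\pm}_{x}$, obtained from the unit jump in $(H(x)-E)G_{x}(\cdot,m_{0})=\delta_{m_{0}}$; and (iii) continuity of $s,u$ and the transversality $s(x)\neq u(x)$ for all $x$. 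In the self-adjoint case all of this is Johnson's theorem via the spectral theorem; the content of Proposition~\ref{equi} is that the Combes--Thomas input uses no self-adjointness, so the whole scheme carries over to complex-valued $v$.
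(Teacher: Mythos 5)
The paper does not prove this proposition; it is quoted verbatim from \cite{GJYZ}, so there is no in-paper argument to compare against. Your outline is nonetheless the standard route to the non-self-adjoint Johnson theorem: constancy of $\Sigma_x$ by norm-continuity of $x\mapsto H(x)$ plus minimality of the rotation, the direction $\mathcal{UH}\Rightarrow E\notin\Sigma$ by assembling the exponentially decaying solutions into a Green's function, and the converse by Combes--Thomas. The first two of these are complete and correct as written (the Combes--Thomas step in particular genuinely uses no self-adjointness, only a Neumann series, which is the whole point of the proposition).

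There is, however, a concrete gap in the hard direction, precisely at your point (i). You assert that $G_x(m_0,m_0)\neq 0$ ``is precisely where $E\notin\Sigma$ enters.'' That is not correct for a fixed $m_0$: invertibility of $H(x)-E$ does not prevent the diagonal Green's function from vanishing at a given site (already in the self-adjoint case, $\langle\delta_{m_0},(H-E)^{-1}\delta_{m_0}\rangle$ has a zero in each spectral gap), and when it vanishes the two solutions you build from the single column $G_x(\cdot,m_0)$ have zero Wronskian and may be proportional, so your construction of $s(x)$ and $u(x)$ degenerates. The correct non-degeneracy statement is about the solution spaces $V^{\pm}(x)$ of solutions square-summable at $\pm\infty$: one shows $\dim V^{\pm}(x)\le 1$ (otherwise both columns of $M_n$ would decay, contradicting $\det M_n=1$), $V^{+}(x)\cap V^{-}(x)=\{0\}$ (otherwise there is an $\ell^2(\Z)$ eigenfunction, contradicting invertibility), and $\dim V^{\pm}(x)\ge 1$, which requires examining the forward tails of \emph{several} columns $G_x(\cdot,m_0)$, $G_x(\cdot,m_0\pm1)$ (or the row relations from $G_x(H(x)-E)=I$) to rule out that all of them truncate to the trivial solution. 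Your items (ii) and (iii) — uniformity of the dichotomy constants in $x$ and continuity of the bundles — then do follow from the uniform Combes--Thomas bound and norm-continuity of the resolvent, but only after the one-dimensionality and transversality of $V^{\pm}(x)$ have been secured; as written, the proposal flags these points without closing them, and the one justification it does offer for (i) is wrong.
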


	\subsection{Global theory of one-frequency quasi-periodic cocycles}
	\begin{definition}
		We say that $(\alpha,A)\in \mathbb{R} \backslash \mathbb{Q} \times C^\omega(\mathbb{T},\mathrm{SL}(2,\mathbb{C}))$ is regular if $L(\alpha,A_{\mathrm{i}y})$ is affine for $y$ in a neighborhood of $0$.
	\end{definition}
Let us recall the following two fundamental results in Avila's global theory of one-frequency quasi-periodic $\mathrm{SL}(2, \mathbb{C})$ cocycles.
	\begin{theorem}\cite{Av0}\label{regular}
		Let $(\alpha, A) \in \mathbb{R} \backslash \mathbb{Q} \times C^\omega(\mathbb{T},\mathrm{SL}(2,\mathbb{C}))$. Suppose that $L(E)>0$. Then $L(\alpha, A)$ is regular if and only if $(\alpha, A)\in\mathcal{UH}$.
	\end{theorem}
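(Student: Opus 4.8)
The plan is to prove the two implications separately. The implication $(\alpha,A)\in\mathcal{UH}\Rightarrow L(\alpha,A)$ regular is soft; the converse is the substantive one, and for it I would follow Avila's argument.

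\emph{From uniform hyperbolicity to regularity.} Suppose $(\alpha,A)\in\mathcal{UH}$. Since $\mathcal{UH}$ is an open condition and $y\mapsto A(\cdot+\mathrm{i}y)$ is continuous, $(\alpha,A_{\mathrm{i}y})\in\mathcal{UH}$ for $|y|<\delta$, and the unstable direction, obtained as a locally uniform limit of the holomorphic maps $z\mapsto M_{n}(z-n\alpha)\cdot w_{0}$, defines a holomorphic section $u\colon\mathbb{T}_{\delta}\to\mathbb{PC}^{2}$ with $A(z)\cdot u(z)=u(z+\alpha)$. As $\mathbb{T}_{\delta}$ is an annulus, $u$ lifts to a nowhere-vanishing holomorphic $\tilde u(z)\in\mathbb{C}^{2}$, and then $A(z)\tilde u(z)=\lambda(z)\tilde u(z+\alpha)$ for some holomorphic, nowhere-vanishing $\lambda$. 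Iterating, $\|M_{n}(z)\tilde u(z)\|=\bigl|\prod_{k=0}^{n-1}\lambda(z+k\alpha)\bigr|\cdot\|\tilde u(z+n\alpha)\|$; since in the uniformly hyperbolic regime the unstable direction realizes $\|M_{n}\|$ up to a bounded factor, unique ergodicity of the rotation gives
\begin{equation*}
	L(\alpha,A_{\mathrm{i}y})=\frac{1}{2\pi}\int_{\mathbb{T}}\log|\lambda(x+\mathrm{i}y)|\,\mathrm{d}x .
\end{equation*}
Now $\log|\lambda|$ is harmonic on $\mathbb{T}_{\delta}$, and for any $2\pi$-periodic-in-$x$ harmonic function $g$ one has $\frac{d^{2}}{dy^{2}}\int_{0}^{2\pi}g(x+\mathrm{i}y)\,\mathrm{d}x=-\int_{0}^{2\pi}\partial_{xx}g\,\mathrm{d}x=0$. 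Hence $y\mapsto L(\alpha,A_{\mathrm{i}y})$ is affine on $(-\delta,\delta)$, i.e.\ $L(\alpha,A)$ is regular.

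\emph{From regularity to uniform hyperbolicity.} It suffices to establish uniform exponential growth of the transfer matrices, $\inf_{x\in\mathbb{T}}\frac1n\log\|M_{n}(x)\|\ge c>0$ for all large $n$: this is a standard characterization of $\mathcal{UH}$, from which one recovers the holomorphic stable and unstable sections (transverse because $L>0$). I would obtain this from the stronger assertion that the defining convergence $\frac1n\log\|M_{n}(z)\|\to L(\alpha,A_{\mathrm{i}\,\Im z})$ is \emph{locally uniform} on a complex neighborhood $\mathbb{T}_{\delta}$ of $\mathbb{T}$ on which $y\mapsto L(\alpha,A_{\mathrm{i}y})$ is affine (so the limit is a harmonic function of $z$) and positive; restricting to $\Im z=0$ then yields the uniform lower bound. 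The upper estimate $\limsup_{n}\frac1n\log\|M_{n}(z)\|\le L(\alpha,A_{\mathrm{i}\,\Im z})$, uniformly on compacts, follows from the sub-mean-value property of the subharmonic functions $z\mapsto\frac1n\log\|M_{n}(z)\|$ together with continuity of $L$, and uses only $L>0$. The matching lower estimate is the heart of the matter and is exactly where regularity enters: affineness of $L(\alpha,A_{\mathrm{i}y})$ near $0$ rules out an ``eigenvalue crossing'' --- a single dominant behaviour persists on both sides of $y=0$ --- which forces the convex approximants $y\mapsto\frac1n\langle\log\|M_{n}(\cdot+\mathrm{i}y)\|\rangle$, converging pointwise to the affine $L$, to have their one-sided derivatives at $0$ converge as well (rigidity of convex functions with differentiable limit); the resulting vanishing of the ``corner'' of the approximants, fed through a harmonic-conjugate / normal-families argument on $\mathbb{T}_{\delta}$, upgrades the averaged control to the pointwise lower bound. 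Equivalently, one constructs the holomorphic unstable section directly as a locally uniform limit of the holomorphic maps $x\mapsto M_{n}(x-n\alpha)\cdot w_{0}$, whose Cauchy property amounts to the same uniform projective contraction that regularity supplies.

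\emph{Main obstacle.} Everything except the lower estimate above is soft: the easy direction, the reduction of $\mathcal{UH}$ to uniform growth, and the upper semicontinuity bound. The genuine difficulty is converting the $L^{1}$-type convergence built into the very definition of $L$ into honest uniform convergence on a complex neighborhood of $\mathbb{T}$, and it is precisely here that the affineness hypothesis is indispensable; this is the technical core of Avila's theorem.
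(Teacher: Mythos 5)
This statement is quoted in the paper from Avila's global theory \cite{Av0} without proof, so there is no in-paper argument to compare against; your proposal must therefore stand on its own as a reconstruction of Avila's proof. Your forward direction ($\mathcal{UH}\Rightarrow$ regular) is essentially complete and correct: openness of $\mathcal{UH}$ under complexification, holomorphy of the unstable section, triviality of the pullback line bundle over the annulus, the identity $L(\alpha,A_{\mathrm{i}y})=\langle\log|\lambda(\cdot+\mathrm{i}y)|\rangle$ via unique ergodicity, and affineness of the $x$-average of a periodic harmonic function are all sound.

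The converse, however, contains a genuine gap, and you have located it yourself without closing it. Reducing $\mathcal{UH}$ to uniform exponential growth of $\|M_{n}\|$ is legitimate, and the uniform \emph{upper} bound $\limsup_{n}\sup_{x}\frac1n\log\|M_{n}(x+\mathrm{i}y)\|\le L(\alpha,A_{\mathrm{i}y})$ is indeed soft. But the matching uniform \emph{lower} bound is the entire content of the theorem, and the mechanism you propose for it does not work as stated. Convexity rigidity gives convergence of the one-sided $y$-derivatives of the $x$-averaged quantities $\langle\frac1n\log\|M_{n}(\cdot+\mathrm{i}y)\|\rangle$, i.e.\ control in an averaged ($L^{1}$ in $x$) sense; but a sequence of subharmonic functions converging in $L^{1}$ to a harmonic limit can still dip arbitrarily far below that limit on sets of small measure (consider $\min(0,\varepsilon\log|z-z_{0}|)$-type profiles), so neither normal families nor a harmonic-conjugate argument upgrades averaged control to a pointwise lower bound. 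This is precisely the ``large deviation'' obstruction, and circumventing it without large-deviation estimates --- using quantization of the acceleration and the almost-invariance of $\log\|M_{n}\|$ along the orbit at complexified phases --- is the actual technical core of Avila's argument, which your sketch names but does not supply. As written, the proposal proves one implication and gives a correct road map, but not a proof, of the other.
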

	
	\begin{theorem}\cite{Av0}\label{cuh}
		For any $(\alpha,A)\in \mathbb{R}\backslash\mathbb{Q}\times C^{\omega}(\mathbb{T},\mathrm{SL}(2,\mathbb{C}))$, there exists $h'>0$ such that either
		\begin{enumerate}[font=\normalfont, label={(\arabic*)}]
			\item $L(\alpha, A_{\mathrm{i}y})=0$ $($and $\omega^{+}(\alpha, A_{\mathrm{i}y})=0)$ for every $0<y<h'$, or
			\item $(\alpha, A_{\mathrm{i}y})$ is uniformly hyperbolic for every $0<y<h'$.
		\end{enumerate}
	\end{theorem}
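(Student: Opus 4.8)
The plan is to combine Avila's quantization of the acceleration with the regularity criterion of Theorem~\ref{regular}. Since $A\in C^\omega(\mathbb{T},\mathrm{SL}(2,\mathbb{C}))$, it extends holomorphically to a strip $\mathbb{T}_{h_0}$ for some $h_0>0$, so $y\mapsto L(\alpha,A_{\mathrm{i}y})$ is defined, continuous, nonnegative and convex on $(-h_0,h_0)$ by Avila's global theory \cite{Av0}, and its right derivative $\omega^{+}(\alpha,A_{\mathrm{i}y})$ is nondecreasing in $y$ (by convexity) and integer-valued (by the quantization of the acceleration). First I would observe that such a function is eventually constant as $y\downarrow 0$: the map $y\mapsto\omega^{+}(\alpha,A_{\mathrm{i}y})$ is nonincreasing as $y$ decreases, bounded below by the integer $\omega^{+}(\alpha,A_{\mathrm{i}0})$, and integer-valued, hence there are $h'\in(0,h_0)$ and $\omega_0\in\mathbb{Z}$ with $\omega^{+}(\alpha,A_{\mathrm{i}y})=\omega_0$ for all $0<y<h'$. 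Integrating the right derivative and using continuity at $y=0$ then shows that $L$ is affine on $[0,h')$, namely $L(\alpha,A_{\mathrm{i}y})=L(\alpha,A_{\mathrm{i}0})+\omega_0 y$ there, and in particular $\omega_0=\omega^{+}(\alpha,A_{\mathrm{i}0})$.

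Next I would split into the two alternatives according to the value of $L$ on this affine piece. If $L(\alpha,A_{\mathrm{i}0})=0$ and $\omega_0=0$, then $L(\alpha,A_{\mathrm{i}y})=0$ for all $0<y<h'$, and $\omega^{+}(\alpha,A_{\mathrm{i}y})=0$ there as well, which is exactly alternative~(1). In every other case I claim that, after possibly shrinking $h'$, one has $L(\alpha,A_{\mathrm{i}y})>0$ for all $y\in(0,h')$. Indeed, if $L(\alpha,A_{\mathrm{i}0})>0$ this is immediate by continuity; and if $L(\alpha,A_{\mathrm{i}0})=0$ but $\omega_0\neq 0$, then since $L\geq 0$ and $L(\alpha,A_{\mathrm{i}y})=\omega_0 y$ near $0^{+}$ we must have $\omega_0>0$, whence $L(\alpha,A_{\mathrm{i}y})=\omega_0 y>0$ on $(0,h')$.

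Finally, in this second situation I would apply Theorem~\ref{regular} pointwise. Fix any $y_0\in(0,h')$. Since $L(\alpha,A_{\mathrm{i}y})$ is affine on the whole interval $(0,h')$, it is affine in a neighborhood of $y_0$; by definition this means the cocycle $(\alpha,A_{\mathrm{i}y_0})$ is \emph{regular}. As $L(\alpha,A_{\mathrm{i}y_0})>0$, Theorem~\ref{regular} yields $(\alpha,A_{\mathrm{i}y_0})\in\mathcal{UH}$. Since $y_0\in(0,h')$ was arbitrary, alternative~(2) holds, and this establishes the dichotomy.

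As for the difficulty, the deep inputs --- the quantization of the acceleration and the equivalence ``regular $\Leftrightarrow$ uniformly hyperbolic'' in the region $L>0$ --- are precisely the two results of Avila recalled above, which may be taken as given. Granted these, the only genuine point to watch is the borderline case $L(\alpha,A_{\mathrm{i}0})=0$ with nonzero acceleration: there one must use the nonnegativity of the Lyapunov exponent to force the slope $\omega_0$ to be positive (rather than negative), so that $L$ becomes strictly positive immediately to the right of $0$ and Theorem~\ref{regular} applies; the remainder is convexity bookkeeping and a pointwise invocation of that theorem.
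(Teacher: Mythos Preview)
The paper does not give its own proof of this statement: Theorem~\ref{cuh} is quoted from \cite{Av0} in the Preliminary section and is used as a black box thereafter. So there is no in-paper argument to compare against.

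That said, your derivation is correct and is essentially the standard way this dichotomy falls out of Avila's framework. You use exactly the two ingredients the paper records---quantization of the acceleration and Theorem~\ref{regular}---and the convexity/nonnegativity bookkeeping is clean: the nondecreasing integer-valued right derivative must stabilize to some $\omega_0$ on a right-neighborhood of $0$, giving an affine piece $L(\alpha,A_{\mathrm{i}y})=L(\alpha,A_{\mathrm{i}0})+\omega_0 y$; the case split and the pointwise appeal to Theorem~\ref{regular} then finish the job. The one place that deserves care, which you handled, is the borderline $L(\alpha,A_{\mathrm{i}0})=0$ with $\omega_0\neq 0$: nonnegativity of $L$ forces $\omega_0>0$, so $L$ is strictly positive on $(0,h')$ and Theorem~\ref{regular} applies. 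Nothing is missing.
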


	\subsection{Some facts on subharmonic functions}	
		Subharmonic functions play a crucial role in potential theory. We collect some of the basic properties of subharmonic functions.
	\begin{lemma}[Riesz decomposition theorem \cite{GS08}]\label{riesz}
		Let $u:\Omega\rightarrow \mathbb{R}$ be a subharmonic function on a domain $\Omega\subset \mathbb{C}$. Suppose that $\partial\Omega$ consists of finitely many piecewise $C^{1}$ curves. There exists a positive measure $\mu$ on $\Omega$ such that for any $\Omega_{1}\Subset \Omega$(i.e. $\Omega_{1}$ is a compactly contained subregion of $\Omega$),
		\begin{equation}\label{decom}
			u(z)=\int_{\Omega_{1}} \log |z-\zeta|\mathrm{d}\mu(\zeta)+g(z),
		\end{equation}
		where $g$ is harmonic on $\Omega_{1}$ and $\mu$ is unique with this property. Moreover, for any $\Omega_{2}\Subset \Omega_{1}$, $\mu$ and $g$ satisfy the bounds
		\begin{equation*}
			\begin{split}
				&\mu(\Omega_{1})\leq C(\Omega, \Omega_{1})\bigg(\sup_{\Omega}u-\sup_{\Omega_{1}}u\bigg),\\
				&\Big\| g-\sup_{\Omega_{1}}u\Big\|_{L^{\infty}(\Omega_{2})} \leq C(\Omega, \Omega_{1},\Omega_{2})\bigg(\sup_{\Omega}u-\sup_{\Omega_{1}}u\bigg).
			\end{split}
		\end{equation*}
	\end{lemma}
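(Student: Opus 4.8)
The plan is to take $\mu$ to be a normalization of the distributional Laplacian of $u$ and then to read off every assertion from standard facts about Newtonian and Green potentials. First, mollifying $u$ by $u_{\varepsilon}=u*\rho_{\varepsilon}$ produces smooth subharmonic functions with $\Delta u_{\varepsilon}\ge 0$ pointwise and $u_{\varepsilon}\to u$ in $L^{1}_{\mathrm{loc}}(\Omega)$, so $\int u\,\Delta\phi\,\mathrm{d}m\ge 0$ for every $0\le\phi\in C^{\infty}_{c}(\Omega)$; by the Riesz representation theorem for positive distributions, $\Delta u$ is a positive Radon measure, and we put $\mu:=\frac{1}{2\pi}\Delta u$. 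Fix $\Omega_{1}\Subset\Omega$. Since $\mu|_{\Omega_{1}}$ is finite, $p(z):=\int_{\Omega_{1}}\log|z-\zeta|\,\mathrm{d}\mu(\zeta)$ is subharmonic on $\mathbb{C}$ with $\Delta p=2\pi\,\mu|_{\Omega_{1}}$ (from $\Delta_{z}\log|z-\zeta|=2\pi\delta_{\zeta}$), hence $\Delta(u-p)=0$ on $\Omega_{1}$, and by Weyl's lemma $u-p$ agrees a.e.\ on $\Omega_{1}$ with a harmonic function $g$; as $u-g$ and $p$ are subharmonic and equal a.e., they are equal everywhere, giving $u=p+g$ on $\Omega_{1}$. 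Uniqueness is immediate: any representation of the stated form forces, upon taking distributional Laplacians, $2\pi\mu'|_{\Omega_{1}}=\Delta u|_{\Omega_{1}}$ for all $\Omega_{1}$, hence $\mu'=\mu$.

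For the mass bound, pick $\delta\in(0,\mathrm{dist}(\overline{\Omega_{1}},\partial\Omega))$ and set $D:=\{z:\mathrm{dist}(z,\Omega_{1})<\delta\}$, so $\Omega_{1}\Subset D\Subset\Omega$ with $D$ bounded, hence Greenian; let $G_{D}<0$ be its Green's function. On the compact set $\overline{\Omega_{1}}\times\overline{\Omega_{1}}\subset D\times D$ the function $G_{D}$ is upper semicontinuous, equal to $-\infty$ on the diagonal and strictly negative off it, so it attains a negative maximum $-c_{0}$ with $c_{0}=c_{0}(\Omega,\Omega_{1})>0$. Since $u\le M:=\sup_{\Omega}u<\infty$, the constant $M$ is a harmonic majorant of $u$ on $D$, so the Riesz decomposition of $u$ on the Greenian domain $D$ reads $u=h_{D}+\int_{D}G_{D}(\cdot,\zeta)\,\mathrm{d}\mu(\zeta)$, with $h_{D}$ the least harmonic majorant, hence $h_{D}\le M$. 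Taking $z^{\ast}\in\overline{\Omega_{1}}$ with $u(z^{\ast})=\sup_{\Omega_{1}}u=:M_{1}$ (attained by upper semicontinuity),
\[
c_{0}\,\mu(\Omega_{1})\le\int_{\overline{\Omega_{1}}}\!\bigl(-G_{D}(z^{\ast},\zeta)\bigr)\mathrm{d}\mu(\zeta)\le\int_{D}\!\bigl(-G_{D}(z^{\ast},\zeta)\bigr)\mathrm{d}\mu(\zeta)=h_{D}(z^{\ast})-u(z^{\ast})\le M-M_{1},
\]
which is the claimed estimate with $C(\Omega,\Omega_{1})=c_{0}^{-1}$.

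For the bound on $g$ over $\Omega_{2}\Subset\Omega_{1}$, write likewise $u=P+H$ on $D$ with $P(z):=\int_{D}\log|z-\zeta|\,\mathrm{d}\mu(\zeta)$ and $H$ harmonic on $D\supset\overline{\Omega_{1}}$. Then $g-H=P-p=\int_{D\setminus\Omega_{1}}\log|z-\zeta|\,\mathrm{d}\mu(\zeta)$, and for $z\in\Omega_{2}$ (which has positive distance from $\mathbb{C}\setminus\Omega_{1}$) the integrand stays in a bounded interval, so $\|g-H\|_{L^{\infty}(\Omega_{2})}\le C\,\mu(D)\le C'(M-M_{1})$ by the mass bound applied to $D$. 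It remains to control $H$, which I would do through an $L^{1}\to L^{\infty}$ estimate: the nonnegative superharmonic function $M-u$ has circle averages about $z^{\ast}$ bounded by $(M-u)(z^{\ast})=M-M_{1}$, whence $\|M-u\|_{L^{1}(D(z^{\ast},r_{0}))}\le\pi r_{0}^{2}(M-M_{1})$ for small $r_{0}$; together with $\|P\|_{L^{1}(D(z^{\ast},r_{0}))}\le C\,\mu(D)\le C'(M-M_{1})$ (Fubini and local integrability of $\log$) this gives $\|M-H\|_{L^{1}(D(z^{\ast},r_{0}))}\le C(M-M_{1})$, and since $M-H$ is harmonic on the connected domain $D$, the interior $L^{1}\to L^{\infty}$ estimate for harmonic functions propagates this to $\|M-H\|_{L^{\infty}(\overline{\Omega_{1}})}\le C(\Omega,\Omega_{1})(M-M_{1})$. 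Combining, $\|g-M_{1}\|_{L^{\infty}(\Omega_{2})}\le\|g-H\|_{L^{\infty}(\Omega_{2})}+\|H-M\|_{L^{\infty}(\Omega_{1})}+(M-M_{1})\le C(\Omega,\Omega_{1},\Omega_{2})(M-M_{1})$, which is the last assertion.

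I expect the identification of $\mu$ and the decomposition to be routine, and the mass bound to be the conceptual core but short once phrased through the Green-potential form of the Riesz decomposition. The real friction should lie in the bound on $g$: extracting a two-sided estimate for the harmonic part is delicate because pointwise values of $u$ (hence, a priori, of $g$) may be $-\infty$, forcing one to work with $L^{1}$-averages and to lean on the already-established control of $\mu$, all while carefully tracking the nested domains $\Omega_{2}\Subset\Omega_{1}\Subset D\Subset\Omega$ and keeping straight whether one is decomposing $u$ over $\Omega_{1}$ or over $D$.
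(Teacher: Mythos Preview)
The paper does not prove this lemma; it is quoted from Goldstein--Schlag \cite{GS08} as a preliminary fact, so there is no ``paper's proof'' to compare against. Your construction of $\mu=\tfrac{1}{2\pi}\Delta u$, the decomposition $u=p+g$ via Weyl's lemma, and the mass bound $\mu(\Omega_1)\le c_0^{-1}(M-M_1)$ through the Green-potential Riesz decomposition on an intermediate domain $D$ are all correct and constitute the standard route.

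There is, however, a genuine gap in the bound for $g$. You obtain $\|M-H\|_{L^1(B(z^*,r_0))}\le C(M-M_1)$ on a single small ball and then assert that ``the interior $L^1\to L^\infty$ estimate for harmonic functions propagates this to $\|M-H\|_{L^\infty(\overline{\Omega_1})}$.'' No such propagation exists: the mean-value inequality turns an $L^1$ bound on a ball into an $L^\infty$ bound on a \emph{smaller concentric} ball, but a harmonic function can have arbitrarily small $L^1$ norm near one point while being arbitrarily large elsewhere (e.g.\ $h(x,y)=Ax$ on the unit disk, with the ball centered at the origin). Connectedness of $D$ does not help without a sign condition. The repair is to replace $H$ by the least harmonic majorant $h_D$ from your Green-function decomposition $u=h_D+\int_D G_D(\cdot,\zeta)\,\mathrm{d}\mu$: since $u\le h_D\le M$ on $D$, the function $M-h_D\ge 0$ is harmonic with $(M-h_D)(z^*)\le M-M_1$, and Harnack's inequality on a connected neighborhood of $\overline{\Omega_1}$ gives $\|M-h_D\|_{L^\infty(\Omega_1)}\le C(M-M_1)$. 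Then on $\Omega_2$ one has $g-h_D=\int_{\Omega_1}\bigl(G_D-\log|\cdot-\zeta|\bigr)\mathrm{d}\mu+\int_{D\setminus\Omega_1}G_D(\cdot,\zeta)\,\mathrm{d}\mu$, and both integrands are bounded kernels (the first because the regular part of $G_D$ is continuous on $\overline{\Omega_2}\times\overline{\Omega_1}$, the second because $z\in\Omega_2$ stays away from $\zeta\in D\setminus\Omega_1$ and from $\partial D$), so $|g-h_D|\le C\mu(D)\le C'(M-M_1)$ by the mass bound applied with $D$ in place of $\Omega_1$. Combining these yields $\|g-M_1\|_{L^\infty(\Omega_2)}\le C(M-M_1)$ as claimed.
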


	\begin{definition}
		Suppose that $u: \Omega_{1}\times \cdots \times \Omega_{d} \rightarrow \mathbb{R}\cup \{-\infty\}$ is continuous. Then $u$ is said to be separately subharmonic, if for any $1\leq j\leq d$ and $z_{k}\in \Omega_{k}$ for $k\neq j$ the function
		\begin{equation*}
			z\mapsto u(z_{1},\cdots, z_{k-1},z,z_{k+1},\cdots,z_{d})
		\end{equation*}
		is subharmonic in $z\in \Omega_{j}$.
	\end{definition}
	
	We recall the definition of Cartan's sets and the definition is motivated by the following Cartan's estimate of analytic function with several variables.
	\begin{definition}[Cartan's set \cite{GS08}]
		Let $H\gg1$. For any arbitrary subset $\mathcal{B}\subset \mathbb{D}(z_{0},1)\subset \mathbb{C}$, we say $\mathcal{B}\in \mathrm{Car}_{1}(H,J)$ if $\mathcal{B}\subset \cup_{j=1}^{j_{0}} \mathbb{D}(z_{j},r_{j})$ with $j_{0}<J$ and 
		$\sum_{j}r_{j} < \mathrm{e}^{-H}.$
		Here $\mathbb{D}(z,r)$ means the complex disk center at $z$ with radius $r$. If $d$ is a
		positive integer greater than one and $\mathcal{B}\subset \prod_{j=1}^{d} \mathbb{D}(z_{j,0},1) \subset \mathbb{C}^{d}$ then one can define inductively that $\mathcal{B} \in \mathrm{Car}_{d}(H,J)$ if for any $1\leq j\leq d$ there exists $\mathcal{B}_{j} \subset \mathbb{D}(z_{j,0},1)\subset \mathbb{C}$, $\mathcal{B}_{j}\in \mathrm{Car}_{1}(H,J)$ so that $\mathcal{B}_{z}^{(j)}\in \mathrm{Car}_{d-1}(H,J)$ for any $z\in \mathbb{C}\backslash \mathcal{B}_{j}$, here $\mathcal{B}_{z}^{(j)}=\{(z_{1},\cdots,z_{d})\in \mathcal{B}: z_{j}=z\}$.
	\end{definition}

	\begin{lemma}[Cartan's estimate \cite{GS11}]\label{car}
		Let $\varphi(z_{1},\cdots,z_{d})$ be an analytic function defined in a polydisk $\mathcal{P}= \prod_{j=1}^{d} \mathbb{D}(z_{j,0},1)$ with $z_{j,0}\in \mathbb{C}$. Let $M\geq \sup_{z\in \mathcal{P}} \log |\varphi(z)|$, $m\leq \log |\varphi(z_{0})|$, $z_{0}=(z_{1,0},\cdots,z_{d,0})$. Given $H\gg1$, there exists a set $\mathcal{B}\subset \mathcal{P}$, $\mathcal{B}\in \mathrm{Car}_{d}(H^{1/d}, J)$, $J=C_{d}H(M-m)$, such that
		\begin{equation*}
			\log |\varphi(z)|>M-C_{d}H(M-m)
		\end{equation*}
		for any $z\in \prod_{j=1}^{d}\mathbb{D}(z_{j,0},1/6) \backslash \mathcal{B}$.
	\end{lemma}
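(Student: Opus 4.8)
The plan is to prove the estimate by induction on the dimension $d$, with the essential content in the one-variable case; throughout one notes $M\ge\log|\varphi(z_0)|\ge m$, so $M-m\ge0$, and $\varphi\not\equiv0$. \emph{Base case $d=1$.} First I would bound the number of zeros: letting $z_1,\dots,z_N$ be the zeros of $\varphi$ (with multiplicity) in $\overline{\mathbb{D}}(z_{1,0},5/8)$, Jensen's formula on the circle $|z-z_{1,0}|=3/4$ gives $N\log(6/5)\le\sum_{|z_i-z_{1,0}|<3/4}\log\frac{3/4}{|z_i-z_{1,0}|}=\frac{1}{2\pi}\int_0^{2\pi}\log|\varphi(z_{1,0}+\tfrac34\mathrm{e}^{\mathrm{i}\theta})|\,\mathrm{d}\theta-\log|\varphi(z_{1,0})|\le M-m$, hence $N\le C(M-m)$. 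Then I would factor $\varphi=B\cdot g$, where $B(z)=\prod_{i=1}^{N}\frac{z-z_i}{1-\overline{(z_i-z_{1,0})}(z-z_{1,0})}$ is the finite Blaschke product of the unit disk centred at $z_{1,0}$ built from $z_1,\dots,z_N$, so $|B|\equiv1$ on $|z-z_{1,0}|=1$, $|B|<1$ inside, and the poles of $B$ lie outside $\mathbb{D}(z_{1,0},1)$; consequently $g=\varphi/B$ is analytic on $\mathbb{D}(z_{1,0},1)$ and zero-free on a neighbourhood of $\overline{\mathbb{D}}(z_{1,0},1/2)$.

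Then I would control the two factors on $\overline{\mathbb{D}}(z_{1,0},1/6)$. For $g$: $\log|g|$ is subharmonic on $\mathbb{D}(z_{1,0},1)$ with boundary values $\le M$ (since $|B|\equiv1$ there), hence $\log|g|\le M$ throughout; and $\log|g(z_{1,0})|=\log|\varphi(z_{1,0})|-\log|B(z_{1,0})|\ge m$ because $|B(z_{1,0})|=\prod_i|z_i-z_{1,0}|\le1$. Since $M-\log|g|$ is nonnegative and harmonic near $\overline{\mathbb{D}}(z_{1,0},1/2)$, Harnack's inequality on $\mathbb{D}(z_{1,0},1/2)$ yields $M-\log|g(z)|\le 2\bigl(M-\log|g(z_{1,0})|\bigr)\le 2(M-m)$ for $|z-z_{1,0}|\le1/6$. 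For $B$: on $\overline{\mathbb{D}}(z_{1,0},1/6)$ the Blaschke denominators lie in $[1/2,2]$, so $|B(z)|\ge 2^{-N}\prod_{i=1}^{N}|z-z_i|$; applying the classical one-variable Cartan lemma to $z_1,\dots,z_N$, with the radius parameter tuned so the covering disks have total radius $<\mathrm{e}^{-H}$, gives a set $\mathcal{B}$, a union of at most $N$ disks, off which $\prod_i|z-z_i|\ge\mathrm{e}^{-N(H+C_0)}$. Combining, for $z\in\overline{\mathbb{D}}(z_{1,0},1/6)\setminus\mathcal{B}$, $\log|\varphi(z)|=\log|B(z)|+\log|g(z)|\ge M-2(M-m)-N(H+C_0+\log2)\ge M-C_1H(M-m)$, using $N\le C(M-m)$ and $H\ge1$; and since the number of disks is $\le N\le C(M-m)\le C_1H(M-m)=:J$, one has $\mathcal{B}\in\mathrm{Car}_1(H,J)$, settling $d=1$.

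\emph{Inductive step.} Assuming the estimate in dimension $d-1$, I would peel off coordinates one at a time, spending a budget $H^{1/d}$ at each stage. Fixing generic values $z_1,\dots,z_{j-1}$ outside the exceptional sets produced in the previous $j-1$ stages, the map $w\mapsto\varphi(z_1,\dots,z_{j-1},w,z_{j+1,0},\dots,z_{d,0})$ is a one-variable analytic function on $\mathbb{D}(z_{j,0},1)$ with supremum $\le\mathrm{e}^{M}$ and value at $w=z_{j,0}$ bounded below by the bound inherited from stage $j-1$; applying the case $d=1$ with parameter $H^{1/d}$ produces, for each admissible $(z_1,\dots,z_{j-1})$, an exceptional set $\mathcal{B}_j\subset\mathbb{D}(z_{j,0},1)$ in $\mathrm{Car}_1(H^{1/d},J)$ off which the gap $M-\log|\varphi|$ grows by a factor at most $C_1H^{1/d}$. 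After $d$ stages the gap is at most $C_1^dH(M-m)=:C_dH(M-m)$, the number of covering disks at every stage stays $\le C_1^{d-1}H(M-m)\le J$, and the exceptional set assembled this way — enlarged, if necessary, to the union over the $d!$ orderings of the coordinates, which only affects constants — lies in $\mathrm{Car}_d(H^{1/d},J)$ by the recursive definition of Cartan sets. This is the assertion.

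\emph{Main obstacle.} The only genuinely analytic input is confined to $d=1$ and is routine once the radii are chosen compatibly: the disk ($5/8$ above) whose zeros are absorbed into $B$ must sit strictly between the target radius $1/6$ and the Jensen radius $3/4$, so that $g$ is zero-free on a disk large enough for Harnack while the Blaschke denominators stay pinched away from $0$ and $\infty$ on $\overline{\mathbb{D}}(z_{1,0},1/6)$. The real work is bookkeeping: one must check that the set obtained by the coordinate-by-coordinate peeling satisfies the recursive $\mathrm{Car}_d$ definition in \emph{every} coordinate direction, not merely the order in which the peeling was performed — hence the symmetrization over orderings (or, following Goldstein--Schlag, building the bad set with the recursion built in) — and one must verify that the geometric compounding of the loss produces exactly $C_dH(M-m)$ while the $d$-fold split of the budget yields precisely the exponent $H^{1/d}$.
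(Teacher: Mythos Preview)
The paper does not prove this lemma; it is quoted from Goldstein--Schlag \cite{GS11} (with the one-variable input being Theorem~4 of Levin \cite{Le}, as the paper notes in Remark~\ref{carmark}). Your sketch reproduces exactly the standard argument found in those references: Jensen's formula to bound the number of zeros by $C(M-m)$, a Blaschke/polynomial factorization to separate the zeros from a zero-free remainder, Harnack's inequality on the remainder, the classical one-variable Cartan lemma on the product of linear factors, and then induction on $d$ with the budget $H$ split as $H^{1/d}$ per coordinate. The one-variable argument is correct as written (your choice of nested radii $1/6<1/2<5/8<3/4<1$ is compatible, and the Harnack constant $2$ at ratio $(1/6)/(1/2)$ is right).

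One comment on the inductive step: your symmetrization over all $d!$ coordinate orderings is a legitimate way to force the $\mathrm{Car}_d$ structure in every direction, but you should note that the union of the $d!$ bad sets is not automatically in $\mathrm{Car}_d(H^{1/d},J)$ --- you need that the \emph{slice} of the full union in each coordinate is $\mathrm{Car}_{d-1}$, not just the slice of one ordering's bad set. The cleaner route (and the one in \cite{GS11}) is to run the induction so that the recursive structure is built in: at stage $j$ one applies the $(d{-}1)$-dimensional hypothesis to the function of $(z_1,\dots,\widehat{z_j},\dots,z_d)$ with $z_j$ frozen, rather than peeling one coordinate at a time. Either way the constants absorb the combinatorics, so your outline is fine.
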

	\begin{remark}\label{carmark}
		The radius $1/6$ in Lemma \ref{car} was chosen to allow the use of Theorem 4 in \cite{Le} as stated. However, it is straightforward to obtain the following stronger statement: Given $\gamma\in (0,1)$, the lower bound is valid for all $z\in \prod_{j=1}^{d} \mathbb{D}(z_{j,0},1-\gamma)$. The influence of $\gamma$ is only felt in the constants $C_{d}$ and thus $C_{d}$ should be replaced by $C_{d,\gamma}$. 
	\end{remark}
	The definition of the Cartan's sets give implicit information about their measures.
	\begin{lemma}\label{carmes}\cite{GSV19}
		If $\mathcal{B}\in \mathrm{Car}_{d}(H,J)$ then  we have
		$	{\rm mes}(\mathcal{B}\cap \mathbb{R}^{d}) \leq C(d)\mathrm{e}^{-H}.
$
	\end{lemma}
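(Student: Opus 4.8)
The plan is to induct on the dimension $d$, peeling off one complex coordinate at a time and using Fubini to reduce to a lower-dimensional Cartan set plus a small exceptional slice.

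For the base case $d=1$: if $\mathcal{B}\in\mathrm{Car}_1(H,J)$ then by definition $\mathcal{B}\subset\bigcup_{j=1}^{j_0}\mathbb{D}(z_j,r_j)$ with $\sum_j r_j<\mathrm{e}^{-H}$, and since each disk $\mathbb{D}(z_j,r_j)$ meets $\mathbb{R}$ in an interval of length at most $2r_j$, one gets ${\rm mes}(\mathcal{B}\cap\mathbb{R})\le 2\sum_j r_j<2\mathrm{e}^{-H}$; so $C(1)=2$ works. For the inductive step, assume the bound holds in dimension $d-1$ with constant $C(d-1)$, and let $\mathcal{B}\in\mathrm{Car}_d(H,J)$ with $\mathcal{B}\subset\prod_{j=1}^d\mathbb{D}(z_{j,0},1)$. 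I would invoke the definition with the index $j=d$: there is $\mathcal{B}_d\subset\mathbb{D}(z_{d,0},1)$, $\mathcal{B}_d\in\mathrm{Car}_1(H,J)$, such that for every $z\notin\mathcal{B}_d$ the slice $\mathcal{B}_z^{(d)}=\{(z_1,\dots,z_d)\in\mathcal{B}:z_d=z\}$, regarded as a subset of $\prod_{j\ne d}\mathbb{D}(z_{j,0},1)\subset\mathbb{C}^{d-1}$, belongs to $\mathrm{Car}_{d-1}(H,J)$.

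Next I would split the (Fubini) integral over the real coordinate $z_d\in\mathbb{R}$ according to whether $z_d\in\mathcal{B}_d$ or not. On the set $\mathcal{B}_d\cap\mathbb{R}$, which by the base case has length $<2\mathrm{e}^{-H}$, the $(d-1)$-dimensional slice is bounded trivially by ${\rm mes}\big(\prod_{j\ne d}\mathbb{D}(z_{j,0},1)\cap\mathbb{R}^{d-1}\big)\le 2^{d-1}$ (each disk of radius $1$ meets $\mathbb{R}$ in a segment of length $\le 2$). On $(\mathbb{D}(z_{d,0},1)\cap\mathbb{R})\setminus\mathcal{B}_d$, which has length $\le 2$, the slice is a $\mathrm{Car}_{d-1}(H,J)$ set, so the inductive hypothesis gives ${\rm mes}(\mathcal{B}_z^{(d)}\cap\mathbb{R}^{d-1})\le C(d-1)\mathrm{e}^{-H}$. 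Adding the two contributions,
\[
{\rm mes}(\mathcal{B}\cap\mathbb{R}^d)\le 2^{d-1}\cdot 2\mathrm{e}^{-H}+2\cdot C(d-1)\mathrm{e}^{-H}=\big(2^d+2C(d-1)\big)\mathrm{e}^{-H},
\]
so $C(d)=2^d+2C(d-1)$ closes the induction; solving the recursion with $C(1)=2$ gives the explicit choice $C(d)=d\,2^d$, which depends only on $d$. Note the combinatorial bound $j_0<J$ never enters the final estimate—only the total radius $\sum r_j<\mathrm{e}^{-H}$ matters—which is precisely why the constant is dimension-only.

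The one genuinely delicate point is measurability: Cartan sets are defined as arbitrary subsets, so $\mathcal{B}$ need not be Lebesgue measurable, and the Fubini step must be justified. I would handle this by working throughout with outer measure and using the slicing inequality ${\rm mes}^{*}(\mathcal{B}\cap\mathbb{R}^d)\le\int^{*}_{\mathbb{R}}{\rm mes}^{*}(\mathcal{B}_z^{(d)}\cap\mathbb{R}^{d-1})\,\mathrm{d}z$ (upper integral), observing that the integrand is dominated pointwise by the measurable function $2^{d-1}\mathbf{1}_{\mathcal{B}_d\cap\mathbb{R}}(z)+C(d-1)\mathrm{e}^{-H}\mathbf{1}_{\mathbb{D}(z_{d,0},1)\cap\mathbb{R}}(z)$, whose ordinary integral is the bound above; alternatively one may first replace $\mathcal{B}$ at the innermost level by the open set formed from its covering disks, making every set in the construction measurable, and then run the identical argument with the usual Fubini theorem. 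Either route is routine, so I expect no real obstacle beyond bookkeeping; the substance of the lemma is the clean induction outlined above.
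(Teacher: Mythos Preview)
The paper does not supply a proof of this lemma; it is quoted from \cite{GSV19} and used as a black box. Your inductive argument is correct and is precisely the standard proof one finds in the cited reference: peel off one coordinate, use the $\mathrm{Car}_1$ bound on the exceptional slice and the inductive hypothesis on the good slices, and apply Fubini. The recursion $C(d)=2^d+2C(d-1)$ with $C(1)=2$ indeed yields $C(d)=d\,2^d$. Your remark on measurability (working with outer measure or passing to the open covering disks) is the right way to handle the fact that Cartan sets are defined as arbitrary subsets; this is routinely glossed over in the literature, so it is good that you flagged it.
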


\section{Large deviation theorem}
In this section, we are going to prove a large deviation theorem for the entries of the transfer matrix for the fixed rationally independent frequency. To simplify the notation we suppress the dependence on $E$ sometimes. We define
\begin{equation*}
	L_{n}(E,x+\mathrm{i}y)=\frac{1}{n}\log \|M_{n}(x+\mathrm{i}y)\|
\end{equation*}
and
\begin{equation*}
	L_{n}(E,\mathrm{i}y)=\frac{1}{(2\pi)^{d}}\int_{\mathbb{T}^{d}} L_{n}(E,x+\mathrm{i}y) \mathrm{d}x.
\end{equation*}
In what follows, the dependence on $E,y$ is locally uniform,  and all the constants that depend on $v$ will only depend on the norm of $v$ in various complexified torus. The main result of this section is the following.
\begin{theorem}\label{LDTf}
	Let $d\in \mathbb{N}^{+}$, $h'\in(0,h)$, $v\in C^{\omega}(\mathbb{T}^{d}_{h},\mathbb{C})$, $E\in\mathbb{C}$, $y\in\mathbb{R}^{d}$ with $|y_{j}|<h'(1\leq j\leq d)$. Assume $L(E,\mathrm{i}y)>\gamma>0$.
	Then there exist  $l_{0}=l_{0}(E,v,\gamma,h,h',d)$, $\Cl[sig]{thmsig}=\Cr{thmsig}(d)$, $\Cl{thmC}=\Cr{thmC}(E,v,h,h',d)$, $\Cl[K]{thmK}=\Cr{thmK}(E,v,\gamma,h,h',d)$ such that the following hold.
	\begin{enumerate}[font=\normalfont, label={(\arabic*)}]
		\item\label{lp}
		For any $l\geq l_{0}$ and $1\leq p<\infty$, there exists $\Cl{lp}=\Cr{lp}(E,v,h,h',d,p)$ such that
		\begin{equation*}
			\|\log |f_{l}(\cdot+\mathrm{i}y)|\|_{L^{p}(\mathbb{T}^{d})} \leq \Cr{lp} l.
		\end{equation*}
		\item \label{item:ldt} 
		Let $K>\Cr{thmK}$ and assume $\alpha\in \mathbb{T}^{d}$ with
		\begin{equation*}
			\|\langle {\bf k},\alpha\rangle\|_{\mathbb{T}}>\delta \quad \text{for all} \ 0<|{\bf k}|<K.
		\end{equation*}
	Then for any $n>K\delta^{-1}$,
		\begin{equation}\label{ldtf}
			{\rm mes} \{x\in \mathbb{T}^{d}: |\frac{1}{n} \log |f_{n}(E,x+\mathrm{i}y)|-L_{n}(E,\mathrm{i}y)|>K^{-\Cr{thmsig}}\} < {\rm e}^{-\Cr{thmC}K^{\Cr{thmsig}}}.
		\end{equation}
		\item\label{item:growth} Moreover, if $n>K^2\delta^{-1}$,
		\begin{equation*}\label{average1}
			\bigg|\frac{1}{(2\pi)^d}\int_{\mathbb{T}^{d}}\frac{1}{n} \log |f_{n}(E,x+\mathrm{i}y)| \mathrm{d}x-L(E,\mathrm{i}y)\bigg|<\Cr{thmC}K^{-\Cr{thmsig}}.
		\end{equation*}
	\end{enumerate}
\end{theorem}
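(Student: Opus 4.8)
The plan is to establish the large deviation estimate first for the full transfer matrix and then to transfer it to the determinant, reading off part \ref{item:growth} along the way; part \ref{lp} is a soft bound handled separately. Throughout I write $u_n(z)=\tfrac1n\log\|M_n(E,z+\mathrm iy)\|$, which is subharmonic (separately subharmonic when $d\ge2$) on $\mathbb T^d_{h'}$ and satisfies $0\le u_n\le C$ uniformly in $n$, since $\det M_n\equiv1$ and $\|S_{E,v}^{\pm1}\|_{h'}\le C$. For \ref{lp}, the same reasoning gives $\sup_z|f_l(E,z+\mathrm iy)|\le\mathrm e^{Cl}$, and as $\log|f_l|$ is subharmonic a Jensen/Cartan argument (Lemma \ref{car} with Lemma \ref{carmes}) bounds ${\rm mes}\{x\in\mathbb T^d:\log|f_l(E,x+\mathrm iy)|<-tl\}$ by $\mathrm e^{-ct}$, whence $\|\log|f_l(\cdot+\mathrm iy)|\|_{L^p(\mathbb T^d)}\le C_p\,l$ for all $p<\infty$; this is routine.

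For the core estimate on $u_n$ I would use two facts. First, \emph{almost invariance}: from $M_n(z+\alpha)=S_{E,v}(z+(n+1)\alpha)\,M_n(z)\,S_{E,v}(z+\alpha)^{-1}$ one gets $|u_n(x+\alpha)-u_n(x)|\le C/n$, hence $|u_n(x+j\alpha)-u_n(x)|\le Cj/n$. Second, the Riesz decomposition (Lemma \ref{riesz}, applied separately in each variable when $d\ge2$): on a slightly smaller torus $u_n=\mu_n*\log|\cdot|+h_n$, with $\mu_n\ge0$ of total mass $\le C$ and $h_n$ harmonic in each variable, $|h_n|\le C$, so $|\widehat{h_n}(\mathbf k)|\le C\mathrm e^{-\rho|\mathbf k|}$ for a fixed $\rho>0$. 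Averaging $u_n$ over the finite orbit $\{x+j\alpha\}_{0\le j<N}$ and using the first fact replaces $u_n(x)$ by $\tfrac1N\sum_{j<N}u_n(x+j\alpha)$ up to an error $O(N/n)$; in this averaged quantity the harmonic part converges to $\langle h_n\rangle$ with error $\le\tfrac C{N\delta}+C\mathrm e^{-\rho K}$ — split its Fourier series at $|\mathbf k|=K$ and use $\|\langle\mathbf k,\alpha\rangle\|_{\mathbb T}>\delta$ for $0<|\mathbf k|<K$ — while the logarithmic part equals $\int\big(\tfrac1N\sum_{j<N}\log|\mathrm e^{\mathrm i(x+j\alpha)}-\zeta|\big)\mathrm d\mu_n(\zeta)$, which for $x$ outside an exceptional set stays within $K^{-\sigma}$ of its $x$-mean $\int\log^{+}|\zeta|\,\mathrm d\mu_n(\zeta)=\langle\mu_n*\log|\cdot|\rangle$; the exceptional set is estimated by applying Cartan's estimate (Lemma \ref{car}, with the sharpened radius of Remark \ref{carmark}) and the Cartan-set measure bound (Lemma \ref{carmes}) to the degree-$N$ polynomial $z\mapsto\prod_{j<N}(\mathrm e^{\mathrm ij\alpha}z-\zeta)$, uniformly in $\zeta\in\supp\mu_n$, and this is where the exponentially small bound $\mathrm e^{-cK^{\sigma}}$ on its measure originates. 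Choosing $N$ a suitable power of $K$ and using $n>K\delta^{-1}$ to absorb $O(N/n)$, I would arrive at
\[
{\rm mes}\Big\{x\in\mathbb T^d:\ \big|\tfrac1n\log\|M_n(E,x+\mathrm iy)\|-L_n(E,\mathrm iy)\big|>K^{-\sigma}\Big\}<\mathrm e^{-cK^{\sigma}},
\]
and, from the same estimate at scale $N$ together with the subadditivity $L_n(E,\mathrm iy)\ge L(E,\mathrm iy)$, also $|L_n(E,\mathrm iy)-L(E,\mathrm iy)|<CK^{-\sigma}$.

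It remains to pass from $\|M_n\|$ to $f_n$. Since
\[
M_n(E,z+\mathrm iy)=\begin{pmatrix} f_n(E,z+\mathrm iy) & -f_{n-1}(E,z+\mathrm iy+\alpha)\\[3pt] f_{n-1}(E,z+\mathrm iy) & -f_{n-2}(E,z+\mathrm iy+\alpha)\end{pmatrix},
\]
one has $|f_n|\le\|M_n\|\le2\max\{|f_n|,|f_{n-1}|,|f_{n-1}(\cdot+\alpha)|,|f_{n-2}(\cdot+\alpha)|\}$, so the upper half of \eqref{ldtf} is immediate. For the lower half I would rerun the previous step for $\|M_{n-1}\|$ and combine with the identity $f_{n-1}(z)f_{n-1}(z+\alpha)-f_n(z)f_{n-2}(z+\alpha)=\det M_n(z)\equiv1$: at a point where $\|M_n(x+\mathrm iy)\|$ has its typical size $\asymp\mathrm e^{L_n n}$ while $|f_n(x+\mathrm iy)|$ is atypically small, this identity forces $|f_{n-1}(x+\mathrm iy)|$ and $|f_{n-1}(x+\mathrm iy+\alpha)|$ to be comparably large, which the large deviation estimate for $\|M_{n-1}\|$ excludes off a set of measure $\le\mathrm e^{-cK^{\sigma}}$; a last application of Cartan's estimate to $z\mapsto f_n(E,z+\mathrm iy)$ near a point of the good set, where it has size $\asymp\mathrm e^{L_n n}$, controls the residual exceptional set and gives \ref{item:ldt}. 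Integrating \eqref{ldtf}, using \ref{lp} with $p=2$ to dominate the exceptional-set contribution and the bound $|L_n-L|<CK^{-\sigma}$, then yields \ref{item:growth}.

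I expect the main obstacle to be extracting an \emph{exponentially} (not merely polynomially) small exceptional set: the Fourier/harmonic mechanism only gives polynomial gains, so all the exponential decay must come from Cartan's estimate applied to the logarithmic part of $u_n$, which forces a careful matching of the orbit length $N$ against the Diophantine scale $K$ — the Cartan parameter has to be of order $K^{\sigma}$ while the almost-invariance error $O(N/n)$ must remain below $K^{-\sigma}$ under $n>K\delta^{-1}$ — and, for $d\ge2$, honest use of the $d$-variable Cartan sets and their measure bound (Lemma \ref{carmes}). The second delicate point is the transfer from $\|M_n\|$ to $f_n$: ruling out that the zeros of $f_n(\cdot+\mathrm iy)$ make $|f_n|$ small on a set larger than the Cartan exceptional set.
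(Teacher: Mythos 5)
Your reduction of the upper half of \eqref{ldtf} to the norm LDT is fine, and your sketch of the norm LDT itself is essentially Bourgain's argument that the paper simply cites (Theorem \ref{LDT}). The genuine gap is in the step you yourself flag as delicate: passing from $\|M_n\|$ to a \emph{lower} bound on $|f_n|$. The identity $f_{n-1}(z)f_{n-1}(z+\alpha)=1+f_n(z)f_{n-2}(z+\alpha)$ only constrains anything when $|f_n(z)f_{n-2}(z+\alpha)|\lesssim 1$, i.e.\ when $f_n$ is exponentially small in absolute terms (of order $\mathrm e^{-nL_n}$ relative to its typical size). In the deviation regime relevant to \eqref{ldtf}, ``atypically small'' means $|f_n|\le \mathrm e^{n(L_n-K^{-\sigma})}$, so the right-hand side can be as large as $\mathrm e^{2nL_n-nK^{-\sigma}}$ and the identity yields nothing; moreover the conclusion you want to draw --- that $|f_{n-1}(x+\mathrm iy)|$ and $|f_{n-1}(x+\mathrm iy+\alpha)|$ are of size $\asymp\mathrm e^{nL_n}$ --- is not excluded by the LDT for $\|M_{n-1}\|$, since that is exactly the \emph{typical} size of an entry of $M_{n-1}$. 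So your mechanism produces no contradiction, and the subsequent Cartan application has no point $z_0$ with $\log|f_n(z_0)|\ge n(L_n-CK^{-\sigma})$ to anchor it. The paper's substitute is entirely different: it writes $|f_n|\ge\|M_n\|\,|u_n^-\wedge\vec e_1||v_n^+\wedge\vec e_2|-\|M_n\|^{-1}|\cdots|$ via the singular directions, so atypical smallness of $f_n$ forces near-alignment of a singular vector with a coordinate vector; a pigeonhole over three orbit points $x,T^{j_1}x,T^{j_2}x$ plus Lemma \ref{wedgebug} converts two such alignments into smallness of a \emph{short} determinant $f_{j_2-j_1-1}$, which the \L ojasiewicz estimate \eqref{cartan} excludes on a large set (Lemma \ref{three}). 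This yields the average lower bound $\langle\tfrac1n\log|f_n|\rangle>L_n(\mathrm iy)-CK^{-\Cr{bousigh}}$ (Proposition \ref{lower}), and the exponentially small exceptional set then comes from the BMO splitting lemma plus John--Nirenberg applied to $\tfrac1n\log|f_n|$, squeezed between its average and the uniform upper bound of Proposition \ref{upper} --- not from a single Cartan estimate on $f_n$.

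A secondary gap: part \ref{lp} is not ``routine.'' To run Cartan (or any level-set estimate) on $\log|f_l|$ you need a lower bound $m\le\log|f_l(z_0)|$ at some point, i.e.\ a guarantee that $f_l(\cdot+\mathrm iy)$ is not uniformly exponentially small, and this is precisely where the hypothesis $L(E,\mathrm iy)>\gamma$ enters (hence $l_0$ depends on $\gamma$). The paper obtains it by contradiction: if $|f_l|\le\mathrm e^{-\sqrt{C_\star}l}$ everywhere on $\mathbb T^d$ (via Lemma \ref{GS08}), then $\det M_l=1$ forces $f_{l-1}(x)f_{l-1}(x+\alpha)\approx1$, Lemma \ref{closeone} bounds $f_{l-1}$ uniformly, and the three-term recursion bounds $\|M_l\|$ by a constant, contradicting $L>\gamma$. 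Note that this is the regime in which the determinant identity \emph{is} useful --- $f_l$ exponentially small in absolute terms --- which is exactly the regime your part-\ref{item:ldt} argument mistakenly imports into the $K^{-\sigma}$-deviation setting.
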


As direct consequence of Theorem \ref{LDTf}, we can finish the proof of Theorem \ref{converge}:
\begin{proof}[Proof of Theorem \ref{converge}:]
Fix $E\in \mathbb{C}$ and $y\in\mathbb{R}^{d}$ with $|y_{j}|<h(1\leq j\leq  d)$ such that $L(E,\mathrm{i}y)>0$. 
Denote by $\Theta_{K}$ the set in the R.H.S. of \eqref{ldtf}.  By Theorem \ref{LDTf}\ref{item:ldt}, we have $\sum_{K=1}^{\infty} {\rm mes}(\Theta_{K})<\infty$.
Thus by Borel-Cantelli Lemma, ${\rm mes} (\limsup_{K\rightarrow\infty}  \Theta_{K}) =0$.  Moreover if $n>K\delta^{-1}$, we have $\delta\rightarrow 0$ and $n\rightarrow \infty$ as $K\rightarrow \infty$.  Therefore there exists a full measure set $\Theta=\mathbb{T}^{d}\backslash \limsup \Theta_{K}$ depending on $E$ and $y$ such that for any $x\in \Theta$,
\begin{equation*}
	\lim_{K\rightarrow \infty}\frac{1}{n} \log |f_{n}(E,x+\mathrm{i}y)| = \lim_{n\rightarrow \infty}\frac{1}{n} \log |f_{n}(E,x+\mathrm{i}y)| =L(E,\mathrm{i}y).
\end{equation*}
Combining the above with Theorem \ref{LDTf}\ref{item:growth} proves the Theorem \ref{converge}.\end{proof}

Before giving its full proof of Theorem \ref{LDTf},  we first review previous results.
If $\alpha\in\mathrm{SDC}$, Goldstein-Schlag  \cite{GS08}  proved that $\|\log|f_{l}|\|_{L^{p}} \lesssim l^{1+\varepsilon}$ for any $\varepsilon>0$.  Later, Tao-Voda \cite{TV} improved the bound to  $\|\log|f_{l}|\|_{L^{p}} \lesssim l$. So Theorem \ref{LDTf}\ref{lp} extends this $L^{p}$ estimate to rationally independent frequency. This fact will be crucial to establish the large deviation theorem of the determinant for rationally independent frequency.

Let us recall the large deviation theorem by Bourgain \cite{Bo05} where $\alpha\in \mathbb{T}^{d}$ satisfies a certain restricted Diophantine condition assumed in Theorem \ref{LDTf}\ref{item:ldt}.

\begin{theorem}\cite{Bo05}\label{LDT}
		Let $d\in \mathbb{N}^{+}$, $h>0$, $v\in C^{\omega}(\mathbb{T}^{d}_{h},\mathbb{C})$, $E\in\mathbb{C}$. Assume $\alpha\in \mathbb{T}^{d}$ with
		\begin{equation*}
				\|\langle \mathbf{k},\alpha\rangle\|_{\mathbb{T}} > \delta\quad \text{for all} \ 0<|\mathbf{k}|<K.
			\end{equation*}
		Moreover, suppose  that
		\begin{equation*}
				n>K\delta^{-1}.
			\end{equation*}
	There exist $\Cl[sig]{bousig}=\Cr{bousig}(d)$ and $\Cl{bouC}=\Cr{bouC}(E,v,h,d)$ such that
	\begin{equation*}
		{\rm mes}\{x\in \mathbb{T}^{d}:|L_{n}(E,x)-L_{n}(E)|>\Cr{bouC}K^{-\Cr{bousig}}\} <\mathrm{e}^{-\Cr{bouC}K^{\Cr{bousig}}}.
	\end{equation*}
\end{theorem}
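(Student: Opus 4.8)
This is Bourgain's large deviation theorem, so my plan is to reproduce the Bourgain--Goldstein--Schlag scheme. The engine is the subharmonicity of $u_n(z):=\frac1n\log\|M_n(z)\|$: since the entries of $M_n$ are analytic on $\mathbb{T}^d_h$, $u_n$ extends to a function that is separately subharmonic in each complexified variable on every sub-polystrip $\mathbb{T}^d_\rho$ with $\rho<h$, and submultiplicativity of the norm together with $\|S_{E,v}^{\pm1}\|\le \mathrm{e}^{C_0}$ gives a bound $\sup_{\mathbb{T}^d_\rho}|u_n|\le C_0=C_0(E,v,h,\rho)$ uniform in $n$. Applying the Riesz decomposition (Lemma \ref{riesz}) in each variable on a slightly smaller polystrip, one writes $u_n=p_n+h_n$ with $p_n$ a logarithmic potential of a positive measure of total mass $\le C_0$ and $h_n$ harmonic in each variable, hence with Fourier coefficients $|\widehat{h_n}(\mathbf k)|\le C\mathrm{e}^{-\rho'|\mathbf k|}$; standard estimates for logarithmic potentials then give $|\widehat{u_n}(\mathbf k)|\le C/|\mathbf k|$ for $\mathbf k\ne0$, while $\langle u_n\rangle=L_n(E)$.

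Two further ingredients are needed. First, a cheap deviation estimate at a \emph{fixed} scale: boundedness plus subharmonicity put $u_n-\langle u_n\rangle$ in $\mathrm{BMO}(\mathbb{T}^d)$ with norm $\le C_1(E,v,h,d)$, so the John--Nirenberg inequality yields $\epsilon_1,c_1>0$, independent of $n$, with $\mathrm{mes}\{x:|u_n(x)-L_n(E)|>\epsilon_1\}<\mathrm{e}^{-c_1}$ (one is free to shrink $\epsilon_1$). Second, almost-invariance of $u_n$ under the shift: from $M_{n+1}(x)=S_{E,v}(x+n\alpha)M_n(x)=M_n(x+\alpha)S_{E,v}(x)$ and the uniform bound on $S_{E,v}^{\pm1}$ one gets $|u_n(x+\alpha)-u_n(x)|\le 2C_0/n$, hence $|u_n(x+j\alpha)-u_n(x)|\le 2C_0|j|/n$ pointwise; integrating the cocycle relations also gives $L_n(E)\downarrow L(E)$.

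With these in hand, the bootstrap from scale $\epsilon_1$ to scale $K^{-\sigma}$ is carried out through the avalanche principle. Fix $\ell$ large enough that $L_\ell(E)<L(E)+\epsilon_1$ (possible since $L_\ell\to L$) and $\mathrm{e}^{-c\ell}\ll K^{-\sigma}$, while $\ell\ll n$; write $n=m\ell+O(\ell)$ and factor $M_n(x)$ as a product of the blocks $A_i:=M_\ell(x+(i-1)\ell\alpha)$, the leftover $M_{O(\ell)}$ being harmless. On the set where the $A_i$ are large ($\|A_i\|\ge \mathrm{e}^{\ell(L-\epsilon_1)}$) and consecutive products do not collapse ($\|A_{i+1}A_i\|\ge\frac12\|A_{i+1}\|\|A_i\|$, i.e. $M_{2\ell}(x+(i-1)\ell\alpha)$ is large), the avalanche principle gives
\[
 u_n(x)=\frac{2\ell}{n}\sum_{i} u_{2\ell}\big(x+(i-1)\ell\alpha\big)-\frac{\ell}{n}\sum_{i} u_{\ell}\big(x+(i-1)\ell\alpha\big)+O\!\big(\mathrm{e}^{-c\ell}\big),
\]
so $u_n(x)$ is within $o(1)$ of an orbit average of $u_{2\ell}$ and $u_{\ell}$ along $x\mapsto x+\ell\alpha$. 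The sets where these hypotheses (or the scale-$\epsilon_1$ estimate of the previous paragraph) fail have measure $\le\mathrm{e}^{-c_1}$ and, after truncating the Fourier series of $v$ to a degree $\lesssim K$ and controlling the error, are semialgebraic of bounded complexity --- equivalently Cartan sets in the sense of Lemma \ref{car}, with the measure bound of Lemma \ref{carmes}. A quantitative equidistribution statement for the orbit of such a set under a shift satisfying $\|\langle\mathbf k,\alpha\rangle\|_{\mathbb{T}}>\delta$ for $0<|\mathbf k|<K$ --- this is exactly where $n>K\delta^{-1}$ enters --- then shows that, off a set of $x$ of measure $<\mathrm{e}^{-C_2K^{\sigma_2}}$, the orbit $\{x+i\ell\alpha\}_{i<m}$ visits each bad set with the expected frequency; hence the orbit averages above are within $C_2K^{-\sigma_2}$ of the space averages $L_{2\ell}(E),L_\ell(E)$. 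Since integrating the avalanche identity in $x$ yields the same relation between the space averages, one concludes $|u_n(x)-L_n(E)|<C_2K^{-\sigma_2}$ outside a set of measure $<\mathrm{e}^{-C_2K^{\sigma_2}}$, which is the claim. (For $d=1$ the bootstrap collapses to plain Fourier analysis: by almost-invariance $\frac1M\sum_{j<M}u_n(x+j\alpha)$ is $O(M/n)$-close to $u_n(x)$, its Fourier coefficients carry the factor $\min(1,(M\|\langle k,\alpha\rangle\|_{\mathbb{T}})^{-1})$, and since $\sum_{|k|\ge K}|\widehat{u_n}(k)|^2\le C/K$, the restricted Diophantine condition plus Chebyshev finish directly.)

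The main obstacle is precisely this last step for $d\ge2$. The pointwise bound $|\widehat{u_n}(\mathbf k)|\le C/|\mathbf k|$ is not square-summable over $\{|\mathbf k|\ge K\}$ once $d\ge 2$, so the one-dimensional Fourier argument breaks down and one cannot avoid exploiting the geometric structure of the exceptional sets: one must know they are semialgebraic (after polynomial truncation of $v$) of complexity polynomial in $n$, and invoke Bourgain's equidistribution theorem for such sets under Diophantine translations. Setting up that reduction cleanly --- the truncation of $v$, the propagation of the error through the avalanche principle, and the complexity bookkeeping for the semialgebraic sets --- is the technical heart of the argument, and is the reason the Cartan-set apparatus (Lemma \ref{car}, Lemma \ref{carmes}) is recorded among the preliminaries.
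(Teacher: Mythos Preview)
The paper does not prove this statement: Theorem~\ref{LDT} is quoted with the citation \cite{Bo05} and then used as a black box (via Corollary~\ref{ldtexten}) throughout Section~3. There is no argument in the paper to compare your proposal against.

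On the substance of your sketch: the $d=1$ argument you give at the end --- almost-invariance $|u_n(x{+}\alpha)-u_n(x)|\le C/n$, Fourier decay $|\widehat{u_n}(k)|\le C/|k|$ from the Riesz representation, then the restricted Diophantine bound and Chebyshev/John--Nirenberg --- is correct and needs no avalanche principle. For $d\ge 2$, however, note that the theorem as stated carries \emph{no} hypothesis $L(E)>0$. The avalanche principle requires $\min_i\|A_i\|$ to exceed a large absolute threshold; when $L(E)=0$ the blocks $A_i=M_\ell(\cdot)$ need not be large, so the avalanche identity is unavailable and your bootstrap stalls. Bourgain's actual argument in \cite{Bo05} for this estimate does not go through avalanche or semialgebraic equidistribution; it stays within plurisubharmonic-function technology and iterates the one-variable BMO/John--Nirenberg mechanism one coordinate at a time --- precisely the template the present paper reuses later in the proof of Theorem~\ref{LDTf}\ref{item:ldt}. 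The semialgebraic-set equidistribution you describe is the Bourgain--Goldstein localization machinery \cite{BG}, which is a different circle of ideas; and the Cartan-set lemmas in the preliminaries (Lemma~\ref{car}, Lemma~\ref{carmes}) are recorded for the proof of Proposition~\ref{lower}, not for Theorem~\ref{LDT}.
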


Moreover, the large deviation theorem is valid in the complex region $\mathbb{T}^{d}_{h'}$ via replacing $M_{n}(x)$ by $M_{n}(x+{\rm i}y)$  since its proof only involves the subharmonicity. More precisely, we have the following:
\begin{corollary}\label{ldtexten}
	Under the same assumptions as in Theorem \ref{LDT}, for any $h'\in(0,h)$ there exist $\Cl[sig]{bousigh}=\Cr{bousigh}(d)$ and $\Cl{bouCh}=\Cr{bouCh}(E,v,h,h',d)$ such that for any  $|y|<h'(1\leq j\leq d)$, 
	\begin{equation*}
		{\rm mes} \{x\in \mathbb{T}^{d}: |L_{n}(E,x+\mathrm{i}y)-L_{n}(E,{\rm i}y)|>\Cr{bouCh}K^{-\Cr{bousigh}}\} < {\rm e}^{-\Cr{bouCh}K^{\Cr{bousigh}}}.
	\end{equation*}
\end{corollary}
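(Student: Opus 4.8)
The plan is to obtain Corollary~\ref{ldtexten} as a direct consequence of Theorem~\ref{LDT} itself, applied not to the cocycle $(\alpha,S_{E,v})$ but to a horizontally shifted Schr\"odinger cocycle. Fix $y=(y_1,\dots,y_d)$ with $|y_j|<h'$ and set $v_y(x):=v(x+\mathrm{i}y)$. Since $v\in C^\omega(\mathbb{T}^d_h,\mathbb{C})$ and $|y_j|<h'<h$, the function $v_y$ is $2\pi\mathbb{Z}^d$-periodic in $x$ and analytic on the polystrip $\mathbb{T}^d_{h-h'}$; moreover $\mathrm{Im}(x)+y\in(-h,h)^d$ for every $x\in\mathbb{T}^d_{h-h'}$, so that $\|v_y\|_{h-h'}\le\|v\|_h$ holds uniformly over all such $y$. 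A plain substitution in the definition of the transfer matrix gives
\begin{equation*}
	M_n(E,x+\mathrm{i}y)=\prod_{k=n}^{1}S_{E,v}(x+\mathrm{i}y+k\alpha)=\prod_{k=n}^{1}S_{E,v_y}(x+k\alpha),
\end{equation*}
so $M_n(E,\cdot+\mathrm{i}y)$ is exactly the $n$-step transfer matrix of $(\alpha,S_{E,v_y})$, and $L_n(E,\mathrm{i}y)=\langle\tfrac1n\log\|M_n(E,\cdot+\mathrm{i}y)\|\rangle$ is exactly its integrated Lyapunov quantity.

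With this identification I would simply invoke Theorem~\ref{LDT} for the potential $v_y\in C^\omega(\mathbb{T}^d_{h-h'},\mathbb{C})$, i.e.\ with $h$ replaced throughout by $h-h'$. Under the very same Diophantine condition on $\alpha$ for $0<|\mathbf{k}|<K$ and the same range $n>K\delta^{-1}$, this yields an exponent depending only on $d$ and a multiplicative constant depending a priori on $(E,v_y,h-h',d)$. Bourgain's argument reads the potential only through the separate subharmonicity of $z\mapsto\tfrac1n\log\|M_n(E,z)\|$ on the polystrip and through an upper bound for that quantity there; since $\tfrac1n\log\|M_n(E,\cdot+\mathrm{i}y)\|\le\log\!\bigl(C(d)(1+|E|+\|v\|_h)\bigr)$ is controlled by $\|v\|_h$ uniformly in $y$, the exponent and constant depend on $v_y$ (hence on $y$) only through this norm and can therefore be chosen uniformly for $|y_j|<h'$, giving $\Cr{bousigh}=\Cr{bousigh}(d)$ and $\Cr{bouCh}=\Cr{bouCh}(E,v,h,h',d)$. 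Written out, this is precisely the claimed estimate, with $L_n(E,x)$ and $L_n(E)$ replaced by $L_n(E,x+\mathrm{i}y)$ and $L_n(E,\mathrm{i}y)$.

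I do not expect a substantive obstacle: the statement is a bookkeeping reduction, and the only point to verify is the assertion quoted in the text --- that the proof of Theorem~\ref{LDT} is insensitive to the shift $z\mapsto z+\mathrm{i}y$. Concretely one checks that the two structural inputs above (separate subharmonicity, and a uniform upper bound) survive the shift once one passes to the slightly narrower polystrip $\mathbb{T}^d_{h-h'}$, the only price being a new dependence of the constants on $h'$ through the reduced width, with uniformity in $y$ guaranteed by $\|v_y\|_{h-h'}\le\|v\|_h$. Equivalently, bypassing the substitution altogether, one re-runs the proof of Theorem~\ref{LDT} verbatim with $\tfrac1n\log\|M_n(E,\cdot)\|$ replaced by $\tfrac1n\log\|M_n(E,\cdot+\mathrm{i}y)\|$, which is separately subharmonic on $\mathbb{T}^d_{h-h'}$ and bounded above there uniformly in $y$.
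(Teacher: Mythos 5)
Your proposal is correct and is essentially the paper's own justification: the paper proves the corollary by the one-line observation that Bourgain's argument only uses separate subharmonicity and a uniform upper bound of $\tfrac1n\log\|M_n\|$ on the polystrip, both of which survive the shift $z\mapsto z+\mathrm{i}y$ on the narrower strip $\mathbb{T}^d_{h-h'}$. Your substitution $v_y(x)=v(x+\mathrm{i}y)$ with $\|v_y\|_{h-h'}\le\|v\|_h$ is just a clean formalization of that remark and correctly accounts for the uniformity of the constants in $y$.
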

The large deviation theorem  was first established by Bourgain-Goldstein \cite{BG} for $v\in C^{\omega}(\mathbb{T}^{d},\mathbb{R})$, assuming $\alpha$ is Diophantine.
For any rationally independent frequency,  and $d=1$, the result was previously proved by Bourgain-Jitomirskaya \cite{BJ02}. For $d>1$, Powell \cite{Powell} extended \cite{Bo05} to the analytic $M(2,\mathbb{C})$-valued cocycles. However, giving a large deviation theorem for the first entry instead of the norm of $M_{n}$ is more challenging. This was first done by Goldstein-Schlag \cite{GS08}, who proved that for $\alpha\in \mathrm{SDC}$, in the regime of positive Lyapunov exponent, the large deviations estimate extends to the entries of the transfer matrix, i.e., for any sufficiently large $n$,
\begin{equation*}
	{\rm mes} \{x\in \mathbb{T}^{d}: |\log |f_{n}(x)|-nL_{n}(E)|>n^{-c}\} < {\rm e}^{-n^{c}}.
\end{equation*}
Thus, compare to \cite{Bo05, BJ02, Powell}, Theorem \ref{LDTf}\ref{item:ldt} generalizes the large deviation theorem from the transfer matrix to its first entry. And compare to \cite{GS08}, Theorem \ref{LDTf}\ref{item:ldt} remove the assumption of $\alpha\in \mathrm{SDC}$ so that it works for all $\alpha\in\mathbb{T}^{d}$ rationally independent.

\subsection{Proof of Theorem \ref{LDTf}}

\begin{proof}[Proof of Theorem \ref{LDTf}\ref{lp}]
The main difficulty is estimating the measure of $x\in\mathbb{T}^{d}$ that makes $f_{l}(x+\mathrm{i}y)$ close to zero. Thus the \L{}ojasiewicz-type estimate will help us.

The following Lemma \ref{GS08} proved by Goldstein-Schlag in \cite{GS08} relates the supremum of a separately subharmonic function over $\mathbb{T}^{d}_{h}$ to that over $\mathbb{T}^{d}$, which is used to establish the \L{}ojasiewicz-type estimate, see the following \eqref{cartan}.
	
	\begin{lemma}\cite{GS08}\label{GS08}
		Let $u(z_{1},\cdots,z_{d})$ be a separately subharmonic function on $\mathbb{T}_{h}^{d}$ satisfying $\sup_{z\in \mathbb{T}_{h}^{d}} u(z)\leq \mathcal{M}$.
		There exist $C_{d}, C_{h,d}$ such that, if for some $\varrho\in (0,1)$ and some $\mathcal{S}>0$, 
		\begin{equation*}
			{\rm mes} \{x\in\mathbb{T}^{d}:u(x)<-\mathcal{S}\} > \varrho,
		\end{equation*}
		then we have
		\begin{equation*}
			\sup_{x\in \mathbb{T}^{d}} u(x) \leq C_{h,d} \mathcal{M}-\frac{\mathcal{S}}{C_{h,d}\log^{d}(C_{d}/\varrho)}.
		\end{equation*}
	\end{lemma}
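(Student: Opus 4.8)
The plan is to prove Lemma \ref{GS08} by induction on the dimension $d$. Replacing $u$ by $u-\mathcal{M}$ and $\mathcal{S}$ by $\mathcal{S}+\mathcal{M}$, we may assume $\mathcal{M}=0$, so the hypothesis becomes $\sup_{\mathbb{T}_h^d}u\le0$ together with $\mathrm{mes}\{x\in\mathbb{T}^d:u(x)<-\mathcal{S}\}>\varrho$, and the target is $\sup_{\mathbb{T}^d}u\le -\mathcal{S}/(C_{h,d}\log^d(C_d/\varrho))$. The base case $d=1$ carries all the potential-theoretic content; the inductive step is a Fubini-plus-slicing argument that invokes the one-dimensional statement only as a black box.

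\emph{The one-dimensional case.} I would pass from the strip to an annulus via the conformal map $w=\mathrm{e}^{\mathrm{i}z}$, which identifies $\mathbb{T}_h$ with $\mathcal{A}(\mathrm{e}^{-h},\mathrm{e}^{h})$, the real circle $\mathbb{T}$ with the unit circle $\mathbb{S}^1$, and (up to the constant $2\pi$) Lebesgue measure on $\mathbb{T}$ with arclength on $\mathbb{S}^1$. Let $\tilde u$ be the transported function: subharmonic on $\mathcal{A}(\mathrm{e}^{-h},\mathrm{e}^{h})$, $\sup\tilde u\le0$, with $\mathrm{mes}\{w\in\mathbb{S}^1:\tilde u(w)<-\mathcal{S}\}\gtrsim\varrho$. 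Fix $0<h_2<h_1<h$ and apply the Riesz decomposition (Lemma \ref{riesz}) with $\Omega=\mathcal{A}(\mathrm{e}^{-h},\mathrm{e}^{h})$, $\Omega_1=\mathcal{A}(\mathrm{e}^{-h_1},\mathrm{e}^{h_1})$, $\Omega_2=\mathcal{A}(\mathrm{e}^{-h_2},\mathrm{e}^{h_2})$: $\tilde u(w)=\int_{\Omega_1}\log|w-\zeta|\,\mathrm{d}\mu(\zeta)+g(w)$ on $\Omega_1$ with $g$ harmonic. Set $S_1:=-\sup_{\Omega_1}\tilde u$; since the bad set is nonempty, $\tilde u\not\equiv0$, hence $S_1>0$ by the maximum principle (the closure $\overline{\Omega_1}$ is compact in $\Omega$). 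The two quantitative bounds in Lemma \ref{riesz}, using $\sup_{\Omega}\tilde u-\sup_{\Omega_1}\tilde u\le S_1$, read $\mu(\Omega_1)\le C_hS_1$ and $g\ge-(1+C_h)S_1$ on $\Omega_2\supset\mathbb{S}^1$. Hence on the bad subset of $\mathbb{S}^1$ we have $\int_{\Omega_1}\log|w-\zeta|\,\mathrm{d}\mu(\zeta)<-\mathcal{S}+(1+C_h)S_1$. If $(1+C_h)S_1\ge\mathcal{S}/2$, then already $\sup_{\mathbb{S}^1}\tilde u\le\sup_{\Omega_1}\tilde u=-S_1\le-\mathcal{S}/(2(1+C_h))$ and we are done. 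Otherwise the logarithmic potential of $\mu$ is $<-\mathcal{S}/2$ on a subset of $\mathbb{S}^1$ of measure $\gtrsim\varrho$; by Cartan's estimate (the set where the logarithmic potential of a positive measure of total mass $m$ is $<-t$ lies in a union of disks of total radius $\lesssim\mathrm{e}^{-t/m}$ — the mechanism behind Lemma \ref{car}), intersecting those disks with $\mathbb{S}^1$ gives $\varrho\lesssim\mathrm{e}^{-\mathcal{S}/(2\mu(\Omega_1))}\le\mathrm{e}^{-\mathcal{S}/(2C_hS_1)}$. Therefore $S_1\gtrsim\mathcal{S}/\log(C_1/\varrho)$ and $\sup_{\mathbb{T}}u=\sup_{\mathbb{S}^1}\tilde u\le-S_1\le-\mathcal{S}/(C_h\log(C_1/\varrho))$, which is the assertion for $d=1$.

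\emph{The inductive step.} Assume the statement in dimension $d-1$. Writing $z=(z',z_d)\in\mathbb{T}_h^{d-1}\times\mathbb{T}_h$ and applying Fubini to $\mathrm{mes}\{u<-\mathcal{S}\}>\varrho$, there is $G\subset\mathbb{T}^{d-1}$ with $\mathrm{mes}(G)>\varrho/2$ such that for every $x'\in G$ one has $\mathrm{mes}_{\mathbb{T}}\{x_d\in\mathbb{T}:u(x',x_d)<-\mathcal{S}\}>\varrho/2$. For each such $x'$ the slice $u(x',\cdot)$ is subharmonic on $\mathbb{T}_h$ with supremum $\le0$, so the one-dimensional case gives $\sup_{x_d\in\mathbb{T}}u(x',x_d)\le-\mathcal{S}_1$ with $\mathcal{S}_1:=\mathcal{S}/(C_h\log(2C_1/\varrho))$ — crucially a bound uniform in $x_d$. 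Now fix an arbitrary $x_d\in\mathbb{T}$: the function $z'\mapsto u(z',x_d)$ is separately subharmonic on $\mathbb{T}_h^{d-1}$ with supremum $\le0$, and it is $<-\mathcal{S}_1/2$ on all of $G$, a set of measure $>\varrho/2$. The inductive hypothesis yields $\sup_{x'\in\mathbb{T}^{d-1}}u(x',x_d)\le-\dfrac{\mathcal{S}_1/2}{C_{h,d-1}\log^{d-1}(2C_{d-1}/\varrho)}$; since this holds for every $x_d\in\mathbb{T}$, taking the supremum over $x_d$ and inserting the value of $\mathcal{S}_1$ gives $\sup_{\mathbb{T}^d}u\le-\dfrac{\mathcal{S}}{2C_hC_{h,d-1}\,\log(2C_1/\varrho)\,\log^{d-1}(2C_{d-1}/\varrho)}$. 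Choosing $C_d:=\max(2C_1,2C_{d-1})$ makes each logarithm $\le\log(C_d/\varrho)$, so the product is $\le\log^d(C_d/\varrho)$; with $C_{h,d}:=2C_hC_{h,d-1}$ this is precisely the claimed bound.

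\emph{Main obstacle.} The substantive step is the one-dimensional estimate, and inside it the quantitative Cartan bound turning ``the logarithmic potential of $\mu$ is very negative on a circular arc of length $\gtrsim\varrho$'' into ``$\mu(\Omega_1)\gtrsim\mathcal{S}/\log(1/\varrho)$''; this is where the factor $\log$ (and, after iterating, $\log^d$) is produced, and it is the place where the geometry must be handled carefully — namely intersecting Cartan's exceptional disks with the unit circle, and controlling the harmonic remainder $g$ through the bounds of Lemma \ref{riesz}. Everything else is bookkeeping: at each level of the induction the deficit $\mathcal{S}$ is divided by one further logarithmic factor and $\varrho$ by a harmless constant, and one must check that the constants $C_d,C_{h,d}$ close up, which they do. The passage to the annulus serves only to make Lemma \ref{riesz} (stated for plane domains with piecewise $C^1$ boundary) directly applicable on the periodic strip, so that no periodized Green's function is needed.
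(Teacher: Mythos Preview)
The paper does not prove this lemma: it is quoted verbatim from Goldstein--Schlag \cite{GS08} and used as a black box. Your argument is essentially the Goldstein--Schlag proof itself --- Riesz decomposition plus Cartan's lemma for logarithmic potentials in the $d=1$ case, followed by a Fubini/slicing induction --- and it is correct.

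Two small points to tighten. First, the justification that $S_1>0$ relies on upper semicontinuity of $\tilde u$ (so the supremum over the compact $\overline{\Omega_1}$ is attained) together with the strong maximum principle; you state the conclusion but should name both ingredients. Second, the Cartan estimate you invoke --- that the set where a logarithmic potential of a measure of total mass $m$ is $<-t$ is covered by disks of total radius $\lesssim e^{-t/m}$ --- is the classical Cartan lemma for measures (e.g.\ in Levin \cite{Le}), not the version recorded in the paper as Lemma~\ref{car}, which is formulated for $\log|\varphi|$ with $\varphi$ analytic; you should cite the measure version directly rather than describe it as ``the mechanism behind Lemma~\ref{car}''. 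With these clarifications, the argument is complete.
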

	
	To shortern the notation, we define a constant $C_{\star}=C_{\star}(E,v,h,h',d)$ by
	\begin{equation}\label{cstar}
		\sqrt{C_{\star}}:=\max \{\Cr{bouCh}, 4(d+1)C_{E}^{v},C_{h-h',d}^{2}C_{E}^{v} (\log C_{d})^{d}\},
	\end{equation}
	where $C_{d}$ and $C_{h-h',d}$ are defined in Lemma \ref{GS08}, and 
	\begin{equation*}
		C_{E}^{v}:=\log(|E|+\|v\|_{h}+2).
	\end{equation*}
	It is obviously that for any $E\in\mathbb{C}$,
	\begin{equation}\label{uup}
		 \sup_{x\in\mathbb{T}^{d}}L_n(E,x+\mathrm{i}y)\leq C_{E}^{v}. 	
	 \end{equation}
	Let us show that there exists constant
	$l_{0}$ such for any $K\geq1$ and $l\geq l_{0}$,
	\begin{equation}\label{cartan}
		{\rm mes} \{x\in \mathbb{T}^{d}: |f_{l}(x+\mathrm{i}y)|\leq {\rm e}^{-C_{\star}Kl}\} \leq {\rm e}^{-K ^{\frac{1}{d}}}.
	\end{equation}
	Assume that the statement \eqref{cartan} fails for some $l$ and $K$. Since $u(x):=\log |f_{l}(x+\mathrm{i}y)|$ is a separately subharmonic function on $\mathbb{T}^{d}_{h-h'}$, we apply Lemma \ref{GS08} to $u(x)$ with 
	\begin{equation*}
		\mathcal{M}=C_{E}^{v}l,\quad \mathcal{S}=C_{\star}Kl,\quad \varrho={\rm e}^{-K ^{\frac{1}{d}}}.	
	\end{equation*}
	Then by Lemma \ref{GS08} and the choice of $C_{\star}$ in \eqref{cstar}, one can get that
	\begin{equation*}
		\begin{split}
			\sup_{x\in \mathbb{T}^{d}} u(x) &\leq C_{h-h',d}\mathcal{M}-\frac{\mathcal{S}}{C_{h-h',d}\log^{d}(C_{d}/\varrho)}\\
			&\leq \bigg(C_{h-h',d}C_{E}^{v}-\frac{C_{\star}}{C_{h-h',d}(\log C_{d})^{d}}\bigg) l\\
			& \leq -\sqrt{C_{\star}}l,
		\end{split}
	\end{equation*}
	which means $\sup_{\mathbb{T}^{d}} |f_{l}(x+\mathrm{i}y)| \leq \exp(-\sqrt{C_{\star}}l)$. Recall that $\det M_{l}(x+\mathrm{i}y)=1$ with
	\begin{equation*}
		M_{l}(x+\mathrm{i}y)=\begin{pmatrix}
			f_{l}(x+\mathrm{i}y)&-f_{l-1}(x+\mathrm{i}y+\alpha)\\
			f_{l-1}(x+\mathrm{i}y)&-f_{l-2}(x+\mathrm{i}y+\alpha)
		\end{pmatrix}.
	\end{equation*}
	By \eqref{uup}, we have  $\sup_{x\in\mathbb{T}^{d}}\log|f_{l}(x+\mathrm{i}y)|\leq C_{E}^{v} l$, thus 
	\begin{equation*}
		\sup_{x\in \mathbb{T}^{d}} |f_{l-1}(x+\mathrm{i}y)\cdot f_{l-1}(x+\mathrm{i}y+\alpha)-1|\leq \exp(-\sqrt{C_{\star}}l/2),
	\end{equation*}
	where we use $\sqrt{C_{\star}}\geq 4(d+1)C_{E}^{v}$.
	
	To obtain the contradiction, we use the following result.
	\begin{lemma}\label{closeone}\cite{GS08}
		Suppose that $f\in C^{1}(\mathbb{T}^{d},\mathbb{C})$ satisfies $|f|+|\nabla f|\leq \mathcal{H}$ for some $\mathcal{H}\geq 1$. Let $0<\epsilon<\mathcal{H}^{-2(d+1)}$ and assume that 
		\begin{equation*}
			\sup_{x\in \mathbb{T}^{d}} |f(x)f(x+\alpha)-1|\leq \epsilon.
		\end{equation*}
		Then there exists a constant $\Cl{one}=\Cr{one}(d)$ such that
		\begin{equation*}
			\sup_{x\in \mathbb{T}^{d}}|f(x)^{2}-1|\leq \Cr{one} \Big(\epsilon \mathcal{H}^{2(d+1)}\Big)^{\frac{1}{2+d}}.
		\end{equation*}
	\end{lemma}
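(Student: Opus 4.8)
The plan is to split the estimate into a soft gradient–interpolation step (which is what turns an $L^{2}$ bound into an $L^{\infty}$ bound with the exponent $\tfrac1{d+2}$ and powers of $\mathcal{H}$) and an arithmetic step (an $L^{2}$-smallness statement for $f^{2}-1$ that uses ergodicity of the rotation by $\alpha$). Write $\psi:=f(\cdot)f(\cdot+\alpha)-1$, so $\|\psi\|_{L^{\infty}}\le\epsilon$. First a non-degeneracy remark: from $|f(x)f(x+\alpha)-1|\le\epsilon$ and $|f|\le\mathcal{H}$ one gets $|f(x)|\ge(1-\epsilon)/\mathcal{H}$ for every $x$, so $f$ never vanishes, $1/f(x)=f(x+\alpha)+O(\epsilon\mathcal{H})$, and we may use the algebraic identity $f^{2}-1=f\cdot\big(f-f(\cdot+\alpha)\big)+\psi$. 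For the soft step I would use the elementary fact that a $C^{1}$ function $h$ on $\mathbb{T}^{d}$ with $\|\nabla h\|_{L^{\infty}}\le B$ satisfies $\|h\|_{L^{\infty}}\le C_{d}\big(\|h\|_{L^{2}}^{2}B^{d}\big)^{1/(d+2)}$ — because $|h|\ge\tfrac12\|h\|_{L^{\infty}}$ on a ball of radius $\gtrsim\|h\|_{L^{\infty}}/B$ about a maximum point, which one then integrates. Applied to $h=f^{2}-1$, whose gradient is $\le 2\|f\|_{L^\infty}\|\nabla f\|_{L^\infty}\le 2\mathcal{H}^{2}$, this reduces the lemma to an estimate of the shape
\[
\|f^{2}-1\|_{L^{2}(\mathbb{T}^{d})}\ \lesssim\ \sqrt{\epsilon}\ \cdot\ \big(\text{a fixed power of }\mathcal{H}\big),
\]
the precise power being pinned down (and matched against $2(d+1)$) by the next step.

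For that $L^{2}$ bound I would exploit the rotation by $\alpha$. Telescoping the hypothesis along the orbit gives, for $M\epsilon\lesssim1$,
\[
\prod_{m=0}^{2M-1}f(x+m\alpha)=\prod_{j=0}^{M-1}\big(1+\psi(x+2j\alpha)\big)=1+O(M\epsilon)\qquad\text{uniformly in }x ,
\]
and dividing this identity at $x$ by the same one at $x+\alpha$ the product telescopes and shows that $f$ is almost invariant under rotation by $2\alpha$: $\|f-f(\cdot+2M\alpha)\|_{L^{\infty}}\le C\,M\epsilon\,\mathcal{H}$. Averaging over $m<N$, $f$ differs from the Birkhoff average $\tfrac1N\sum_{m<N}f(\cdot+2m\alpha)$ by $O(N\epsilon\mathcal{H})$, and this average converges to the mean $c:=\langle f\rangle$ at a rate one controls on the Fourier side: the low modes through the Diophantine lower bounds on $\|\langle 2k,\alpha\rangle\|_{\mathbb{T}}$, and the high-mode tail $\sum_{|k|>R}|\hat{f}(k)|^{2}\le d\mathcal{H}^{2}R^{-2}$ through $\|\nabla f\|_{L^{\infty}}\le\mathcal{H}$; balancing the two and optimising $N$ bounds $\|f-c\|_{L^{2}}$ by a power of $\epsilon$ times a power of $\mathcal{H}$. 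Finally, since $c^{2}=\langle f(\cdot)f(\cdot+\alpha)\rangle+O(\|f-c\|_{L^{2}}^{2})=1+O(\epsilon)+O(\|f-c\|_{L^{2}}^{2})$, one gets $|c^{2}-1|$ small, and then $\|f^{2}-1\|_{L^{2}}\le\|(f-c)(f+c)\|_{L^{2}}+|c^{2}-1|\le 2\mathcal{H}\,\|f-c\|_{L^{2}}+|c^{2}-1|$ is of the asserted size.

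The main obstacle is precisely the quantitative equidistribution in the hard step: passing from ``$f$ almost $2\alpha$-invariant'' to ``$f$ close to its mean''. A plain Birkhoff average loses a factor governed by how Liouville $\alpha$ is, so the small-divisor estimate at low frequencies must be played off against the $C^{1}$-bound that tames the tail, and this interplay is exactly what forces the powers of $\mathcal{H}$ into the conclusion. It is also the step where the irrationality of $\alpha$ genuinely enters (consistently with the $\mathrm{SDC}$ assumption carried throughout \cite{GS08}): already for the rational $\alpha=3/4$ the function $f(x)=\mathrm{e}^{2\cos(4\pi x)}$ satisfies $f(x)f(x+\alpha)\equiv1$ while $f(x)^{2}\not\equiv1$, so some arithmetic condition on $\alpha$ is indispensable.
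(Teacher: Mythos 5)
First, note that the paper contains no proof of this lemma: it is quoted from \cite{GS08}, so there is no in-text argument to compare you against, and I assess your proposal on its own merits. Your architecture — an $L^{2}\to L^{\infty}$ interpolation via the gradient bound (which is indeed the source of the exponent $\tfrac1{d+2}$), followed by a quantitative-ergodicity estimate for $\|f^{2}-1\|_{L^{2}}$ — is the natural one, and your algebraic reductions (the lower bound $|f|\ge(1-\epsilon)/\mathcal{H}$, the telescoping bound $\|f-f(\cdot+2M\alpha)\|_{\infty}\le CM\epsilon\mathcal{H}$, the decomposition $f^{2}-1=(f-c)(f+c)+(c^{2}-1)$ with $|c^{2}-1|\le\epsilon+\|f-c\|_{L^{2}}^{2}$) are all correct. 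Your closing observation is also correct and important: the statement cannot hold for all $\alpha$ with a constant depending only on $d$. In fact the defect is not cured by restricting to rationally independent $\alpha$: for $\alpha$ Liouville with $q\alpha$ extremely close to an odd multiple of $\pi$, the function $f(x)=\mathrm{e}^{\cos(qx)}$ has $f(x)f(x+\alpha)=\mathrm{e}^{2\cos(qx+q\alpha/2)\cos(q\alpha/2)}$, so $\sup_{x}|f(x)f(x+\alpha)-1|$ can be made arbitrarily small relative to $\mathcal{H}\sim q$, while $\sup_{x}|f^{2}-1|=\mathrm{e}^{2}-1$ stays bounded below. The lemma must be read with the standing Diophantine hypothesis of \cite{GS08} (and with constants depending on that data), a hypothesis the present paper's transcription drops.

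The genuine gap is that the hard step is only a plan, and its natural execution does not yield the stated exponents — which is exactly the bookkeeping you defer with ``the precise power being pinned down by the next step.'' To reach $C_{d}\bigl(\|f^{2}-1\|_{L^{2}}^{2}\mathcal{H}^{2d}\bigr)^{1/(d+2)}\le C_{d}\bigl(\epsilon\mathcal{H}^{2(d+1)}\bigr)^{1/(d+2)}$ you need $\|f^{2}-1\|_{L^{2}}\lesssim\sqrt{\epsilon}\,\mathcal{H}$, hence $\|f-c\|_{L^{2}}\lesssim\sqrt{\epsilon}$. But carrying out your Birkhoff/Fourier balance gives
\[
\|f-c\|_{L^{2}}\ \lesssim\ \mathcal{H}\Bigl(N\epsilon+\frac{R\log^{\tau}R}{\kappa N}+\frac{1}{R}\Bigr),
\]
whose optimum over $N$ and $R$ is of order $\mathcal{H}(\epsilon/\kappa)^{1/3}$ (up to logarithms), independent of $d$; this leads to $\sup|f^{2}-1|\lesssim\epsilon^{2/(3(d+2))}\mathcal{H}^{2}$, a strictly weaker power of $\epsilon$ than $\epsilon^{1/(d+2)}$ and a larger power of $\mathcal{H}$ than $\mathcal{H}^{2(d+1)/(d+2)}$, with constants depending on $(\kappa,\tau)$. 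The route that does match the statement is a sup-norm one: combining $\|f-f(\cdot+2m\alpha)\|_{\infty}\le4m\epsilon\mathcal{H}$ with a discrepancy bound $D_{M}\lesssim M^{-1/d}$ for the orbit of $2\alpha$ yields $\sup|f^{2}-1|\lesssim\mathcal{H}^{2}\epsilon^{1/(d+1)}$, and this converts into $C\bigl(\epsilon\mathcal{H}^{2(d+1)}\bigr)^{1/(d+2)}$ precisely because of the hypothesis $\epsilon<\mathcal{H}^{-2(d+1)}$ — which is the one place that hypothesis is actually needed, and which requires near-optimal equidistribution rather than a generic ``balancing.'' As written, your argument therefore proves a qualitatively similar but quantitatively weaker inequality, under a hypothesis on $\alpha$ that the statement does not supply.
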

	
	Now we apply Lemma \ref{closeone} to $f_{l-1}(\cdot+\mathrm{i}y)$ with 
	\begin{equation*}
		\mathcal{H}= \exp(C_{E}^{v}l),\quad \epsilon=\exp(-\sqrt{C_{\star}}l/2)
	\end{equation*}
	to conclude that $\sup_{x\in \mathbb{T}^{d}} |f_{l-1}(x+\mathrm{i}y)|\leq 2$ for the sufficiently large $l$. Thus the iteration of 
	\begin{equation*}
		f_{l}(x+\mathrm{i}y)=(E-v(x+\mathrm{i}y+l\alpha)) \cdot f_{l-1}(x+\mathrm{i}y)-f_{l-2}(x+\mathrm{i}y)
	\end{equation*}
	implies that $\sup_{x\in \mathbb{T}^{d}} |f_{l-2}(x+\mathrm{i}y)| \leq 1+2(|E|+\|v\|_{h})$, and thus
	\begin{equation*}
		\sup_{x\in \mathbb{T}^{d}} \|M_{l}(x+\mathrm{i}y)\| \leq 6+2(|E|+\|v\|_{h}).
	\end{equation*}
	However, it contradicts to the assumption that $L(E,\mathrm{i}y)>\gamma$. This proves \eqref{cartan}.

	With the help of \eqref{cartan} one can finish the  proof. We decompose $\mathbb{T}^{d}=(\cup_{K\geq 0} \mathcal{Q}_{K}(y))\cup\mathcal{Q}(y)$, where
	\begin{equation*}
		\begin{split}
			&\mathcal{Q}(y) = \{x\in \mathbb{T}^{d}:|f_{l}(x+\mathrm{i}y)|> 1\},\\
			&\mathcal{Q}_{K}(y)=\{x\in\mathbb{T}^{d}: {\rm e}^{-C_{\star}(K+1)l} < |f_{l}(x+\mathrm{i}y)|\leq {\rm e}^{-C_{\star}Kl}\}.
		\end{split}
	\end{equation*}
	If $x\in \mathcal{Q}(y)$, then
	\begin{equation}\label{positive}
		\biggl(\frac{1}{(2\pi)^d}\int_{\mathcal{Q}} \Big|\log|f_{l}(x+\mathrm{i}y)|\Big|^{p} \mathrm{d}x \biggr)^{1/p}\leq \log \|f_{l}(\cdot+\mathrm{i}y)\|_{\infty}\leq C_{E}^{v}l.
	\end{equation}
	If $x\in \mathcal{Q}_{K}(y)$ with $K\geq 1$, by \eqref{cartan}, we have ${\rm mes}(\mathcal{Q}_{K}(y)) \leq {\rm e}^{-K ^{\frac{1}{d}}}$.
	By calculation,
	\begin{equation}\label{negative}
		\sum_{K\geq 0} \frac{1}{(2\pi)^d}\int_{\mathcal{Q}_{K}}\Big|\log |f_{l}(x+\mathrm{i}y)|\Big|^{p} \mathrm{d}x \leq (C_{\star}l)^p+\sum_{ K> 0} {\rm e}^{-K^{\frac{1}{d}}} (C_{\star}(K+1)l)^{p}.
	\end{equation}	 
Hence we finish the proof of Theorem \ref{LDTf}\ref{lp} by combining  $(\ref{positive})$ with $(\ref{negative})$ and letting
	\begin{equation*}
		\Cr{lp}:=C_E^v+C_{\star}+C_{\star}\big(\sum_{ K> 0} {\rm e}^{-K^{\frac{1}{d}}}(K+1)^{p}\big)^{\frac{1}{p}}.
	\end{equation*}
\end{proof}

\begin{proof}[Proof of Theorem \ref{LDTf}\ref{item:ldt}]
To obtain Theorem \ref{LDTf}\ref{item:ldt}, we need to estimate the uniform upper bound and the average lower bound of $\log |f_{n}(\cdot+\mathrm{i}y)|$ respectively. The following Proposition \ref{upper} and Proposition \ref{lower} provide what we need and we will first use these two propositions, and postpone the proofs to subsection 3.2 and subsection 3.3.

	\begin{proposition}\label{upper}
		Under the same assumptions as in Theorem \ref{LDT}, let $h'\in(0,h)$ and $y\in\mathbb{R}^d$ with $|y_j|<h'(1\leq j\leq d)$.
		There exist constants $\Cl[sig]{uppersig}=\Cr{uppersig}(d)$, $\Cl{upperC}=\Cr{upperC}(E,v,h,h',d)$, $\Cl[K]{upperK}=\Cr{upperK}(E,v,h,h',d)$ such that for any $K>\Cr{upperK}$, we have
		\begin{equation*}
			\sup_{x\in \mathbb{T}^{d}}\frac{1}{n} \log \|M_{n}(x+\mathrm{i}y)\| \leq L_{n}(E,\mathrm{i}y) + \Cr{upperC}K^{-\Cr{uppersig}}.
		\end{equation*}
	
	\end{proposition}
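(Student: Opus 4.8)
The plan is to upgrade the measure-theoretic large deviation estimate of Corollary~\ref{ldtexten} to a bound holding at \emph{every} phase, using only the plurisubharmonicity of the transfer matrix norm together with the fact that the same estimate is available for all nearby imaginary parts.

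First I would record the elementary inputs. Set $u(z):=\frac1n\log\|M_n(z)\|$; since $v\in C^{\omega}(\mathbb{T}^{d}_{h},\mathbb{C})$, the map $z\mapsto M_n(z)$ is holomorphic on $\mathbb{T}^{d}_{h}$, so $u$ is plurisubharmonic there, and $\det M_n\equiv 1$ forces $\|M_n\|\ge 1$, hence $0\le u\le C_E^v:=\log(|E|+\|v\|_h+2)$ on $\mathbb{T}^{d}_{h}$. Its fibrewise average over $x$ equals $L_n(E,\mathrm{i}y')$, and by subharmonicity (and periodicity in $x$) the map $y'\mapsto L_n(E,\mathrm{i}y')$ is separately convex on $(-h,h)^{d}$; being also bounded by $C_E^v$, it is Lipschitz on compact subsets of $(-h,h)^{d}$ with a constant $\Lambda=\Lambda(E,v,h,h',d)$ independent of $n$. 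It is essential that this regularity holds for the \emph{average} $L_n$ and not for $u$ itself, whose Lipschitz constant is $n$-dependent and exponentially large.

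The core step is a sub-mean value plus Fubini argument. Fix an intermediate width $h_1:=\tfrac{h+h'}{2}\in(h',h)$ and a radius $r\in(0,\tfrac{h-h'}{2})$, and let $x_0\in\mathbb{T}^{d}$ be arbitrary. Applying the sub-mean value inequality for the plurisubharmonic function $u$ over the polydisk $P:=\prod_{j}\overline{\mathbb{D}(x_{0,j}+\mathrm{i}y_j,r)}\subset\mathbb{T}^{d}_{h_1}$, I would split $P$ into the ``good'' set where $u(\zeta)\le L_n(E,\mathrm{i}\,\mathrm{Im}\,\zeta)+\Cr{bouCh}K^{-\Cr{bousigh}}$ and its complement. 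On the good set, the Lipschitz bound gives $u(\zeta)\le L_n(E,\mathrm{i}y)+\Lambda r+\Cr{bouCh}K^{-\Cr{bousigh}}$ (a nonnegative quantity, since $L_n\ge 0$). For the complement, $u\le C_E^v$, and its $2d$-dimensional Lebesgue measure is estimated by slicing: at each fixed imaginary part $t$ with $|t_j-y_j|<r$ (so $|t_j|<h_1$), Corollary~\ref{ldtexten} — applied with its strip parameter set to $h_1$ — bounds the measure of the bad $x$-slice by $\mathrm{e}^{-\Cr{bouCh}K^{\Cr{bousigh}}}$, whence integrating in $t$ gives complement-measure $\lesssim r^{d}\mathrm{e}^{-\Cr{bouCh}K^{\Cr{bousigh}}}$. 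Dividing by $m(P)=(\pi r^{2})^{d}$ then yields
\begin{equation*}
	\frac1n\log\|M_n(x_0+\mathrm{i}y)\|\le L_n(E,\mathrm{i}y)+\Lambda r+\Cr{bouCh}K^{-\Cr{bousigh}}+C_d\,C_E^v\,r^{-d}\mathrm{e}^{-\Cr{bouCh}K^{\Cr{bousigh}}}.
\end{equation*}
Choosing $r:=K^{-\Cr{bousigh}}$ (legitimate once $K$ exceeds some $\Cr{upperK}(E,v,h,h',d)$ making $r<\tfrac{h-h'}{2}$ and Corollary~\ref{ldtexten} applicable), the first error becomes $\lesssim K^{-\Cr{bousigh}}$ and the last, being $K^{d\Cr{bousigh}}\mathrm{e}^{-\Cr{bouCh}K^{\Cr{bousigh}}}$, is negligible for large $K$. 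Since $x_0$ was arbitrary this proves the proposition with $\Cr{uppersig}:=\Cr{bousigh}$ and $\Cr{upperC},\Cr{upperK}$ depending only on $E,v,h,h',d$.

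The main point requiring care — and the only real obstacle — is uniformity in $y$: because $y$ may be arbitrarily close to $h'$, the radius $r$ must be chosen independently of $y$, which is exactly why I pass through the intermediate strip $h_1=(h+h')/2$ and invoke Corollary~\ref{ldtexten} there rather than at level $h'$. Apart from this, the argument is a routine ``measure estimate $\Rightarrow$ pointwise bound'' upgrade; the only substantive input is the large deviation estimate itself, which is unavoidable since for a general subharmonic function the value at an interior real point is not controlled by the average over that real slice.
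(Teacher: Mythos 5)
Your proposal is correct and follows essentially the same route as the paper: sub-mean value inequality for the (pluri)subharmonic function $\frac1n\log\|M_n\|$ over a polydisk of radius $r\sim K^{-\Cr{bousigh}}$, with the bad set controlled by Corollary~\ref{ldtexten} on nearby horizontal slices and the drift in the imaginary direction controlled by Lipschitz continuity of $y\mapsto L_n(E,\mathrm{i}y)$. The only cosmetic difference is that you derive this Lipschitz bound from separate convexity plus boundedness of $L_n(E,\mathrm{i}\cdot)$, whereas the paper invokes Lemma~\ref{lip} from \cite{GSV16}; both are valid.
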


	\begin{proposition}\label{lower}
			Under the same assumptions as in Theorem \ref{LDT}, let $h'\in(0,h)$ and $y\in\mathbb{R}^{d}$ with $|y_{j}|<h'(1\leq j\leq d)$. Assume $L(E,\mathrm{i}y)>\gamma>0$.
			 There exist constants $\Cl[sig]{lowersig}=\Cr{lowersig}(d)$, $\Cl{lowerC}=\Cr{lowerC}(E,v,h,h',d)$, $\Cl[K]{lowerK}=\Cr{lowerK}(E,v,\gamma,h,h',d)$ such that 	for any $K> \Cr{lowerK}$, we have
		\begin{equation*}
			\frac{1}{(2\pi)^d}\int_{\mathbb{T}^{d}}\frac{1}{n} \log |f_{n}(x+\mathrm{i}y)| \mathrm{d}x>L_{n}(E,\mathrm{i}y)-\Cr{lowerC}K^{-\Cr{lowersig}}.
		\end{equation*}

	\end{proposition}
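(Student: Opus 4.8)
\emph{Setup.} The plan is to prove the averaged lower bound by comparing the first entry $f_n$ of the transfer matrix with the full matrix norm $\|M_n\|$, for which a sharp lower bound on a large set is available from the matrix large deviation theorem (Corollary \ref{ldtexten}) through the avalanche principle. Throughout write $a_n:=\tfrac1{(2\pi)^d}\int_{\mathbb T^d}\log|f_n(x+\mathrm{i}y)|\,\mathrm{d}x$. Since $|f_n(x+\mathrm{i}y)|\le\|M_n(x+\mathrm{i}y)\|$, Proposition \ref{upper} already supplies the matching pointwise upper bound $\sup_x\tfrac1n\log|f_n(x+\mathrm{i}y)|\le L_n(E,\mathrm{i}y)+CK^{-\sigma}$, and likewise $\log|f_{n-1}(x+\mathrm{i}y)|\le nL_n(E,\mathrm{i}y)+CnK^{-\sigma}$ for all $x$ (using $(n-1)L_{n-1}\le nL_n+C_{E}^{v}$); what is missing is only the lower bound $a_n\ge nL_n(E,\mathrm{i}y)-CnK^{-\sigma}$. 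The starting observation is $M_n(x+\mathrm{i}y)e_1=(f_n(x+\mathrm{i}y),f_{n-1}(x+\mathrm{i}y))^{\mathrm{T}}$, whence $\tfrac1{\sqrt2}\|M_n(x+\mathrm{i}y)e_1\|\le\max(|f_n(x+\mathrm{i}y)|,|f_{n-1}(x+\mathrm{i}y)|)$, together with the uniform two-sided bound $\mathrm{e}^{-C_{E}^{v}n}\le\|M_n(x+\mathrm{i}y)e_1\|\le\mathrm{e}^{C_{E}^{v}n}$.

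\emph{Step 1: the averaged lower bound for $\|M_n(\cdot+\mathrm{i}y)e_1\|$.} Fix an intermediate scale $l\asymp K\delta^{-1}$ (so that $l\ge l_0$ and $l\le n$) and write $M_n(x+\mathrm{i}y)=M_r(x+\mathrm{i}y+(n-r)\alpha)\,A_m(x)\cdots A_1(x)$ with $A_j(x):=M_l(x+\mathrm{i}y+(j-1)l\alpha)$ and $0\le r<l$. By Corollary \ref{ldtexten} at scales $l$ and $2l$ (applied at the $m\le n/l$ shifts) together with the uniform upper bound of Proposition \ref{upper} at scale $l$, there is a set $\mathcal X\subset\mathbb T^d$ of nearly full measure on which every $A_j(x)$ is expanding at rate close to $L_l(E,\mathrm{i}y)$ and no consecutive product $A_{j+1}(x)A_j(x)$ collapses, i.e.\ the hypotheses of the avalanche principle hold with $\mu=\mathrm{e}^{l(L_l(E,\mathrm{i}y)-CK^{-\sigma})}$ and $\kappa=\mathrm{e}^{CK^{-\sigma}l}\ll\mu$ (one may allow a small proportion of defective blocks, which does not affect the conclusion). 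Crucially, the \L{}ojasiewicz‑type estimate \eqref{cartan} applied at scale $l$ with resolution a small power of $K$ gives, off a further set of small measure, $|f_l(x+\mathrm{i}y)|>\mathrm{e}^{-C_\star K^{d\sigma}l}$, hence $\|A_1(x)e_1\|/\|A_1(x)\|\ge\mathrm{e}^{-(C_\star K^{d\sigma}+C_{E}^{v})l}$; this controls the alignment of the fixed vector $e_1$ with the expanding direction at the first block. Feeding all of this into the vector form of the avalanche principle yields, for $x$ also in the large deviation set of $\|M_n(\cdot+\mathrm{i}y)\|$,
\begin{equation*}
\frac1n\log\|M_n(x+\mathrm{i}y)e_1\|\ \ge\ \frac1n\log\|M_n(x+\mathrm{i}y)\|-\frac{(C_\star K^{d\sigma}+C_{E}^{v})l}{n}-\frac{Cm\kappa}{n\mu}\ \ge\ L_n(E,\mathrm{i}y)-CK^{-\sigma},
\end{equation*}
the last error being negligible and the middle one $\lesssim K^{-\sigma}$ for the chosen $l$ provided $n$ is large compared with $K\delta^{-1}$ — when only $n>K\delta^{-1}$ is assumed one first renames $K$ by a small power of itself, the slack being absorbed into $\Cr{lowersig}$. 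Integrating over $\mathcal X$ and using $|\log\|M_n(\cdot+\mathrm{i}y)e_1\||\le C_{E}^{v}n$ pointwise on the small complement gives the claimed averaged bound, and therefore
\begin{equation*}
\big\langle\max\big(\log|f_n(\cdot+\mathrm{i}y)|,\ \log|f_{n-1}(\cdot+\mathrm{i}y)|\big)\big\rangle\ \ge\ nL_n(E,\mathrm{i}y)-CnK^{-\sigma}.
\end{equation*}

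\emph{Step 2: from the maximum of two determinants to $f_n$ (the main obstacle).} It remains to show $\langle(\log|f_{n-1}(\cdot+\mathrm{i}y)|-\log|f_n(\cdot+\mathrm{i}y)|)^{+}\rangle\le CnK^{-\sigma}$, after which $a_n=\langle\max(\log|f_n|,\log|f_{n-1}|)\rangle-\langle(\log|f_{n-1}|-\log|f_n|)^{+}\rangle$ gives the result; equivalently one must control the measure and the logarithmic size of the set where the analytic function $x\mapsto f_n(x+\mathrm{i}y)$ — which has only $O(n)$ zeros in a fundamental domain — is anomalously small. I would do this by combining: (i) the maximum bound of Step 1 with the pointwise upper bounds $\log|f_n(x+\mathrm{i}y)|,\ \log|f_{n-1}(x+\mathrm{i}y)|\le nL_n(E,\mathrm{i}y)+CnK^{-\sigma}$, which by the elementary Chebyshev argument for functions with controlled average and supremum forces $\max(\log|f_n|,\log|f_{n-1}|)\ge nL_n(E,\mathrm{i}y)-CnK^{-\sigma}$ off a set of small (polynomial in $K^{-1}$) measure; (ii) the \L{}ojasiewicz bound \eqref{cartan} at scale $n$, giving $|f_n(x+\mathrm{i}y)|>\mathrm{e}^{-C_\star K_1 n}$ off a set of measure $\mathrm{e}^{-K_1^{1/d}}$ for any resolution $K_1$; (iii) the three‑term recursion $f_n=(E-v(\cdot+n\alpha))f_{n-1}-f_{n-2}$, equivalently the determinant identity $f_n(x)f_{n-2}(x+\alpha)=f_{n-1}(x)f_{n-1}(x+\alpha)-1$, which pins $|f_n(x+\mathrm{i}y)|$ small precisely to the set where $E-v(x+\mathrm{i}y+n\alpha)$ is close to $f_{n-2}(x+\mathrm{i}y)/f_{n-1}(x+\mathrm{i}y)$, whose variation over $\mathbb T^d$ can be quantified; and (iv) the $L^p$ bound of Theorem \ref{LDTf}\ref{lp} to render the integral over the exceptional sets negligible. (An equivalent route is to run the entire argument by induction on the scale: \eqref{cartan} gives the base case at scale $\asymp K\delta^{-1}$ and the avalanche principle propagates the sharp lower bound upward, the error at scale $n$ being bounded by the error at the intermediate scale times $l/n$ plus the matrix large deviation error $K^{-\sigma}$ — which is again why one wants $n$ large relative to $K\delta^{-1}$ and why there is flexibility in $\Cr{lowersig}$.) The heart of the matter, and the only place where the restriction to special Diophantine frequencies in earlier work is lifted, is this quantitative control of where $f_n(\cdot+\mathrm{i}y)$ is small, which rests entirely on the estimate \eqref{cartan} established in the proof of Theorem \ref{LDTf}\ref{lp}.
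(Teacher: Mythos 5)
Your Step 2 is where the real difficulty of this proposition lives, and it is not closed. The identity $\langle\log|f_n|\rangle=\langle\max(\log|f_n|,\log|f_{n-1}|)\rangle-\langle(\log|f_{n-1}|-\log|f_n|)^{+}\rangle$ together with the pointwise upper bound on $\log|f_{n-1}|$ reduces your task to showing $\langle(nL_n(E,\mathrm{i}y)+CnK^{-\sigma}-\log|f_n|)^{+}\rangle\le CnK^{-\sigma}$ --- but that is (modulo the upper bound) exactly the statement being proved, so the reduction is circular unless you supply an \emph{independent} control of the set where $f_n$ alone is anomalously small. The ingredients you list do not supply it: the \L{}ojasiewicz estimate \eqref{cartan} at depth $\mathrm{e}^{-C_{\star}K_1 n}$ bounds only the measure of $\{\log|f_n|\le -C_{\star}K_1n\}$ and says nothing about the measure of the set where $nL_n-\varepsilon n>\log|f_n|>-C_{\star}K_1n$; summing over the resolution $K_1$ as in the proof of part (1) yields only the $O(n)$ bound $\langle\log|f_n|\rangle\ge nL_n-Cn$, not $nL_n-CnK^{-\sigma}$. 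Your item (iii) (smallness of $f_n$ pins $E-v$ near $f_{n-2}/f_{n-1}$) is not quantified, and there is no a priori transversality making that set small. What the paper actually uses is Lemma \ref{three}: if $f_n$ is small while $\|M_n\|$ is large at a point, then either the contracting direction $u_n^{-}$ aligns with $\vec e_1$ or the expanding image $v_n^{+}$ aligns with $\vec e_2$; by pigeonhole, if this happens at two of three well-separated orbit points $x,T^{j_1}x,T^{j_2}x$, a short-scale determinant $f_{j_2-j_1-1}$ becomes uniformly tiny, contradicting $L>\gamma$ via \eqref{cartan} and Lemma \ref{closeone}. Hence along the window $\{T^{kl_0}x\}_{k\le R}$ with $R\approx K^{(1-\Cr{bousigh})/2}/l_0$ at most two translates are bad; averaging over this window, bounding the two bad terms from below by Cartan's estimate (Lemma \ref{car}), and disposing of the exceptional sets with the $L^2$ bound gives the sharp average lower bound. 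This mechanism is absent from your proposal.

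Your Step 1 also has a problem specific to the generality claimed here. The avalanche principle over $m=n/l$ blocks needs the matrix LDT to hold at all $m$ shifted base points simultaneously, and the union bound costs $(n/l)\,\mathrm{e}^{-cK^{\sigma}}$. Since only $n>K\delta^{-1}$ is assumed, $n$ may be arbitrarily large relative to $K$ and this bound exceeds $1$; ``allowing a small proportion of defective blocks'' does not rescue the argument, since a single collapsing pair breaks the alignment chain that the avalanche principle propagates. This is precisely why the paper proves Proposition \ref{upper} by the submean value property and Proposition \ref{lower} by the three-shifts lemma with an orbit average over a window of length only $K^{(1-\Cr{bousigh})/2}$ (not $n/l$), avoiding the avalanche principle altogether for arbitrary rationally independent $\alpha$.
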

	
	Now we prove Theorem \ref{LDTf}\ref{item:ldt}. For simplicity, we shall set $d=2$. Fix $|y_{1}|<h',|y_{2}|<h'$ and let $\displaystyle u(x_1,x_2)=\frac{1}{n}\log |f_{n}(x_1+\mathrm{i}y_1,x_2+\mathrm{i}y_2)|$. 
	It follows from Proposition \ref{upper} and Proposition \ref{lower} that for any $K>\max\{\Cr{upperK},\Cr{lowerK}\}$,
	\begin{equation*}
		\begin{cases}
			\langle u\rangle >L_{n}(E,\mathrm{i}y)-\Cr{lowerC}K^{-\Cr{lowersig}}, \\ 
			\sup_{\mathbb{T}^{2}} u<L_{n}(E,\mathrm{i}y) +\Cr{upperC}K^{-\Cr{uppersig}}.
		\end{cases}
	\end{equation*}
	To use tools from one-variable subharmonic functions, we let $\displaystyle v(x_1)=\frac{1}{2\pi}\int_{\mathbb{T}} u(x_1,x_2) \mathrm{d}x_2$
	and choose $\Cl[sig]{minsig}=\min\{\Cr{uppersig},\Cr{lowersig}\}$. Then we have
	\begin{equation*}
		\begin{cases}
			\langle v\rangle >L_{n}(E,\mathrm{i}y)-\Cr{lowerC}K^{-\Cr{minsig}}, \\ 
			\sup_{\mathbb{T}} v<L_{n}(E,\mathrm{i}y) +\Cr{upperC}K^{-\Cr{minsig}}.
		\end{cases}
	\end{equation*}
	Denote $\mathcal{P}=\{x_{1}\in \mathbb{T} :v(x_{1})> \langle v\rangle\}$. Then by using 
		$\displaystyle\int_{\mathbb{T}} (v-\langle v\rangle)\mathrm{d}x_{1}=0$, we have
	\begin{equation}\label{L1}
		\begin{split}
			\|v-\langle v\rangle\|_{L^{1}(\mathbb{T})}&=\frac{1}{2\pi}\int_{\mathcal{P}}|v-\langle v\rangle|\mathrm{d}x_{1}  +\frac{1}{2\pi}\int_{\mathbb{T}\backslash \mathcal{P}} |v-\langle v\rangle|\mathrm{d}x_{1}  \\
			&\leq\int_{\mathcal{P}}|v-\langle v\rangle|\mathrm{d}x_{1} \leq \sup_{x_{1}\in \mathbb{T}} v(x_{1})-\langle v\rangle<(\Cr{upperC}+\Cr{lowerC})K^{-\Cr{minsig}}.
		\end{split}
	\end{equation}
	
	To apply the John-Nirenberg inequality, we need to use the following lemma to control the $\mathrm{BMO}$ norm of subharmonic functions.
	
	\begin{lemma}\label{split}\cite{BoGS}
		Suppose that $\varphi$ is subharmonic on $\mathbb{T}_{h}$ and has Riesz decomposition as \eqref{decom} on $\mathbb{T}_{h/2}$ with 
		\begin{equation}\label{rie}
			\mu(\mathbb{T}_{h/2})+\|g\|_{L^{\infty}(\mathbb{T}_{h/4})}<\infty.
		\end{equation}
		Then
		\begin{equation*}
			\|\varphi\|_{\mathrm{BMO}(\mathbb{T})} \leq C_{h} \|\varphi-\langle \varphi\rangle\|^{1/2}_{L^{1}(\mathbb{T})},
		\end{equation*}
		where	\begin{equation*}
			\|\varphi\|_{\mathrm{BMO}(\mathbb{T})}:=\sup_{I\subset \mathbb{T}} \frac{1}{|I|} \int_{I} |\varphi-\langle \varphi\rangle_{I}| \mathrm{d}x, \ \text{with} \ \langle \varphi\rangle_{I}=\frac{1}{|I|} \int_{I}\varphi(x)\mathrm{d}x.
		\end{equation*}
	\end{lemma}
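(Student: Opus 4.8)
The plan is to run a scale dichotomy over subintervals $I\subset\mathbb{T}$, using the Riesz decomposition of $\varphi$ to handle separately a logarithmic potential part and a harmonic part; the square root will emerge from the usual interpolation between $L^1$ and Lipschitz control at a scale of order $\|\varphi-\langle\varphi\rangle\|_{L^1(\mathbb{T})}^{1/2}$. We may assume $\langle\varphi\rangle=0$, since the $\mathrm{BMO}$ seminorm is unaffected by additive constants, and we write $D:=\|\varphi\|_{L^1(\mathbb{T})}$; the constants below are allowed to depend on $h$ and on the quantities $M:=\mu(\mathbb{T}_{h/2})$ and $G:=\|g\|_{L^\infty(\mathbb{T}_{h/4})}$, which are finite by \eqref{rie}. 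By Lemma~\ref{riesz} applied with $\Omega=\mathbb{T}_h$ and $\Omega_1=\mathbb{T}_{h/2}$, on $\mathbb{T}_{h/2}$ we have the decomposition $\varphi=p+g$ of \eqref{decom}, with $p(z)=\int_{\mathbb{T}_{h/2}}\log|z-\zeta|\,\mathrm{d}\mu(\zeta)$ and $g$ harmonic.

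First I would set down the two building blocks. The harmonic part is Lipschitz on $\mathbb{T}$: interior gradient estimates for harmonic functions give $\|\nabla g\|_{L^\infty(\mathbb{T})}\lesssim G$, hence $\operatorname{osc}_I g\lesssim G|I|$ for any interval $I$. For the potential, the point is that for every $\zeta$ with $|\operatorname{Im}\zeta|<h/2$ the slice $x\mapsto\log|x-\zeta|$ is the prototypical $\mathrm{BMO}(\mathbb{T})$ function, with seminorm $\le C_0$ for an absolute $C_0$ uniformly in $\zeta$ (the imaginary shift only mollifies it); by the triangle inequality for the $\mathrm{BMO}$ seminorm and Fubini, any piece $\mu|_{\mathbb{D}}$ contributes at most $C_0\,\mu(\mathbb{D})$ to $\tfrac1{|I|}\int_I|p-\langle p\rangle_I|$, while masses at distance $\ge 2|I|$ from the centre of $I$ contribute at most $|I|\int\operatorname{dist}(\zeta,I)^{-1}\,\mathrm{d}\mu(\zeta)$. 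Already this yields the crude uniform bound $\|\varphi\|_{\mathrm{BMO}(\mathbb{T})}\lesssim M+G$.

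For the quantitative bound I would fix the threshold $\rho\asymp D^{1/2}$. If $|I|\gtrsim\rho$, then directly $\tfrac1{|I|}\int_I|\varphi-\langle\varphi\rangle_I|\le\tfrac2{|I|}\int_\mathbb{T}|\varphi|=\tfrac{2D}{|I|}\lesssim D^{1/2}$. If $|I|<\rho$, one centres a disk $\mathbb{D}=\mathbb{D}(x_I,c\rho)\subset\mathbb{T}_{h/4}$ on $I$: by the Lipschitz bound the $g$-contribution is $\lesssim G\rho\lesssim D^{1/2}$; by the $\mathrm{BMO}$ triangle inequality the near masses contribute $\lesssim\mu(\mathbb{D}(x_I,c\rho))$; and the far masses $\mu|_{\mathbb{D}^c}$ contribute, after a dyadic decomposition in the distance, an amount governed by $\mu(\mathbb{D}(x_I,r))$ at the scales $r\gtrsim\rho$. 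So the entire estimate is reduced to a bound of the form $\mu(\mathbb{D}(x_0,r))\lesssim D/r+rM$ for $x_0\in\mathbb{T}$, which at $r\asymp\rho$ gives $\mu(\mathbb{D}(x_0,\rho))\lesssim D^{1/2}$ (modulo at most a logarithmic factor from the dyadic sum in the far part, absorbable into the choice of $\rho$). I would prove this local-mass bound from the Riesz/Jensen identity $\int_0^r s^{-1}\mu(\mathbb{D}(x_0,s))\,\mathrm{d}s=A_\varphi(x_0,r)-\varphi(x_0)$, where $A_\varphi(x_0,r)$ is the average of $\varphi$ over $\partial\mathbb{D}(x_0,r)$: comparing radii $r$ and $2r$ bounds $\mu(\mathbb{D}(x_0,r))$ by $\sup_{s\le 2r}|A_\varphi(x_0,s)|$, and then averaging over $x_0\in\mathbb{T}$ — via Fubini the $x_0$-mean of $\mu(\mathbb{D}(x_0,r))$ is $\lesssim r\,\mu(\{|\operatorname{Im}\zeta|<r\})$ — together with the slice estimate $\|\varphi(\cdot+\mathrm{i}y)\|_{L^1(\mathbb{T})}\lesssim D+M$ for $|y|<h/4$, which itself follows from the convexity of $y\mapsto\int_\mathbb{T}\varphi^{\pm}(\cdot+\mathrm{i}y)\,\mathrm{d}x$ and the sup bounds in Lemma~\ref{riesz}, gives the claim.

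The hard part is exactly this reduction: converting one-dimensional smallness of $\varphi$ along $\mathbb{T}$ into a two-dimensional bound on the Riesz mass $\mu$ at the scale $\rho$. It is here that both the square-root loss and the dependence of the constant on $M$ and $G$ are produced, and it relies on the radial structure of $\mu$ through the Jensen formula rather than on any soft compactness argument; the remaining ingredients (the Lipschitz estimate for $g$, the uniform $\mathrm{BMO}$ bound for the logarithmic potential, and the large-interval case) are routine. Finally, the resulting $\mathrm{BMO}$ estimate is what feeds the John--Nirenberg inequality to upgrade the $L^1$ bound on $v-\langle v\rangle$ obtained in \eqref{L1} into the exponential measure estimate of the large deviation theorem.
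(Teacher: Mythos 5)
The paper itself does not prove this lemma: it quotes it from \cite{BoGS} and only adds the remark that the boundedness hypothesis there can be relaxed to \eqref{rie}. So you are supplying an argument the authors deliberately omit. Your architecture (dichotomy at a scale $\rho\sim D^{1/2}$, Riesz splitting into logarithmic potential plus harmonic part, the uniform $\mathrm{BMO}$ bound for $\log|x-\zeta|$ on nearby mass, gradient bounds for the harmonic part and the far mass) is the natural one and correctly locates where the square root must come from. But there is a genuine gap at exactly the step you single out as the hard part: the \emph{pointwise} local-mass bound $\mu(\mathbb{D}(x_0,r))\lesssim D/r+rM$ for every $x_0\in\mathbb{T}$. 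Since the $\mathrm{BMO}$ seminorm is a supremum over all intervals, you need this for the worst-case centre $x_0=x_I$, whereas your derivation only yields an averaged statement. Averaging the Jensen identity over $x_0$ and using Fubini with the slice estimates gives $\int_{\mathbb{T}}\mu(\mathbb{D}(x_0,r))\,\mathrm{d}x_0\lesssim D+r(M+G)$, and converting an averaged bound into a pointwise one costs a factor $1/r$, leaving only $\mu(\mathbb{D}(x_0,r))\lesssim D/r+(M+G)$. The purely pointwise route fares no better: in $\mu(\mathbb{D}(x_0,r))\log 2\le A_\varphi(x_0,2r)-\varphi(x_0)$ the term $-\varphi(x_0)$ can be handled by moving to a Chebyshev-good point at cost $O(D/r)$, but $A_\varphi(x_0,2r)$ admits no pointwise bound better than $\sup_{\mathbb{T}_{h/4}}\varphi\lesssim M+G$; the crucial factor of $r$ in front of $M+G$ is exactly what is missing. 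The additive constant $(M+G)$, which does not vanish as $D\to 0$, is fatal: the near-mass contribution $C_0\,\mu(\mathbb{D}(x_I,c\rho))$ to the mean oscillation over a short interval is then bounded only by a fixed constant, not by $C D^{1/2}$. Excluding a non-small amount of Riesz mass hiding at distance $\ll\rho$ from $\mathbb{T}$ when $\|\varphi-\langle\varphi\rangle\|_{L^1(\mathbb{T})}$ is small is precisely the content of the splitting lemma, and it needs an input beyond the Jensen-plus-averaging scheme you sketch.

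Two smaller points. First, even granting your mass bound, the logarithm produced by the dyadic far-field sum is not ``absorbable into the choice of $\rho$'': the competing terms are $D/\rho$ from large intervals and $M\rho\log(h/\rho)$ from the far field, and any choice of $\rho$ leaves a factor $\sqrt{\log(1/D)}$ in the final estimate, so the stated inequality with exponent exactly $1/2$ would not be reached. Second, your slice estimate $\|\varphi(\cdot+\mathrm{i}y)\|_{L^1(\mathbb{T})}\lesssim D+|y|(M+G)$ is correct (the map $y\mapsto\log|\cdot-\mathrm{i}y-\zeta|$ is Lipschitz in $L^1(\mathbb{T})$ uniformly in $\zeta$, with constant $\pi$), and it is a useful ingredient; the failure is only in the passage from slice/average information to a bound at the worst-case centre. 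To repair the proof you must either prove the local mass bound in its pointwise form by a different mechanism, or restructure the estimate so that the near mass is never bounded by $C_0\mu(\mathbb{D})$ alone.
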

	\begin{remark}
		Lemma \ref{split} was proved essentially in \cite{BoGS} named by ``$\mathrm{BMO}$ splitting lemma'' under the assumption of boundedness of the subharmonic function. However, all that was used in the proof of that lemma was assuming \eqref{rie}.
	\end{remark}

	Now we check $v$ satisfying the conditions in Lemma \ref{split}. Since $v$ is subharmonic on $\mathbb{T}_{h-h'}$, by Lemma \ref{riesz}, one can decompose 
	\begin{equation*}
		v(z)=\int_{\mathbb{T}_{(h-h')/2}} \log |z-\zeta|\mathrm{d}\mu(\zeta) +g(z),
	\end{equation*}
	with the estimate
	\begin{equation*}
		\mu(\mathbb{T}_{(h-h')/2})+\|g\|_{L^{\infty}(\mathbb{T}_{(h-h')/4})} \leq \Cl{riesz}(h,h')(2\sup_{z\in\mathbb{T}_{h-h'}} v(z)-\langle v\rangle).
	\end{equation*}
	Hence by Lemma \ref{split} and \eqref{L1}, there exists constant $\Cl{bmo}=\Cr{bmo}(E,v,h,h',d)$ such that for any $K>\max\{\Cr{upperK}, \Cr{lowerK}\}$,
	\begin{equation}\label{vbmo}
		\|v\|_{\mathrm{BMO}(\mathbb{T})} \leq C_{h-h'} \|v-\langle v\rangle\|^{1/2}_{L^{1}(\mathbb{T})}\leq \Cr{bmo} K^{-\frac{\Cr{minsig}}{2}}.
	\end{equation}
	Recall the John-Nirenberg inequality that for any $f\in \mathrm{BMO}(\mathbb{T})$ there exist constants $\check{C},\check{c}>0$ (independent of $f$) such that
	\begin{equation*}
		{\rm mes} \{x\in \mathbb{T}: |f-\langle f\rangle|>\varepsilon\}<\check{C}\mathrm{e}^{-\check{c}\frac{\varepsilon}{\|f\|_{\mathrm{BMO}}}}.
	\end{equation*}
	Combining \eqref{vbmo} with John-Nirenberg inequality gives
	\begin{equation}\label{vldt}
		{\rm mes}\{x\in\mathbb{T}: |v-\langle v\rangle|>K^{-\frac{\Cr{minsig}}{4}}\}<\check{C}\mathrm{e}^{-\Cl{JN}K^{\frac{\Cr{minsig}}{4}}}
	\end{equation}
	for some constant $\Cr{JN}=\Cr{JN}(E,v,h,h',d)$.
	
	Let
$
		\mathcal{V}=\{x_1\in \mathbb{T}: v(x_1)>L_{n}(E,\mathrm{i}y)-K^{-\frac{{\Cr{minsig}}}{4}}\}.
$
	Then ${\rm mes}(\mathbb{T}\backslash \mathcal{V})< \check{C}\exp({-\Cr{JN}K^{\frac{\Cr{minsig}}{4}}})$ according to \eqref{vldt}.
	To study the second variable of $u(x_1,x_2)$, we fix some $x_1\in \mathcal{V}$ and let $w(x_2)=u(x_1,x_2)$. Then there exists $\Cl[K]{wK}=\Cr{wK}(E,v,\gamma,h,h',d)$ such that for any $K>\Cr{wK}$,
	\begin{equation*}
		\begin{cases}
			\langle w\rangle >L_{n}(E,\mathrm{i}y)-K^{-\frac{\Cr{minsig}}{4}}, \\ 
			\sup_{\mathbb{T}} w<L_{n}(E,\mathrm{i}y) +\Cr{upperC}K^{-\Cr{minsig}}<L_{n}(E,\mathrm{i}y) +K^{-\frac{\Cr{minsig}}{4}}.
		\end{cases}
	\end{equation*}
	As similar as \eqref{L1}, one can use Lemma \ref{riesz}, Lemma \ref{split} and John-Nirenberg inequality again to get that for any $x_1\in \mathcal{V}$, 
	\begin{equation*}
		{\rm mes}\{x_2\in \mathbb{T}: |w-\langle w\rangle|>K^{-\frac{\Cr{minsig}}{16}}\} <\check{C}\mathrm{e}^{-\Cl{JN2}K^{\frac{\Cr{minsig}}{16}}}
	\end{equation*}
	for some constant $\Cr{JN2}=\Cr{JN2}(E,v,h,h',d)$.
	
Denote $\mathcal{R}_{x_1}=\{x_2\in \mathbb{T}: |w-\langle w\rangle|> K^{-\frac{\Cr{minsig}}{16}}\}$ and let $	\Cl[K]{GK}=\max\{\Cr{upperK}, \Cr{lowerK}, \Cr{wK}\}, \Cl[sig]{mainsig}=\frac{\Cr{minsig}}{16}$. Denote by $\Theta_{K}$ the set in \eqref{ldtf}. Then by Fubini theorem, there exists $\Cl{mainC}=\Cr{mainC}(E,v,h,h',d)$ such that
\begin{equation*}
	{\rm mes}(\Theta_{K})= \frac{1}{(2\pi)^2}\bigg(\int_{\mathbb{\mathcal{V}}} {\rm mes}(\mathcal{R}_{x_1})\mathrm{d}x_1+\int_{\mathbb{T}\backslash \mathcal{V}} {\rm mes} (\mathcal{R}_{x_1})\mathrm{d}x_1\bigg) <\mathrm{e}^{-\Cr{mainC}K^{\Cr{mainsig}}}.
\end{equation*}
For the general case of $d\geq 2$, one can choose $\Cr{mainsig}=4^{-d}\Cr{minsig}$. This proves Theorem \ref{LDTf}\ref{item:ldt}.
\end{proof}

	\begin{proof}[Proof of Theorem \ref{LDTf}\ref{item:growth}]

We will use the convergence result of Lyapunov exponent \cite{Bo05}.

\begin{proposition}[\cite{Bo05}]\label{continuous}
	Let $d\in \mathbb{N}^{+}$, $v\in C^{\omega}(\mathbb{T}^{d}_{h},\mathbb{C})$, $E\in\mathbb{C}$. Assume $\alpha\in \mathbb{T}^{d}$ with
	\begin{equation*}
		\|\langle {\bf k},\alpha\rangle\|_{\mathbb{T}}>\delta
	\end{equation*}
	for all $0<|{\bf k}|<K$. Moreover, suppose $
		n>K^{2}\delta^{-1}. $ 
 $h>K^{-\Cl[sig]{exten}}$, where $\Cr{exten}=\Cr{exten}(d)$. Then
	\begin{equation*}
		|L(E)-L_{n}(E)|<K^{-\Cr{exten}}.
	\end{equation*}
\end{proposition}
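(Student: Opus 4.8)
The plan is to establish the two inequalities $L(E)\le L_{n}(E)$ and $L_{n}(E)<L(E)+K^{-\sigma}$ separately. The first is free: from $\|M_{m+n}(x)\|\le\|M_{m}(x+n\alpha)\|\,\|M_{n}(x)\|$ and the translation invariance of Lebesgue measure, $n\mapsto nL_{n}(E)$ is subadditive, so $L(E)=\inf_{n}L_{n}(E)=\lim_{n\to\infty}L_{n}(E)\le L_{n}(E)$; the same computation gives $L_{Mn}(E)\le L_{n}(E)$ for every integer $M\ge1$, and $L_{Mn}(E)\to L(E)$ as $M\to\infty$. Hence the second inequality reduces to the lower bound $L_{Mn}(E)\ge L_{n}(E)-CK^{-\sigma}$, uniformly in $M$ (after which the final $\sigma=\sigma(d)$ is recovered by the usual harmless shrinking). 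Moreover $\|M_{k}(x)\|\ge1$ for all $k$, so $0\le L(E)\le L_{n}(E)$; in the regime $L_{n}(E)\le K^{-\sigma}$ the conclusion is immediate, so I may assume $L_{n}(E)>K^{-\sigma}$, which together with $n>K^{2}\delta^{-1}\ge K^{2}$ forces $nL_{n}(E)>K^{2-\sigma}$ --- the scale-$n$ transfer matrices are genuinely large.

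Next I would apply Bourgain's large deviation theorem at scale $n$ (Theorem \ref{LDT}, valid since $n>K\delta^{-1}$): there is $\mathcal B\subset\mathbb T^{d}$ with $\mathrm{mes}(\mathcal B)<\mathrm e^{-cK^{\sigma}}$ outside which $|\tfrac1n\log\|M_{n}(x)\|-L_{n}(E)|<CK^{-\sigma}$. Write the telescoped product
\begin{equation*}
	M_{Mn}(x)=M_{n}\bigl(x+(M-1)n\alpha\bigr)\cdots M_{n}(x+n\alpha)\,M_{n}(x).
\end{equation*}
For $M<\mathrm e^{cK^{\sigma}/2}$, Chebyshev's inequality applied to $\#\{0\le j<M:x+jn\alpha\in\mathcal B\}$ shows that outside a set of $x$ of measure $<\mathrm e^{-cK^{\sigma}/2}$ \emph{none} of the $M$ factors is exceptional, so each has norm comparable to $\mathrm e^{nL_{n}(E)}\ge\mathrm e^{K^{2-\sigma}}$. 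For such $x$ the avalanche principle of Goldstein--Schlag yields
\begin{equation*}
	\tfrac{1}{Mn}\log\|M_{Mn}(x)\|\ge L_{n}(E)-CK^{-\sigma}-\tfrac{1}{n}\log\tfrac1\tau,
\end{equation*}
where $\tau>0$ is the smallest non-cancellation constant of the $M-1$ consecutive pairs and the per-step errors of the avalanche principle, of size $O(\mathrm e^{-K^{2-\sigma}})$, are negligible. As discussed below one may take $\tau\ge K^{-A}$, and since $n>K^{2}$ the last term is then $\ge-K^{-\sigma}$. Integrating in $x$ --- the left side lies in $[0,C_{E}^{v}]$, so the exceptional set contributes negligibly --- gives $L_{Mn}(E)\ge L_{n}(E)-CK^{-\sigma}$ for all $M<\mathrm e^{cK^{\sigma}/2}$.

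What remains are the two delicate points. The first is the non-cancellation input: using the presentation $M_{n}=\bigl(\begin{smallmatrix}f_{n}&-f_{n-1}(\cdot+\alpha)\\ f_{n-1}&-f_{n-2}(\cdot+\alpha)\end{smallmatrix}\bigr)$, away from a small set the most-expanded direction of $M_{n}(y)$ and the most-contracted direction of $M_{n}(y+n\alpha)$ are ratios of the analytic functions $f_{k}$, so the set where they nearly coincide is essentially a sublevel set of an analytic function, of measure controlled by a \L{}ojasiewicz-type estimate of the kind used in Section 3 --- here the hypothesis $h>K^{-\sigma}$ keeps the constants in that estimate polynomial in $K$ --- once a persistent resonance has been ruled out. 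This elimination-of-resonances step is the technical heart of the argument. The second, and the step I expect to be the main obstacle, is to pass from the finite-scale comparisons $|L_{Mn}(E)-L_{n}(E)|<CK^{-\sigma}$ $(M<\mathrm e^{cK^{\sigma}/2})$ to the genuine limit $L(E)$, i.e. to rule out a further drop of $L_{N}(E)$ at scales $N\gg\mathrm e^{cK^{\sigma}}n$; this cannot be done by naively iterating the comparison above, whose error would accumulate, and in Bourgain's treatment it is instead carried out through the subharmonicity of $u_{N}(z)=\tfrac1N\log\|M_{N}(z)\|$, controlling its Riesz measure and harmonic part uniformly in $N$ via Lemma \ref{riesz} and its $\mathrm{BMO}$/Fourier-coefficient decay as in Lemma \ref{split}, so as to bound $|\langle u_{N}\rangle-\langle u_{n}\rangle|$ uniformly in $N$; this is precisely where the quantitative hypotheses $n>K^{2}\delta^{-1}$ and $h>K^{-\sigma}$ are used essentially. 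One may moreover dispense with the positivity dichotomy at the outset by complexifying the energy: for $|\mathrm{Im}\,E'|$ large $L(E',\cdot)>0$, the argument runs verbatim there, and the estimate is transported back to real $E$ using subharmonicity of $E'\mapsto L_{n}(E')$ and of $E'\mapsto L_{n}(E')-L(E')\ge0$ together with a mean-value inequality over a disk of $K$-independent radius.
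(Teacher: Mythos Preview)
The paper does not supply a proof of this proposition: it is quoted as a result of Bourgain \cite{Bo05} and invoked as a black box in the proof of Theorem \ref{LDTf}\ref{item:growth}. There is therefore no in-paper argument to compare your proposal against; what you have written is an attempted reconstruction of Bourgain's proof itself.

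As a sketch of that proof, your outline follows the standard Goldstein--Schlag template (subadditivity for one direction; LDT plus avalanche principle plus iteration for the other), and you correctly flag the two genuine difficulties: the angle/non-cancellation input needed for the avalanche principle, and the passage from a finite range of scales $Mn$ to the actual limit $L(E)$. Two points deserve caution. First, your proposed treatment of the non-cancellation via a \L{}ojasiewicz estimate on the determinants $f_{k}$ is circular in the present context: the paper's \L{}ojasiewicz-type bound \eqref{cartan} is derived under the hypothesis $L(E,\mathrm{i}y)>\gamma>0$, which is not assumed in Proposition \ref{continuous}, and in any case Bourgain's argument in \cite{Bo05} proceeds instead through the almost-invariance $|u_{n}(x)-u_{n}(x+\alpha)|\le 2C_{E}^{v}/n$ and the resulting Fourier-coefficient decay of the plurisubharmonic function $u_{n}$, rather than through pairwise angle control. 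Second, the final paragraph's complexification-in-$E$ device is not part of Bourgain's proof and, as written, does not obviously transport a bound on $L_{n}(E')-L(E')$ from $|\mathrm{Im}\,E'|$ large back to a fixed real $E$ with the stated uniformity: the function $E'\mapsto L_{n}(E')-L(E')$ is a difference of subharmonic functions, not itself subharmonic, so the mean-value inequality does not apply directly.
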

It follows from Proposition \ref{upper} and Proposition \ref{lower} that for any $K>\Cr{GK}$,
\begin{equation*}
	-\Cr{lowerC}K^{-\Cr{minsig}}\leq \frac{1}{(2\pi)^d}\int_{\mathbb{T}^{d}}\frac{1}{n} \log |f_{n}(x+\mathrm{i}y)| \mathrm{d}x -L_{n}(E,\mathrm{i}y)\leq \Cr{upperC}K^{-\Cr{minsig}}.
\end{equation*}
Let $\Cr{thmC}=2\max\{\Cr{upperC},\Cr{lowerC},\Cr{mainC}\}$ and $\Cr{thmsig}=\min\{\frac{\Cr{mainsig}}{2},\Cr{exten}\}$. We choose  $\Cr{thmK}>\Cr{GK}$  sufficiently large  such that for $K>\Cr{thmK}$,
\begin{equation*}
	h-h'>K^{-\Cr{exten}} \ \text{and}\ \ \mathrm{e}^{-\Cr{mainC}K^{\Cr{mainsig}}} <\mathrm{e}^{-\Cr{thmC}K^{\Cr{thmsig}}}.
\end{equation*}
Then by Proposition \ref{continuous},
\begin{equation*}
		\bigg|\frac{1}{(2\pi)^d}\int_{\mathbb{T}^{d}}\frac{1}{n} \log |f_{n}(x+\mathrm{i}y)| \mathrm{d}x-L(E,\mathrm{i}y)\bigg|
		\leq\max\{\Cr{upperC},\Cr{lowerC}\}K^{-\Cr{minsig}}+K^{-\Cr{exten}}\leq\Cr{thmC} K^{-\Cr{thmsig}}.
\end{equation*}
This proves Theorem \ref{LDTf}\ref{item:growth}.
\end{proof}

\subsection{Proof of Proposition \ref{upper}}
The main idea of the proof is using the large deviation theorem of the transfer matrix and the submean value property of separately subharmonic function on $\mathbb{T}^{d}_{h}$ to give the explicit estimate for the dynamics of the base, so that we can get the uniform upper bound of the $\log\|M_{n}(x+\mathrm{i}y)\|$, so does $\log|f_{n}(x+\mathrm{i}y)|$. 

To obtain the large deviation theorem for  $|L_{n}(x+\mathrm{i}y)-L_{n}|$, we need to compare $L_{n}(\mathrm{i}y)$ and $L_{n}$. The following lemma claims that $L_{n}({\rm i}y)$ is in fact Lipschitz continuous with respect to $y$ provided that $|y|$ is sufficiently small. 
\begin{lemma}\cite{GSV16}\label{lip}
		Let $d\in \mathbb{N}^{+}$, $v\in C^{\omega}(\mathbb{T}^{d}_{h},\mathbb{C})$, $E\in\mathbb{C}$. Assume $\alpha\in \mathbb{T}^{d}$, then there exists $\Cl{lip}=\Cr{lip}(E,v,h)$ such that 
	\begin{equation*}
		|L_{n}(E,{\rm i}y)-L_{n}(E)|\leq \Cr{lip} \sum_{j=1}^{d} |y_{j}|, \quad \text{for any}\  |y_{j}|\leq h.
	\end{equation*}
\end{lemma}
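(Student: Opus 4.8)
The plan is to show that $\phi_n(y):=L_n(E,\mathrm{i}y)$, regarded as a function of $y$ in the cube $(-h,h)^d$, is \emph{separately convex} and \emph{uniformly bounded} (in $n$ and in $y$), and then to invoke the elementary fact that a bounded separately convex function is Lipschitz on the interior of its domain, with a constant governed by the bound and the distance to the boundary. The point worth keeping in mind throughout is that there is \emph{no} pointwise Lipschitz bound for $L_n(E,x+\mathrm{i}y)$ in $y$ -- that is precisely what non-vanishing acceleration measures -- so the argument must pass through the $x$-average, which is where subharmonicity and periodicity enter.

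For the convexity, note that every entry of the transfer matrix $M_n(E,z)$, $z=(z_1,\dots,z_d)\in\mathbb{T}^d_h$, is holomorphic; hence $u(z):=\frac1n\log\|M_n(E,z)\|$ is continuous and separately subharmonic, being (in each variable) a continuous supremum over unit vectors $w$ of the subharmonic functions $z_j\mapsto\frac1n\log\|M_n(E,z)w\|$, each of which is the logarithm of the norm of a holomorphic $\mathbb{C}^2$-valued map. Moreover $u$ is $2\pi\mathbb{Z}^d$-periodic in $\mathrm{Re}\,z$. Now, for any function subharmonic in $z_j=x_j+\mathrm{i}y_j$ and $2\pi$-periodic in $x_j$, the average $\frac1{2\pi}\int_{\mathbb{T}}u\,\mathrm{d}x_j$ is convex in $y_j$: in the distributional sense its second $y_j$-derivative equals $\frac1{2\pi}\int_{\mathbb{T}}\partial_{y_j}^2u\,\mathrm{d}x_j\geq-\frac1{2\pi}\int_{\mathbb{T}}\partial_{x_j}^2u\,\mathrm{d}x_j=0$, the inequality from $(\partial_{x_j}^2+\partial_{y_j}^2)u\geq0$ and the last equality from periodicity in $x_j$. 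Averaging successively over $x_1,\dots,x_d$, and using that an average of convex functions is convex, we conclude that $\phi_n$ is convex in each $y_j$ separately on $(-h,h)^d$.

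For the bound, the trivial estimate $\|S_{E,v}(z)\|\leq |E|+\|v\|_h+2$ on the strip gives $\frac1n\log\|M_n(E,z)\|\leq\log(|E|+\|v\|_h+2)=C_E^v$ -- this is exactly \eqref{uup} -- so $\phi_n\leq C_E^v$; conversely $\det M_n(E,z)=1$ forces $\|M_n(E,z)\|\geq1$, hence $\phi_n\geq0$. Thus $0\leq\phi_n\leq C_E^v$ on $(-h,h)^d$, a bound independent of $n$. Combining this with separate convexity finishes the proof: a convex function on an interval $(-a,a)$ that is bounded there by $M$ has $|\phi'|\leq 4M/(a-a')$ on $[-a',a']$ for $a'<a$ (compare its value at an interior point with those near the endpoints); applying this coordinate by coordinate and telescoping along the coordinate axes yields $|\phi_n(y)-\phi_n(y')|\leq \tfrac{4C_E^v}{a-a'}\sum_{j=1}^d|y_j-y_j'|$, and taking $y'=0$ gives the assertion, with a constant depending only on $E$, $v$, $h$. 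To reach the full range $|y_j|\leq h$ stated in the lemma one runs the argument on the slightly larger strip on which $v$ is analytic and bounded, absorbing the resulting constant into $\Cr{lip}=\Cr{lip}(E,v,h)$.

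Most of this is routine; the one step that genuinely requires care is the convexity of the $x$-average via subharmonicity and periodicity, and, to a lesser extent, the bookkeeping needed to push the estimate to the very edge $|y_j|=h$ of the strip of analyticity. No tools beyond basic potential theory (subharmonicity and the sub-mean-value property) and the trivial transfer-matrix bounds are needed.
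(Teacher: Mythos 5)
The paper does not prove this lemma (it is quoted from \cite{GSV16}), so your argument can only be measured against the standard one, which it essentially reproduces: separate convexity of $y\mapsto L_{n}(E,\mathrm{i}y)$ (separate subharmonicity of $\log\|M_{n}\|$ plus periodicity in $\operatorname{Re}z_{j}$ make the $x$-average convex in each $y_{j}$), combined with the uniform two-sided bound $0\leq L_{n}(E,\mathrm{i}y)\leq C_{E}^{v}$ coming from $\|S_{E,v}\|\leq |E|+\|v\|_{h}+2$ and $\det M_{n}=1$. That part is correct and is the expected route.

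The one step that does not work as written is your treatment of the stated range $|y_{j}|\leq h$. A convex function on $(-h,h)$ bounded between $0$ and $M$ need not be Lipschitz with a uniform constant up to the endpoints --- its one-sided derivative can blow up like $M/(h-|y_{j}|)$ --- and there is no ``slightly larger strip on which $v$ is analytic'' to retreat to, since $v$ is only assumed analytic on $\mathbb{T}^{d}_{h}$. For the statement as given, which only compares $L_{n}(E,\mathrm{i}y)$ with $L_{n}(E)$ at $y=0$, the fix is trivial and worth recording: if $\max_{j}|y_{j}|\leq h/2$, run your convexity argument with room $h/2$ to spare, obtaining a constant of order $h^{-1}C_{E}^{v}$; if instead some $|y_{j}|>h/2$, then $\sum_{j}|y_{j}|>h/2$ and the a priori bound $|L_{n}(E,\mathrm{i}y)-L_{n}(E)|\leq C_{E}^{v}\leq 2h^{-1}C_{E}^{v}\sum_{j}|y_{j}|$ already gives the claim. (Note your telescoping two-point Lipschitz estimate genuinely fails near the boundary; only the comparison with $y=0$ survives there. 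This is consistent with how the lemma is actually invoked in the paper, where the constant is taken with the smaller width $h-h'$, i.e., with interior room built in.)
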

Now we finish the proof of Proposition \ref{upper}. By Corollary \ref{ldtexten} and Lemma \ref{lip}, we have
\begin{equation}\label{ly}
	{\rm mes} \{x\in \mathbb{T}^{d}: |\frac{1}{n} \log \|M_{n}(x+{\rm i}y+\mathrm{i}\tilde{y})\|-L_{n}(\mathrm{i}y)|>2\Cr{bouCh}K^{-\Cr{bousigh}}\} < {\rm e}^{- \Cr{bouCh}K^{\Cr{bousigh}}}
\end{equation}
provided that for every $1\leq j\leq d$,
\begin{equation*}
	|\tilde{y}_{j}|\leq r:= \frac{\Cr{bouCh}d^{-1}K^{-\Cr{bousigh}} }{\Cr{lip}(E,v,h-h')}.
\end{equation*}
Due to the submean value property for the separately subharmonic function, for any $x=(x_{1},\cdots,x_{d})$, the following holds
\begin{equation*}
	\log \|M_{n}(x+\mathrm{i}y)\|\leq \frac{1}{(\pi r^{2})^{d}} \int_{\mathbb{D}^{d}} \log \|M_{n}(\xi+\mathrm{i}y+{\rm i}\tilde{y})\| \mathrm{d}\xi \mathrm{d}\tilde{y},
\end{equation*}
where $\mathbb{D}^{d} = \prod_{j=1}^{d} \mathbb{D}(x_{j},r)$. Denote by $\mathcal{B}_{\tilde{y}}\subset \mathbb{T}^{d}$ the set in \eqref{ly}. Let
\begin{equation*}
	\mathcal{B}= \{(\xi,\tilde{y})\in \mathbb{T}^{d}\times (-r,r)^{d}: \xi \in \mathcal{B}_{\tilde{y}}\}.
\end{equation*}
It follows from (\ref{ly}) that ${\rm mes} (\mathcal{B}) \leq {\rm mes}(\mathcal{B}_{\tilde{y}}) \leq {\rm e}^{-\Cr{bouCh} K^{\Cr{bousigh}}}$. On the one hand, 
\begin{equation}\label{hand1}
	\frac{1}{(\pi r^{2})^{d}} \int_{\mathbb{D}^{d}\backslash \mathcal{B}} \frac{1}{n}\log\|M_{n}(\xi+\mathrm{i}y+{\rm i}\tilde{y})\| \mathrm{d}\xi \mathrm{d}\tilde{y} \leq  L_{n}(\mathrm{i}y) +2\Cr{bouCh}K^{-\Cr{bousigh}}. 
\end{equation}
On the other hand, by the definition of $C_{E}^{v}$, for sufficiently large $K$,
\begin{equation}\label{hand2}
	\begin{split}
		\frac{1}{(\pi r^{2})^{d}} \int_{\mathbb{D}^{d}\cap \mathcal{B}} \frac{1}{n}\log \|M_{n}(\xi+\mathrm{i}y+{\rm i}\tilde{y})\| \mathrm{d}\xi \mathrm{d}\tilde{y} &\leq  C_{E}^{v}(\pi r^{2})^{-d} {\rm mes}(\mathcal{B})\\
		& \leq \mathrm{e}^{-\frac{\Cr{bouCh}}{2}K^{{\Cr{bousigh}}}}.
	\end{split} 
\end{equation}
The proof is completed by \eqref{hand1} and \eqref{hand2} and letting $\Cr{upperC}=3\Cr{bouCh}, \Cr{uppersig}=\Cr{bousigh}$.\qed

\subsection{Proof of Proposition \ref{lower}}
Now we shall study the average lower bound of $\log|f_{n}(\cdot+\mathrm{i}y)|$. The following lemma provides the estimate over three determinants, and its proof is slightly modified versions of \cite{GS08}. We give the proof in Appendix \ref{app}. In what follows we will show how to use it to obtain the average lower bound for a single determinant.

\begin{lemma}\label{three}
	Suppose that $l_{0}\leq j_{1}\leq j_{1}+l_{0}\leq j_{2}\leq K^{\frac{1-\Cr{bousigh}}{2}}$. Let
	\begin{equation*}
		\begin{split}
			\mathcal{F}_{K}:=\{x\in\mathbb{T}^{d}:|f_{n}(x+\mathrm{i}y)|+|f_{n}(T^{j_{1}}x+\mathrm{i}y)| +|f_{n}(T^{j_{2}}x+\mathrm{i}y)|&\\
			\leq {\rm e}^{nL_{n}(\mathrm{i}y)-5C_{\star}nK^{-\Cr{bousigh}}}& \}.
		\end{split}
	\end{equation*}
	Then there exist $\Cl[sig]{trisig}=\Cr{trisig}(d)$ and $\Cl[K]{triK}=\Cr{triK}(E,v,\gamma,h,h',d)$ such that for any $K\geq \Cr{triK}$,
	\begin{equation*}
		{\rm mes} (\mathcal{F}_{K})\leq {\rm e}^{-K^{\Cr{trisig}}}.
	\end{equation*}
\end{lemma}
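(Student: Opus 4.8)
The plan is to re-run the three–determinant argument of Goldstein--Schlag \cite{GS08}, feeding in the two estimates established above in place of the inputs that in \cite{GS08} are available only for $\alpha\in\mathrm{SDC}$, and carrying the vertical parameter $y$ along throughout; the latter is harmless because every step is of a subharmonic/complex–analytic nature, exactly as in the passage from Theorem~\ref{LDT} to Corollary~\ref{ldtexten}. The two replacement inputs are: the norm large deviation estimate for any rationally independent $\alpha$ obeying the window condition $\|\langle\mathbf{k},\alpha\rangle\|_{\mathbb{T}}>\delta$ for $0<|\mathbf{k}|<K$, that is, Corollary~\ref{ldtexten}; and the \L{}ojasiewicz--Cartan lower bound \eqref{cartan}, which for every auxiliary parameter $\tilde K\ge1$ and every $l\ge l_{0}$ gives ${\rm mes}\{x\in\mathbb{T}^{d}:|f_{l}(x+\mathrm{i}y)|\le\mathrm{e}^{-C_{\star}\tilde Kl}\}\le\mathrm{e}^{-\tilde K^{1/d}}$. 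I keep the standing hypotheses of Theorem~\ref{LDT} ($\alpha$ in the window condition, $n>K\delta^{-1}$).

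First I set up the good set. Write $N=n+j_{2}$, so $j_{1},j_{2},N-n\le K^{(1-\Cr{bousigh})/2}$. Applying Corollary~\ref{ldtexten} at the phases $x$, $x+j_{1}\alpha$, $x+j_{2}\alpha$ and the lengths $n,n-1,N,N-1$, and Lemma~\ref{lip} to compare the various $L_{m}(\mathrm{i}y)$ with $L_{n}(\mathrm{i}y)$, I obtain a set $\mathcal G_{1}$ with ${\rm mes}(\mathbb{T}^{d}\setminus\mathcal G_{1})\le\mathrm{e}^{-cK^{\Cr{bousigh}}}$ on which each of these transfer matrices has norm (equivalently, since $\det=1$, the norm of its inverse) within a factor $\mathrm{e}^{\pm CmK^{-\Cr{bousigh}}}$ of $\mathrm{e}^{mL_{n}(\mathrm{i}y)}$; in particular each is quantitatively hyperbolic with a well–separated pair of singular directions. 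Next, applying \eqref{cartan} with a parameter $\tilde K$ to be fixed, to the finitely many determinants $f_{l}$, $l\le j_{2}$, occurring as entries of the ``bridge'' matrices $M_{j_{1}},M_{j_{2}},M_{j_{2}-j_{1}}$ (based at $x$ or $x+\alpha$), I get a set $\mathcal G_{2}$ with ${\rm mes}(\mathbb{T}^{d}\setminus\mathcal G_{2})\le C\mathrm{e}^{-\tilde K^{1/d}}$ on which every such $f_{l}$ is $\ge\mathrm{e}^{-C_{\star}\tilde Kj_{2}}$ in modulus. Choose $\tilde K=c''K^{(1-\Cr{bousigh})/2}$ with $c''=c''(E,v,h,h',d)$ small enough that $C_{\star}\tilde Kj_{2}\le C_{\star}c''\,j_{2}K^{(1-\Cr{bousigh})/2}\le nK^{-\Cr{bousigh}}$ — here I use $j_{2}\le K^{(1-\Cr{bousigh})/2}$ and $n>K\delta^{-1}\ge K$, so $nK^{-\Cr{bousigh}}\ge j_{2}K^{(1-\Cr{bousigh})/2}$ — so that on $\mathcal G_{2}$ every bridge determinant is $\ge\mathrm{e}^{-nK^{-\Cr{bousigh}}}$, while still $\tilde K^{1/d}\ge K^{\Cr{trisig}}$ for the $d$–dependent exponent $\Cr{trisig}:=\tfrac12\min\{\Cr{bousigh},(1-\Cr{bousigh})/(2d)\}$. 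With these choices ${\rm mes}\big(\mathbb{T}^{d}\setminus(\mathcal G_{1}\cap\mathcal G_{2})\big)\le\mathrm{e}^{-K^{\Cr{trisig}}}$ for $K\ge\Cr{triK}$, so it suffices to prove $\mathcal F_{K}\cap\mathcal G_{1}\cap\mathcal G_{2}=\varnothing$.

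For the core, suppose $x\in\mathcal F_{K}\cap\mathcal G_{1}\cap\mathcal G_{2}$. The point is that simultaneous smallness of the three $(1,1)$–entries $f_{n}(x)$, $f_{n}(x+j_{1}\alpha)$, $f_{n}(x+j_{2}\alpha)$ forces a cancellation incompatible with $\|M_{N}(x)\|$ being as large as $\mathcal G_{1}$ dictates. One uses the cocycle identities
\[
M_{N}(x)=M_{j_{2}}(x+n\alpha)\,M_{n}(x)=M_{j_{2}-j_{1}}\!\big(x+(n+j_{1})\alpha\big)\,M_{n}(x+j_{1}\alpha)\,M_{j_{1}}(x),
\]
their scalar consequences (e.g. $f_{N}(x)=f_{j_{2}}(x+n\alpha)f_{n}(x)-f_{j_{2}-1}(x+(n+1)\alpha)f_{n-1}(x)$ and $f_{n}(x+j_{i}\alpha)=f_{n+j_{i}-1}(x+\alpha)f_{j_{i}-1}(x)-f_{n+j_{i}}(x)f_{j_{i}-2}(x+\alpha)$), and the fact — guaranteed on $\mathcal G_{2}$ by the choice of $\tilde K$ — that each bridge matrix has norm and conorm in $\mathrm{e}^{\pm nK^{-\Cr{bousigh}}}$ and each of its entries is $\ge\mathrm{e}^{-nK^{-\Cr{bousigh}}}$ in modulus. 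Following the case analysis of \cite{GS08}, organized by which of the four entries of $M_{N}(x)$ realizes $\|M_{N}(x)\|$ and, in the borderline subcases, by which singular direction of $M_{n}$ at the relevant phase $e_{1}$ is aligned with, every case closes by expressing $\|M_{N}(x)\|$ as a bounded multiple of the three small quantities above (times negligible bridge factors), whence $\|M_{N}(x)\|\le\mathrm{e}^{\,nL_{n}(\mathrm{i}y)-4C_{\star}nK^{-\Cr{bousigh}}}$. This contradicts the lower bound $\|M_{N}(x)\|\ge\mathrm{e}^{\,nL_{n}(\mathrm{i}y)-CnK^{-\Cr{bousigh}}}$ valid on $\mathcal G_{1}$: the definition \eqref{cstar} of $C_{\star}$ is rigged so that $C_{\star}nK^{-\Cr{bousigh}}$ dominates the large–deviation error (one unit of the ``$5$'' in the definition of $\mathcal F_{K}$ being consumed by bridge errors, the rest by the gap between $C_{\star}$ and $\Cr{bouCh}$). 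This fixes $\Cr{triK}$ as well.

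The hard part is precisely this case analysis — the purely cocycle–theoretic ``no–cancellation'' heart of the Goldstein--Schlag scheme. It must be run so that only the three determinants defining $\mathcal F_{K}$ survive on the ``small'' side, and so that all accumulated errors (the $\mathrm{e}^{\pm nK^{-\Cr{bousigh}}}$ from bridges, the $\mathrm{e}^{-\tilde K^{1/d}}$ exceptional sets from \eqref{cartan}) stay inside the budget fixed by $C_{\star}$. Three phases are genuinely needed: ``$f_{n}$ small at one phase'' only places $e_{1}$ near one singular direction of one matrix, which contradicts nothing by itself. The separations $j_{1}\ge l_{0}$ and $j_{2}-j_{1}\ge l_{0}$ make the bridges long enough for the hyperbolicity at the three phases to decouple, while the upper bound $j_{2}\le K^{(1-\Cr{bousigh})/2}$ keeps those same bridges short enough that $\|M_{j_{i}}(\cdot)\|\le\mathrm{e}^{C_{E}^{v}j_{2}}$ is negligible against $\mathrm{e}^{\,nL_{n}(\mathrm{i}y)}$ — which is exactly the role of that bound. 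Since no step uses anything about $\alpha$ beyond the window condition, and Corollary~\ref{ldtexten}, \eqref{cartan} and Lemma~\ref{lip} all hold verbatim in the complexified variable, the scheme carries over; the remaining, largely bookkeeping, details are given in Appendix~\ref{app}.
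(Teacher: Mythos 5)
Your scaffolding is sound and is essentially the paper's: the same two inputs (the complexified norm large deviation estimate of Corollary~\ref{ldtexten} together with Lemma~\ref{lip}, and the \L{}ojasiewicz--Cartan bound \eqref{cartan}), the same pigeonhole-on-three-phases skeleton from Goldstein--Schlag. Your one structural deviation --- removing \emph{a priori}, via \eqref{cartan} with an auxiliary parameter $\tilde K\sim K^{(1-\Cr{bousigh})/2}$, the set where the short ``bridge'' determinants $f_{j_1-1}$, $f_{j_2-1}$, $f_{j_2-j_1-1}$ fall below $\mathrm{e}^{-nK^{-\Cr{bousigh}}}$, and then showing the remainder of $\mathcal F_K$ is empty --- is a legitimate repackaging of what the paper does \emph{a posteriori} (the paper concludes on all of $\mathcal F_K\cap\mathcal G_K$ that some bridge determinant is $<\mathrm{e}^{-C_\star nK^{-\Cr{bousigh}}}$, and only then invokes Lemma~\ref{GS08} and Lemma~\ref{closeone} to bound the measure of that event, the contradiction being with $L(E,\mathrm{i}y)>\gamma$). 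Your bookkeeping for $\tilde K$ and the resulting exponent $\Cr{trisig}$ is consistent with the paper's.

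The genuine gap is in your description of the core step, which is the only hard part and which you cannot defer in the form you state it. You claim the three small $(1,1)$-entries force $\|M_N(x)\|\le\mathrm{e}^{nL_n(\mathrm{i}y)-4C_\star nK^{-\Cr{bousigh}}}$, contradicting the lower bound on $\|M_N(x)\|$ from $\mathcal G_1$, with the cases organized by ``which of the four entries of $M_N(x)$ realizes $\|M_N(x)\|$''. This cannot work: on $\mathcal G_1$ one has $\|M_N(x)\|\ge\|M_n(x)\|/\|M_{j_2}(T^nx)^{-1}\|\ge\mathrm{e}^{nL_n(\mathrm{i}y)-\Cr{bouCh}nK^{-\Cr{bousigh}}-C_E^vj_2}$ \emph{unconditionally} --- smallness of the $(1,1)$-entries of the length-$n$ blocks does not depress the norm of $M_N(x)$, it only tells you the norm is realized elsewhere. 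The correct mechanism (and the one the paper uses) is different: from $|f_n(T^{j_i}x+\mathrm{i}y)|$ small and $\|M_n(T^{j_i}x+\mathrm{i}y)\|$ large one extracts, via the singular value decomposition and \eqref{flower}, the dichotomy that either $|u_n^-\wedge\vec e_1|$ or $|v_n^+\wedge\vec e_2|$ is $\le\mathrm{e}^{-2C_\star nK^{-\Cr{bousigh}}}$ at that phase; by pigeonhole the \emph{same} alternative holds at two of the three phases; and then the wedge estimates of Lemma~\ref{wedgebug} (whose first term is controlled because the bridge matrix carries one contracting direction close to the other) yield that a \emph{short} determinant, $|f_{j_2-j_1-1}(T^{j_1}x+\mathrm{i}y)|$ or $|f_{j_2-j_1-1}(T^{n+j_1+1}x+\mathrm{i}y)|$ (or the analogues with $(0,j_1)$, $(0,j_2)$), is $<\mathrm{e}^{-C_\star nK^{-\Cr{bousigh}}}$. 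It is \emph{that} smallness which contradicts your $\mathcal G_2$ (or feeds the paper's Łojasiewicz step); note also that the second alternative places the offending bridge at the phase $T^{n+j_1+1}x$, not at $x$ or $x+\alpha$, so $\mathcal G_2$ must include those translates as well (harmless, since \eqref{cartan} is translation-invariant in measure). As written, your case analysis aims at a false target, and filling in ``the remaining, largely bookkeeping, details'' along the route you describe would fail.
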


We are going to finish the proof of Proposition \ref{lower}. Let $\mathcal{G}_{K}$ be the set of $x\in \mathbb{T}^{d}$ such that for any $|j|\leq K^{\frac{1-\Cr{bousigh}}{2}}$,
\begin{equation*}
	|L_{n}(x+\mathrm{i}y+j\alpha)-L_{n}(\mathrm{i}y)|<\Cr{bouCh}K^{-\Cr{bousigh}}.
\end{equation*}
By Corollary \ref{ldtexten} we have
$	{\rm mes}(\mathbb{T}^{d}\backslash \mathcal{G}_{K}) \leq K\mathrm{e}^{-\Cr{bouCh}K^{\Cr{bousigh}}}$. For any $j_{1},j_{2},j_{3}$ with
\begin{equation*}
	0\leq j_{1}\leq j_{1}+l_{0}\leq j_{2}\leq j_{2}+l_{0}\leq j_{3} \leq K^{\frac{1-\Cr{bousigh}}{2}},
\end{equation*}	
we define  $\Omega_{K}$ be the set of $x\in \mathcal{G}_{K}$ such that		
\begin{equation*}
	\begin{split}
		\min_{j_{1},j_{2},j_{3}} \bigg(|f_{n}(T^{j_{1}}x+\mathrm{i}y)|+|f_{n}(T^{j_{2}}x+\mathrm{i}y)| &+|f_{n}(T^{j_{3}}x+\mathrm{i}y)|\bigg) \\
		&>{\rm e}^{nL_{n}(\mathrm{i}y)-5C_{\star}nK^{-\Cr{bousigh}}}.
	\end{split}
\end{equation*}
Then by Lemma \ref{three}, for $K>\Cr{triK}$, 
\begin{equation}\label{omegac}
	{\rm mes} (\mathbb{T}^{d}\backslash \Omega_{K})\leq K{\rm e}^{-K^{\Cr{trisig}}}.
\end{equation}

Let $u(x)=\frac{1}{n}\log|f_{n}(x+\mathrm{i}y)|$ and choose $R=[K^{\frac{1-\Cr{bousigh}}{2}} /l_{0}]$. To apply Cartan's estimate to $f_{n}(\cdot+\mathrm{i}y)$, we choose $x_{0}\in \mathbb{T}^{d}$ such that $\langle u\rangle n\leq \log |f_{n}(x_{0}+\mathrm{i}y)|$. By Lemma \ref{car}, Remark \ref{carmark} and Lemma \ref{carmes}, we can choose
\begin{equation*}
	M=C_{E}^{v} n,\quad m=\langle u\rangle n, \quad H=K^{\frac{1-\Cr{bousigh}}{4}}.
\end{equation*}
Then there exists $\Cl{car}=\Cr{car}(E,v,h,h',d)$ such that 
\begin{equation}\label{caru}
	\inf_{1\leq k\leq R}u(T^{kl_{0}}x) >C_{E}^{v}-C_{d,h-h'}K^{\frac{1-\Cr{bousigh}}{4}}(C_{E}^{v}-\langle u\rangle) >-\Cr{car}K^{\frac{1-\Cr{bousigh}}{4}}
\end{equation}
except a set $\mathcal{R}_{K}\subset \mathbb{T}^{d}$ of measure 
\begin{equation*}
	{\rm mes}(\mathcal{R}_{K}) \leq C(d)K^{\frac{1-\Cr{bousigh}}{2}} \mathrm{e}^{-K^{\frac{1-\Cr{bousigh}}{4d}}}.
\end{equation*}

Rewrite
\begin{equation*}
	\langle u\rangle=\frac{1}{(2\pi)^d}\bigg(\int_{\Omega_{K}} +\int_{\mathbb{T}^{d}\backslash \Omega_{K}}\bigg) \frac{1}{R} \sum_{k=1}^{R} u(T^{kl_{0}}x)\mathrm{d}x=:(I)+(II).
\end{equation*}
By the definition of $\Omega_{K}$, there exist at most two terms of $u(T^{kl_{0}}x)$ for $1\leq k\leq R$ that does not obey $u(T^{kl_{0}}x)>L_{n}(\mathrm{i}y)-5C_{\star}K^{-\Cr{bousigh}}$, thus
\begin{equation*}
	\begin{split}
		(I)&\geq  \frac{1}{(2\pi)^d}\int_{\Omega_{K}}  \frac{R-2}{R}(L_{n}(\mathrm{i}y)-5C_{\star}K^{-\Cr{bousigh}})\mathrm{d}x +\frac{1}{(2\pi)^d}\int_{\Omega_{K}} \frac{2}{R}\inf_{1\leq k\leq R}u(T^{kl_{0}}x) \mathrm{d}x\\
		&=:(III)+(IV).
	\end{split}
\end{equation*}
By \eqref{omegac} and the choice of $R$, if we choose $\Cl[sig]{cal}=\min\{\Cr{bousigh}, \frac{1-\Cr{bousigh}}{2}, \frac{\Cr{trisig}}{2}\}$, then there exists $\Cl{iii}=\Cr{iii}(E,v,h,h',d)$ such that for any sufficiently large $K$,
\begin{equation*}
	\begin{split}
		(III)&\geq L_{n}(\mathrm{i}y)-5C_{\star}K^{-\Cr{bousigh}}-\frac{2C_{E}^{v}}{R}- \mathrm{mes}(\mathbb{T}^{d}\backslash \Omega_{K})(2C_{E}^{v}+5C_{\star}K^{-\Cr{bousigh}})\\
		&\geq L_{n}(\mathrm{i}y)-\Cr{iii}K^{-\Cr{cal}}.
	\end{split}
\end{equation*}
By \eqref{caru} and H\"older inequality, there exists $\Cl{iv}=\Cr{iv}(E,v,h,h',d)$ such that for sufficiently large $K$,
\begin{equation*}
	\begin{split}
		(IV) &>\frac{1}{(2\pi)^d}\frac{2}{R}\int_{\Omega_{K}\backslash\mathcal{R}_{K}} -\Cr{car} K^{\frac{1-\Cr{bousigh}}{4}}\mathrm{d}x +\frac{1}{(2\pi)^d}\frac{2}{R}\int_{\Omega_{K}\cap \mathcal{R}_{K}}  \inf_{1\leq k\leq R} u(T^{k l_{0}} x) \mathrm{d}x\\
		&> -\frac{2\Cr{car}}{R}K^{\frac{1-\Cr{bousigh}}{4}}-\frac{1}{(2\pi)^d}\frac{2}{R}\sum_{k=1}^{R} \int_{\Omega_{K}\cap \mathcal{R}_{K}} |u(T^{kl_{0}}x)|\mathrm{d}x\\
		&\geq -2\Cr{car}l_{0}K^{-\frac{1-\Cr{bousigh}}{4}}-2 \mathrm{mes}(\Omega_{K}\cap\mathcal{R}_{K})^{\frac{1}{2}}\|u\|_{L^{2}(\mathbb{T}^{d})}\\
		&\geq -\Cr{iv}K^{-\frac{1-\Cr{bousigh}}{8d}},
	\end{split}
\end{equation*}	
where we use $\|u\|_{L^{2}(\mathbb{T}^{d})}\leq \Cr{lp}(E,v,h,h',d,2)$ via Theorem \ref{LDTf}\ref{lp}. Similarly, by H\"older inequality and Theorem \ref{LDTf}\ref{lp} again, there exists $\Cl{ii}=\Cr{ii}(E,v,h,h',d)$ such that for sufficiently large $K$,
\begin{equation*}
	(II)\geq - \mathrm{mes}(\mathbb{T}^{d}\backslash \Omega_{K})^{\frac{1}{2}}\|u\|_{L^{2}(\mathbb{T}^{d})} \geq -\Cr{ii}\sqrt{K}{\rm e}^{-\frac{1}{2}K^{\Cr{trisig}}}.
\end{equation*}	
This finishes the proof by letting $\Cr{lowersig}=\min\{\frac{1-\Cr{bousigh}}{8d}, \frac{\Cr{cal}}{2}\}$ and combining  $(I)-(IV)$.\qed

\section{Winding number}
Note that the winding number is closely related to the number of zeros of the holomorphic function on the annulus, as explained in the Introduction, to prove the existence of the winding number, the key is to establish Jensen's formula on the annulus. However, noting the determinant $\det(H_{n}(z)-E)$ could be zero on the boundary of the annulus, this kind of Jensen's formula must include the effect from the boundary.
\subsection{Jensen's formula on the closed annulus}
\begin{proposition}[Jensen's formula]\label{jensen}
	Let $f$ be holomorphic in a neighborhood of $\overline{\mathcal{A}(r_{1},r_{2})}$, and let $\{z_{i}\}$ be the zeros of $f$ (counted with multiplicities) in $\overline{\mathcal{A}(r_{1},r_{2})}$. Then
	\begin{equation}\label{jens1}
		\begin{split}
			\frac{1}{2\pi}\int_{0}^{2\pi}\log&|f(r_{2}\mathrm{e}^{\mathrm{i}x})|\mathrm{d}x-\frac{1}{2\pi}\int_{0}^{2\pi}\log|f(r_{1}\mathrm{e}^{\mathrm{i}x})|\mathrm{d}x\\
			&= \sum_{r_{1}<|z_{i}|<r_{2}}\log\frac{r_{2}}{|z_{i}|}+\frac{1}{2}\sum_{|z_{i}|=r_{1}}\log\frac{r_{2}}{r_{1}}+\log\frac{r_{2}}{r_{1}}\frac{1}{2\pi\mathrm{i}}\oint_{|z|=r_{1}}\frac{f'(z)}{f(z)} \mathrm{d}z.
		\end{split}
	\end{equation}
	As a consequence, we have
		\begin{equation}\label{jens2}
		\begin{split}
			\log&\frac{r_{2}}{r_{1}}\frac{1}{2\pi\mathrm{i}}\oint_{|z|=r_{1}}\frac{f'(z)}{f(z)} \mathrm{d}z\\
			&\leq	\frac{1}{2\pi}\int_{0}^{2\pi}\bigg(\log|f(r_{2}\mathrm{e}^{\mathrm{i}x})|-\log|f(r_{1}\mathrm{e}^{\mathrm{i}x})|\bigg)\mathrm{d}x \leq 
			\log\frac{r_{2}}{r_{1}}\frac{1}{2\pi\mathrm{i}}\oint_{|z|=r_{2}}\frac{f'(z)}{f(z)} \mathrm{d}z.		
		\end{split}
	\end{equation}	
\end{proposition}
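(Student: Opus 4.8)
The plan is to reduce the identity \eqref{jens1} to two elementary building blocks by factoring out the zeros. Since $f$ is holomorphic in a neighborhood $U$ of the compact set $\overline{\mathcal{A}(r_1,r_2)}$ and $f\not\equiv0$, it has only finitely many zeros $z_1,\dots,z_N$ in $\overline{\mathcal{A}(r_1,r_2)}$; I would write $f(z)=P(z)h(z)$ with $P(z)=\prod_{i=1}^{N}(z-z_i)$, so that $h$ is holomorphic on $U$ and nowhere vanishing on $\overline{\mathcal{A}(r_1,r_2)}$, hence, by continuity, on a neighborhood of it. Because $\log|f|=\sum_i\log|z-z_i|+\log|h|$ and $f'/f=\sum_i(z-z_i)^{-1}+h'/h$, and both sides of \eqref{jens1} depend additively on such a product decomposition, it suffices to verify \eqref{jens1} separately for (a) a single linear factor $z-z_0$ with $z_0\in\overline{\mathcal{A}(r_1,r_2)}$, and (b) a function $h$ that is holomorphic and zero-free on a neighborhood of $\overline{\mathcal{A}(r_1,r_2)}$.

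In case (b) the two zero-sums in \eqref{jens1} are empty, so the claim reduces to $\frac{1}{2\pi}\int_0^{2\pi}\bigl(\log|h(r_2\mathrm{e}^{\mathrm{i}x})|-\log|h(r_1\mathrm{e}^{\mathrm{i}x})|\bigr)\mathrm{d}x=k\log\frac{r_2}{r_1}$, where $k:=\frac{1}{2\pi\mathrm{i}}\oint_{|z|=r_1}\frac{h'}{h}\mathrm{d}z\in\mathbb{Z}$. I would prove this by writing $h(z)=z^{k}\mathrm{e}^{g(z)}$ on the surrounding annulus — possible because $h'/h-k/z$ is holomorphic there with vanishing period, hence has a holomorphic primitive — so that $\log|h(z)|=k\log|z|+\mathrm{Re}\,g(z)$; since $g$ is holomorphic, the circular mean of $\mathrm{Re}\,g$ equals the zeroth Laurent coefficient of $g$ and does not depend on the radius, while $\oint_{|z|=\rho}h'/h$ is radius-independent by Cauchy's theorem. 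Subtracting the means at $r_2$ and $r_1$ then leaves exactly $k\log\frac{r_2}{r_1}$.

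In case (a) I would use the classical Jensen mean value $\frac{1}{2\pi}\int_0^{2\pi}\log|\rho\mathrm{e}^{\mathrm{i}x}-z_0|\,\mathrm{d}x=\log\max(\rho,|z_0|)$ and split according to the location of $z_0$. If $r_1<|z_0|<r_2$, the left side of \eqref{jens1} equals $\log\frac{r_2}{|z_0|}$ and the $\oint_{|z|=r_1}$-term vanishes (the pole lies outside the inner disk), matching the $\sum_{r_1<|z_i|<r_2}$-term. If $|z_0|=r_1$, the left side equals $\log\frac{r_2}{r_1}$, the term $\tfrac12\sum_{|z_i|=r_1}\log\frac{r_2}{r_1}$ contributes $\tfrac12\log\frac{r_2}{r_1}$, and the principal value of $\frac{1}{2\pi\mathrm{i}}\oint_{|z|=r_1}(z-z_0)^{-1}\mathrm{d}z$ is $\tfrac12$ — the mean of the values $1$ and $0$ of the Cauchy integral $\frac{1}{2\pi\mathrm{i}}\oint_{|z|=r_1\pm\epsilon}(z-z_0)^{-1}\mathrm{d}z$ as $\epsilon\to0^{+}$ — so the two sides again balance. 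If $|z_0|=r_2$, both sides vanish. Summing the monomial contributions over $i$ and adding the contribution of $h$ yields \eqref{jens1}, with the convention that $\oint_{|z|=r_1}f'/f$ is read in the principal value sense whenever $f$ vanishes on $|z|=r_1$.

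Finally, \eqref{jens2} follows from \eqref{jens1} together with its companion anchored at the outer circle. The two zero-sums $\sum_{r_1<|z_i|<r_2}\log\frac{r_2}{|z_i|}$ and $\tfrac12\sum_{|z_i|=r_1}\log\frac{r_2}{r_1}$ appearing in \eqref{jens1} are nonnegative because $r_1<r_2$ and $|z_i|\le r_2$, and this gives the lower bound in \eqref{jens2}. Running the same factor-by-factor computation but comparing to the radius $r_2$ rather than $r_1$ produces the identity $\frac{1}{2\pi}\int_0^{2\pi}\bigl(\log|f(r_2\mathrm{e}^{\mathrm{i}x})|-\log|f(r_1\mathrm{e}^{\mathrm{i}x})|\bigr)\mathrm{d}x=-\sum_{r_1<|z_i|<r_2}\log\frac{|z_i|}{r_1}-\tfrac12\sum_{|z_i|=r_2}\log\frac{r_2}{r_1}+\log\frac{r_2}{r_1}\cdot\frac{1}{2\pi\mathrm{i}}\oint_{|z|=r_2}\frac{f'}{f}\mathrm{d}z$, whose first two terms are nonpositive, and this gives the upper bound. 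The step I expect to be the main obstacle is purely the bookkeeping at zeros lying on the bounding circles: one must fix a consistent (principal value) meaning for $\oint_{|z|=r_1}f'/f$ when $f$ vanishes there and verify that the resulting half-integer contribution combines correctly with the explicit $\tfrac12\sum_{|z_i|=r_1}$ term without double counting — this is exactly what distinguishes the closed-annulus statement from the textbook one.
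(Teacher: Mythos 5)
Your proof is correct. The crux of the closed-annulus statement --- accounting for zeros on $|z|=r_{1}$ --- is handled exactly as in the paper: factor out the boundary zeros, use the mean-value identity $\frac{1}{2\pi}\int_{0}^{2\pi}\log|a-b\mathrm{e}^{\mathrm{i}x}|\mathrm{d}x=\max\{\log|a|,\log|b|\}$ to compute their contribution to the circular means, and observe that each simple pole of $f'/f$ sitting on the contour contributes $\tfrac12$ to the principal-value integral, which is precisely the term $\tfrac12\sum_{|z_{i}|=r_{1}}\log\frac{r_{2}}{r_{1}}$; your derivation of \eqref{jens2} via the companion identity anchored at $|z|=r_{2}$ likewise matches the paper's use of the relation $\frac{1}{2\pi\mathrm{i}}\oint_{|z|=r_{2}}\frac{f'}{f}-\frac{1}{2\pi\mathrm{i}}\oint_{|z|=r_{1}}\frac{f'}{f}=\#\{z_{i}\in\mathcal{A}\}+\tfrac12\#\{z_{i}\in\partial\mathcal{A}\}$. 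The one genuine difference is that the paper treats the semi-open-annulus Jensen formula (zeros with $r_{1}<|z_{i}|\leq r_{2}$) as a black box cited from Sorets--Spencer, whereas you reprove it from scratch by splitting $f=Ph$ into linear factors and a zero-free part and representing the latter as $z^{k}\mathrm{e}^{g(z)}$ on the annulus. Your route is therefore fully self-contained at the cost of a longer case analysis; the paper's is shorter but leans on the cited lemma. Both are sound, and your principal-value convention for $\oint_{|z|=r_{1}}f'/f$ when $f$ vanishes on the contour agrees with the one the paper uses implicitly in \eqref{pvi}.
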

\begin{proof}
	Let us first recall the following  Jensen's formula on the semi-open annulus \cite{SS}:
	
	\begin{lemma}[\cite{SS}]\label{spencer}
		Let $g$ be holomorphic on $\mathcal{A}(r_{1},r_{2})$ and continuous on $\overline{\mathcal{A}(r_{1},r_{2})}$, and let $\{z_{i}\}$ be the zeros of $g$ such that $r_{1}<|z_{i}|\leq r_{2}$.
		Then
		\begin{equation*}
			\begin{split}
				\frac{1}{2\pi}\int_{0}^{2\pi}\log|g(r_{2}\mathrm{e}^{\mathrm{i}x})|\mathrm{d}x&-\frac{1}{2\pi}\int_{0}^{2\pi}\log|g(r_{1}\mathrm{e}^{\mathrm{i}x})|\mathrm{d}x\\
				&= \sum_{r_{1}<|z_{i}|<r_{2}}\log\frac{r_{2}}{|z_{i}|}+\log\frac{r_{2}}{r_{1}}\frac{1}{2\pi\mathrm{i}}\oint_{|z|=r_{1}}\frac{g'(z)}{g(z)} \mathrm{d}z.
			\end{split}
		\end{equation*}
	\end{lemma}
	Since $f$ is holomorphic in a neighborhood of $\overline{\mathcal{A}(r_{1},r_{2})}$, then there exists $\delta>0$ such that in $\mathcal{A}(r_{1}-\delta,r_{2}+\delta)$,
	\begin{equation*}
		f(z)=h(z)\prod_{r_{1}< |z_{i}|\leq r_{2}}(z-z_{i}) \prod_{|z_{i}|=r_{1}}(z-z_{i})=:g(z)\prod_{|z_{i}|=r_{1}}(z-z_{i})
	\end{equation*}
	for some holomorphic function $h$ defined in $\mathcal{A}(r_{1}-\delta,r_{2}+\delta)$ without zeros and holomorphic function $g$ that has no zeros on the circle $|z|=r_{1}$. 
	
	Applying Lemma \ref{spencer} to $g(z)$ shows that
	\begin{equation}\label{apply}
		\begin{split}
			\frac{1}{2\pi}\int_{0}^{2\pi}\log&|f(r_{2}\mathrm{e}^{\mathrm{i}x})|\mathrm{d}x-\frac{1}{2\pi}\int_{0}^{2\pi}\log|f(r_{1}\mathrm{e}^{\mathrm{i}x})|\mathrm{d}x\\
			&= \sum_{r_{1}<|z_{i}|<r_{2}}\log\frac{r_{2}}{|z_{i}|}+\log\frac{r_{2}}{r_{1}}\frac{1}{2\pi\mathrm{i}}\oint_{|z|=r_{1}}\frac{g'(z)}{g(z)} \mathrm{d}z+\sum_{|z_{i}|=r_{1}}\log\frac{r_{2}}{r_{1}},
		\end{split}
	\end{equation}
	where we use the well-known formula 
	\begin{equation*}
		\frac{1}{2\pi}\int_{0}^{2\pi}\log|a-b\mathrm{e}^{\mathrm{i}x}|\mathrm{d}x=\max\{\log|a|,\log|b|\}.
	\end{equation*}
	Recall the principle value integral that
	\begin{equation}\label{pvi}
		\frac{1}{2\pi\mathrm{i}}\oint_{|z|=r_{1}}\frac{f'(z)}{f(z)} \mathrm{d}z-\frac{1}{2\pi\mathrm{i}}\oint_{|z|=r_{1}}\frac{g'(z)}{g(z)} \mathrm{d}z=\sum_{|z_{i}|=r_{1}}\frac{1}{2\pi\mathrm{i}}\oint_{|z|=r_{1}} \frac{1}{z-z_{i}} \mathrm{d}z=\frac{1}{2}\#\{|z_{i}|=r_{1}\}.
	\end{equation}
	Combining \eqref{pvi} with \eqref{apply} gives \eqref{jens1}.
	
On other hand, 	by noting the principle value integral that
	\begin{equation*}
		\frac{1}{2\pi\mathrm{i}}\oint_{|z|=r_{2}}\frac{f'(z)}{f(z)} \mathrm{d}z-\frac{1}{2\pi\mathrm{i}}\oint_{|z|=r_{1}}\frac{f'(z)}{f(z)} \mathrm{d}z=\frac{1}{2}\#\{ z_{i}\in\partial \mathcal{A}(r_1,r_2)\}+ \#\{ z_{i}\in \mathcal{A}(r_1,r_2)\},
	\end{equation*}
then \eqref{jens2} follows immediately from \eqref{jens1}.
\end{proof}

\subsection{Quantization of winding number}

\begin{proof}[Proof of Theorem \ref{windacc}]
	We only need to prove 	$\nu^{+}(E,\mathrm{i}y)= \omega^{+}(E,\mathrm{i}y),$ the proof of the other case is similar. 
Fix $|y|<h$ such that $L(E,\mathrm{i}y)>0$. By Theorem \ref{regular} and Theorem \ref{cuh}, there exists $h'\in (0,h)$ such that $L(E,\mathrm{i}(y+\epsilon))>0$ for all $0<\epsilon<h'$ and 	
\begin{equation}\label{affine}
	\omega^{+}(E,\mathrm{i}y)=\frac{L(E,\mathrm{i}y_{2})-L(E,\mathrm{i}y_{1})}{y_{2}-y_{1}}, \  \text{if} \ [y_{1},y_{2}]\subset (y,y+h').	
\end{equation}
Note direct computation shows that 
\begin{equation*}
	\nu_{n}(E,\mathrm{i}\tilde{y})=\frac{-1}{2\pi\mathrm{i}n}\oint_{|z|=\mathrm{e}^{-\tilde{y}}}\frac{f_{n}'(z)}{f_{n}(z)} \mathrm{d}z,\quad \text{for} \ |\tilde{y}|<h.
\end{equation*}
By  \eqref{jens2} of Proposition \ref{jensen}, if $[y_{1},y_{2}]\subset (y,y+h')$, then
	\begin{equation*}
		\nu_{n}(E,\mathrm{i}y_{1})\leq  \frac{-1}{2\pi n(y_{2}-y_{1})}\int_{0}^{2\pi}\bigg(\log|f_{n}(\mathrm{e}^{\mathrm{i}(x+\mathrm{i}y_{1})})|-\log|f_{n}(\mathrm{e}^{\mathrm{i}(x+\mathrm{i}y_{2})})|\bigg)\mathrm{d}x\leq \nu_{n}(E,\mathrm{i}y_{2}).
	\end{equation*}
	Applying Theorem \ref{converge} to the above inequality gives
	\begin{equation}\label{supnu}
		\limsup_{n\rightarrow\infty}\nu_{n}(E,\mathrm{i}y_{1})\leq \frac{L(E,\mathrm{i}y_{2})-L(E,\mathrm{i}y_{1})}{y_{2}-y_{1}},
	\end{equation}
	and 
	\begin{equation}\label{infnu}
		\liminf_{n\rightarrow\infty}\nu_{n}(E,\mathrm{i}y_{2})\geq \frac{L(E,\mathrm{i}y_{2})-L(E,\mathrm{i}y_{1})}{y_{2}-y_{1}}.
	\end{equation}

	Let $\epsilon\in(0,h'/10)$. Choosing $y_{1}=y+2\epsilon$ and $y_{2}=y+3\epsilon$ in \eqref{supnu} and using \eqref{affine} gives
	\begin{equation}\label{limsup}
		\limsup_{n\rightarrow\infty}\nu_{n}(E,\mathrm{i}(y+2\epsilon))\leq \frac{1}{\epsilon}(L(E,\mathrm{i}(y+3\epsilon))-L(E,\mathrm{i}(y+2\epsilon)))=\omega^{+}(E,\mathrm{i}y).
	\end{equation}
On the other hand, choosing $y_{1}=y+\epsilon$ and $y_{2}=y+2\epsilon$ in \eqref{infnu} and using \eqref{affine} gives
	\begin{equation}\label{liminf}
		\liminf_{n\rightarrow\infty}\nu_{n}(E,\mathrm{i}(y+2\epsilon))\geq \frac{1}{\epsilon}(L(E,\mathrm{i}(y+2\epsilon))-L(E,\mathrm{i}(y+\epsilon)))=\omega^{+}(E,\mathrm{i}y).
	\end{equation}
	Combine \eqref{limsup} with \eqref{liminf}, we have
	\begin{equation*}
		\nu^{+}(E,\mathrm{i}y)=\lim_{\epsilon\rightarrow0^{+}}\lim_{n\rightarrow\infty}\nu_{n}(E,\mathrm{i}(y+2\epsilon))=\omega^{+}(E,\mathrm{i}y).
	\end{equation*}

	If $E\notin\Sigma_{v,y}$, then $(\alpha,S_{E,v}(\cdot+\mathrm{i}y))\in \mathcal{UH}$ according to Proposition \ref{equi}. By Theorem \ref{regular}, we have $\omega^{+}(E,\mathrm{i}y)= \omega^{-}(E,\mathrm{i}y)$ and thus the proof finishes. 
\end{proof}
\subsection{Application of winding number}
In this subsection, we give the applications of winding number:
\begin{proof}[Proof of Corollary \ref{amo}]
	Avila \cite{Av0}
	proved that for any $E\in\mathbb{C}$, any $y\geq 0$, 
	\begin{equation}\label{amo-ly}
		L(E, \mathrm{i}y) = \max\{\log \lambda + y, L(E,\mathrm{i}0)\}.
	\end{equation}
Then for any  $E\notin \Sigma_{2 \lambda \cos,0},$ the Schr\"odinger cocycle $(\alpha, S_{E,2\lambda\cos}(x))\in  \mathcal{UH}$, in particular $L(E,\mathrm{i}0)>0$,  and it is easy to see the turning point being $\gamma_{*}=L(E,\mathrm{i}0)- \log \lambda$. Hence  the result follows directly from \eqref{amo-ly} and Theorem \ref{windacc}.
\end{proof}

\begin{proof}[Proof of Corollary \ref{zeros}]		
Denote the number of zeros (counted with multiplicities) by
	\begin{equation*}
		\widehat{N}_{n}(r_{1},r_{2}):=\#\{z\in\mathcal{Z}_{f_{n}}: z\in\mathcal{A}(r_{1},r_{2})\}+\frac{1}{2}\#\{z\in\mathcal{Z}_{f_{n}}: z\in\partial \mathcal{A}(r_{1},r_{2})\}.
	\end{equation*}
	By  \eqref{count}, it follows that
	$$\nu_{n}(E,\mathrm{i}y_{2})-\nu_{n}(E,\mathrm{i}y_{1})=	\frac{1}{n} \widehat{N}_{n}( e^{-y_2} ,e^{-y_1}) .$$
	On the other hand, it is obvious that for $y_{1}\leq y_{2}$,
	\begin{equation*}
		\widehat{N}_{n}(\mathrm{e}^{-(y_{2}+\epsilon)},\mathrm{e}^{-(y_{1}-\epsilon)})\leq N_{n}(\mathrm{e}^{-(y_{2}+2\epsilon)},\mathrm{e}^{-(y_{1}-2\epsilon)}) \leq \widehat{N}_{n}(\mathrm{e}^{-(y_{2}+3\epsilon)},\mathrm{e}^{-(y_{1}-3\epsilon)}).
	\end{equation*}
	Hence Corollary \ref{zeros} follows directly from Theorem \ref{windacc}.
\end{proof}

\section{Density of states}
In this section, we investigate the density of states measure of the non-self-adjoint quasi-periodic Schr\"odinger operators and establish Thouless formula. 
Recall the following famous Widom's lemma.
\begin{lemma}[Widom's lemma \cite{Wi90,Wi94}]\label{widom}
	Suppose that  $\{\mathrm{d}\mu_{n}\}_{n\geq 1}$ has compact support in $\mathbb{C}$.
	If $\displaystyle p_{n}(z)=\int_{\mathbb{C}} \log |\zeta-z| \mathrm{d}\mu_{n}(\zeta)$ converges to $p(z)$ almost everywhere in $\mathbb{C}$, then $p$ is local integrable, $\Delta p\geq 0$, and $\mathrm{d}\mu_{n}$ converges weakly to $\frac{1}{2\pi} (\Delta p ) \mathrm{d}m$ as $n\rightarrow \infty$.
\end{lemma}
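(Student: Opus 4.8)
The plan is to read $p_n$ as the logarithmic potential of the positive measure $\mu_n$, so that $p_n$ is subharmonic on $\mathbb C$ with distributional Laplacian $\Delta p_n=2\pi\mu_n$, and then to argue in the order ``potentials first, measures afterwards'': weak convergence of $\mu_n$ by itself cannot control $p_n$, since the kernel $\log|\cdot|$ is unbounded below, so one instead extracts convergence of the potentials by subharmonic compactness and differentiates only at the end. First I would record the quantitative content of the hypothesis as it is actually used: all $\mu_n$ have mass bounded by a fixed constant and are supported in one fixed compact set $K\subset\mathbb C$ (in the application of this lemma to $\mathrm d\mathcal N_n^{x+\mathrm iy}$ they are probability measures supported in a disk whose radius depends only on $|E|$ and $\|v\|_h$). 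From this, $p_n(z)\le C_\rho$ for $|z|\le\rho$ with $C_\rho$ independent of $n$, so $\{p_n\}$ is locally uniformly bounded above on $\mathbb C$.

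Next I would invoke the standard compactness theorem for subharmonic functions: a sequence of subharmonic functions that is locally uniformly bounded above either tends to $-\infty$ locally uniformly, or has a subsequence converging in $L^1_{\mathrm{loc}}(\mathbb C)$ to a subharmonic function. Since $p_n\to p$ a.e.\ with $p$ finite a.e., the first alternative is impossible; hence every subsequence of $\{p_n\}$ has a further subsequence $p_{n_k}\to q$ in $L^1_{\mathrm{loc}}$ with $q$ subharmonic. Along a further subsequence this $L^1_{\mathrm{loc}}$ convergence is also pointwise a.e., so $q=p$ a.e. This already gives the first two assertions: $p$ coincides a.e.\ with the subharmonic function $q$, hence $p$ is locally integrable and $\Delta p=\Delta q$ is a positive measure, i.e.\ $\Delta p\ge0$. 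Moreover a subharmonic function is recovered from its a.e.-equivalence class (it equals the monotone limit of its disk averages), so $q$, and therefore $\Delta q$, is the \emph{same} for every choice of subsequence.

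To finish, $L^1_{\mathrm{loc}}$ convergence $p_{n_k}\to q$ forces $\Delta p_{n_k}\to\Delta q$ in the sense of distributions, i.e.\ $2\pi\mu_{n_k}\to\Delta q$ when tested against $C_c^\infty$ functions; because the $\mu_{n_k}$ are positive measures of uniformly bounded mass supported in the fixed compact $K$, this distributional convergence upgrades to weak convergence of measures, so $\mu_{n_k}\rightharpoonup\frac1{2\pi}\Delta q=\frac1{2\pi}\Delta p$. Since every subsequence of $\{\mu_n\}$ admits a further subsequence converging weakly to the one and the same limit $\frac1{2\pi}\Delta p$, the whole sequence converges: $\mathrm d\mu_n\rightharpoonup\frac1{2\pi}(\Delta p)\,\mathrm dm$.

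I expect the main obstacle to be precisely the point that makes the lemma non-trivial: one may not pass from convergence of the measures to convergence of their potentials, because the logarithmic kernel is not bounded below, so the subharmonic compactness theorem must be used to produce $L^1_{\mathrm{loc}}$ convergence of the $p_n$ in the first place, and only then can one differentiate to recover convergence of the $\mu_n$. A secondary point requiring care is that $\Delta p$ here is the distributional Laplacian — a positive measure, not a pointwise density — and that local integrability of $p$, which is part of the conclusion rather than a hypothesis, is obtained exactly by identifying $p$ a.e.\ with the subharmonic limit $q$.
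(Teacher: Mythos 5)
The paper does not prove this lemma; it is quoted from Widom's papers \cite{Wi90,Wi94} and used as a black box, so there is no internal proof to compare against. Your argument is the standard and correct one: uniform upper bounds give compactness of the subharmonic potentials in $L^1_{\mathrm{loc}}$, the a.e.\ limit pins down the subharmonic representative uniquely, and differentiating in the sense of distributions recovers weak convergence of the measures via the subsequence principle. You are also right to make explicit that the hypothesis must be read as a uniform mass bound and a common compact support (as holds for the normalized eigenvalue counting measures to which the paper applies the lemma); with that reading I see no gap.
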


Now we can finish the proof of Theorem \ref{open}.
\begin{proof}[Proof of Theorem \ref{open}]
	Fix $y\in\mathbb{R}^{d}$ with $|y_{j}|<h(1\leq j\leq d)$. Define the set 
	\begin{equation*}
		\mathcal{K}=\{(E,x): \lim_{n\rightarrow \infty}\frac{1}{n} \log |f_{n}(E,x+\mathrm{i}y)| =L(E,\mathrm{i}y)\},
	\end{equation*}
	and let $\mathcal{L}=(\mathbb{C}\times \mathbb{T}^{d}) \backslash \mathcal{K}$ be the bad set and the corresponding sections are $\mathcal{L}_{x}=\{E:(E,x)\in \mathcal{L}\}$, $\mathcal{L}_{E}=\{x:(E,x)\in \mathcal{L}\}$.	For any fixed $E\in \mathbb{C}\backslash \mathcal{Z}_{L,y}$, we have ${\rm mes}(\mathcal{L}_{E})=0$ by Theorem \ref{converge}. Then it follows from Fubini theorem that
	\begin{equation*}
		|\mathcal{L}| = \frac{1}{(2\pi)^d}\int_{\mathbb{T}^{d}}m(\mathcal{L}_{x}) \mathrm{d}x = \int_{\mathbb{C}} \mathrm{mes}(\mathcal{L}_{E})\mathrm{d}m(E)=0,
	\end{equation*}
	where $|\cdot|$ stands for the product measure on $\mathbb{C}\times \mathbb{T}^{d}$. Therefore for a.e. $x\in\mathbb{T}^d$, 
	\begin{equation*}
		\lim_{n\rightarrow \infty}\frac{1}{n} \log |f_{n}(E,x+\mathrm{i}y)| =L(E,\mathrm{i}y)
	\end{equation*}
	for a.e. $E\in\mathbb{C}$.
	Since $H_{n}(x+\mathrm{i}y)$ is bounded for any $x\in \mathbb{T}^{d}$,  $\mathrm{d} \mathcal{N}_{n}^{x+\mathrm{i}y}$ has compact support in $\mathbb{C}$. 
	Thus Theorem \ref{open}\ref{open1} follows from Lemma \ref{widom} with 
	\begin{equation*}
		p_{n}(E)=\frac{1}{n} \log |f_{n}(E,x+\mathrm{i}y)|,\quad \mathrm{d}\mu_{n}=\mathrm{d}\mathcal{N}_{n}^{x+\mathrm{i}y},\quad p(E)=L(E,\mathrm{i}y).
	\end{equation*}
	
To prove Theorem \ref{open}\ref{open2}, we recall the following:	
\begin{lemma}[Lower envelope theorem \cite{Lan}]\label{schwartz}
	Suppose that $\{\mathrm{d}\mu_{n}\}_{n\geq 1}$ has compact support in $\mathbb{C}$. Denote $\displaystyle p_{n}(z)=\int_{\mathbb{C}} \log |z-\zeta|\mathrm{d}\mu_{n}(\zeta)$.
	Assume that $\mathrm{d}\mu_{n} \rightharpoonup \mathrm{d}\mu$, as $n\rightarrow\infty$. Then
	\begin{equation*}
		\limsup_{n\rightarrow\infty} p_{n}(z)= \int_{\mathbb{C}} \log |z-\zeta|\mathrm{d}\mu(\zeta)\quad\text{ for a.e. } z\in\mathbb{C}.
	\end{equation*}
\end{lemma}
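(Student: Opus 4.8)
The plan is to establish the two one-sided inequalities $\limsup_{n\to\infty}p_n(z)\le p(z)$ for \emph{every} $z$, and $\limsup_{n\to\infty}p_n(z)\ge p(z)$ for Lebesgue-a.e. $z$, where I abbreviate $p(z):=\int_{\mathbb C}\log|z-\zeta|\,\mathrm d\mu(\zeta)$. First I would record the harmless reduction that all $\mathrm d\mu_n$ are carried by one fixed compact set $K$; this is automatic in the application to Theorem~\ref{open}, where the $\mathrm d\mu_n=\mathrm d\mathcal N_n^{x+\mathrm iy}$ are probability measures supported in a ball of radius $\lesssim\sup_n\|H_n(x+\mathrm iy)\|$, and it forces $\mu_n(\mathbb C)\to\mu(\mathbb C)$ and $\supp\mu\subset K$. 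For the first inequality, fix $z$: the function $\zeta\mapsto-\log|z-\zeta|$ is lower semicontinuous and bounded below on $K$, hence an increasing limit of continuous functions there, so weak convergence plus monotone convergence yields $\liminf_n\int(-\log|z-\zeta|)\,\mathrm d\mu_n(\zeta)\ge\int(-\log|z-\zeta|)\,\mathrm d\mu(\zeta)$, i.e. $\limsup_np_n(z)\le p(z)$. This is the classical principle of descent and needs no exceptional set.

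For the reverse inequality I would test against Lebesgue measure. Let $D$ be an arbitrary closed disk and $\mathrm d\sigma$ normalized Lebesgue measure on $D$; its logarithmic potential $\Psi(\zeta):=\int_D\log|w-\zeta|\,\mathrm d\sigma(w)$ is finite and continuous on $\mathbb C$ (the potential of an $L^\infty$ density on a bounded set). Since $\log|w-\zeta|$ is bounded above on $D\times K$ by some constant $C$, Tonelli applied to the nonnegative function $C-\log|w-\zeta|$ gives $\int_Dp_n\,\mathrm d\sigma=\int_{\mathbb C}\Psi\,\mathrm d\mu_n$, and the same identity with $\mu$ in place of $\mu_n$. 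As $\Psi$ is continuous and bounded on $K$ and $\mathrm d\mu_n\rightharpoonup\mathrm d\mu$ with supports in $K$, the right-hand side converges, so $\int_Dp_n\,\mathrm d\sigma\to\int_Dp\,\mathrm d\sigma$. On the other hand $p_n\le C_D$ uniformly on $D$ (because $\sup_n\mu_n(K)<\infty$), so $C_D-p_n\ge0$ and Fatou's lemma yields $\int_D(C_D-\limsup_np_n)\,\mathrm d\sigma\le\liminf_n\int_D(C_D-p_n)\,\mathrm d\sigma=C_D-\int_Dp\,\mathrm d\sigma$, that is, $\int_D\limsup_np_n\,\mathrm d\sigma\ge\int_Dp\,\mathrm d\sigma$.

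Putting the pieces together: $\limsup_np_n\le p$ pointwise while $\int_D(p-\limsup_np_n)\,\mathrm d\sigma\le0$ with nonnegative integrand, so $\limsup_np_n=p$ a.e. on $D$; letting $D$ exhaust $\mathbb C$ gives the claim a.e. on $\mathbb C$. I expect the main technical care to lie in the bookkeeping forced by the logarithmic kernel being unbounded below: every Fubini/Tonelli exchange and every application of Fatou must be performed on the nonnegative functions obtained after subtracting the finite upper-bound constants ($C$ on $D\times K$, $C_D$ on $D$), and one must verify the continuity of the auxiliary potential $\Psi$ and that all the integrals involved are finite. One should also note that the genuinely sharp classical statement replaces ``a.e.'' by ``quasi-everywhere'' and is considerably harder, but only the a.e. version is used here, and it follows from the elementary argument above.
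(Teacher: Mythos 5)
Your argument is correct, and it is worth noting that the paper does not prove this lemma at all: it is quoted from Landkof \cite{Lan}, where the classical lower envelope theorem is established in the stronger quasi-everywhere form (the exceptional set has zero logarithmic capacity), whose proof requires genuine potential theory. What you supply is a self-contained elementary proof of exactly the almost-everywhere version that is actually used: the principle of descent gives $\limsup_n p_n\le p$ everywhere, and averaging against normalized Lebesgue measure on a disk $D$ --- where the logarithmic kernel is locally integrable and its potential $\Psi$ is continuous, so that weak convergence transfers through Tonelli to give $\int_D p_n\,\mathrm{d}\sigma\to\int_D p\,\mathrm{d}\sigma$ --- combined with Fatou applied to $C_D-p_n\ge 0$ yields the reverse inequality in integrated form; since the integrand $p-\limsup_n p_n$ is nonnegative with nonpositive integral, equality holds a.e.\ on $D$, and exhausting $\mathbb{C}$ by disks finishes. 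Your bookkeeping is also right where it matters: the reduction to a common compact support and uniformly bounded masses (automatic in the application, where the $\mathrm{d}\mathcal{N}_n^{x+\mathrm{i}y}$ are probability measures on a fixed ball) is what legitimizes both the descent step and the uniform bound $p_n\le C_D$, and performing every Tonelli/Fatou exchange on the nonnegative functions obtained after subtracting finite upper bounds avoids the $\infty-\infty$ pitfalls of the kernel. The only loss relative to Landkof is that your argument cannot detect the polar exceptional set, but polar sets are Lebesgue-null, so nothing is lost for the use made of the lemma in the proof of Theorem \ref{open}.
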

Theorem \ref{open}\ref{open2} follows from Lemma \ref{schwartz} and  the weak uniqueness of subharmonic function.
\end{proof}

As a consequence of the non-self-adjoint Thouless formula, we can establish the relation between the winding number and the density of states measure for the one-frequency quasi-periodic Schr\"odinger operators. Let us first prove a lemma that $m(\mathcal{Z}_{L,y})=0$ for any Schr\"odinger operators with real analytic potential $v(\cdot+\mathrm{i}y)$.
\begin{lemma}\label{zeroset}
	Let $(\alpha,v)\in\mathbb{R}\backslash \mathbb{Q}\times C^{\omega}(\mathbb{T}_{h},\mathbb{R})$. The following hold.
	\begin{enumerate}[font=\normalfont, label={(\arabic*)}]
		\item  \label{item:zeroacc}For any $E\notin\mathbb{R}$, 
		$\omega^{+}(E,\mathrm{i}0)=\omega^{-}(E,\mathrm{i}0)=0.$
	
		\item \label{item:zeroset}For any $|y|<h$,
		$ \mathcal{Z}_{L,y}\subset \mathbb{R}.$
	\end{enumerate}	
\end{lemma}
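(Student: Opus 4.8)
The plan is to deduce \ref{item:zeroset} from \ref{item:zeroacc} together with the convexity of $y\mapsto L(E,\mathrm{i}y)$, and to prove \ref{item:zeroacc} from Theorem \ref{windacc} by computing the finite-scale winding numbers $\nu_n$ directly. So fix $E\notin\mathbb{R}$ for \ref{item:zeroacc}. Since $v$ is real-valued, $H(x)$ is self-adjoint for real $x$, hence by Proposition \ref{equi} the ($x$-independent) spectrum satisfies $\Sigma\subset\mathbb{R}$; therefore $E\notin\Sigma$, $(\alpha,S_{E,v})\in\mathcal{UH}$, and in particular $L(E,\mathrm{i}0)>0$. By Theorem \ref{windacc} it then suffices to show $\nu^{+}(E,\mathrm{i}0)=\nu^{-}(E,\mathrm{i}0)=0$; and since $\nu^{\pm}(E,\mathrm{i}0)=\lim_{\epsilon\to0^{\pm}}\lim_{n\to\infty}\nu_n(E,\mathrm{i}\epsilon)$, it is enough to show that $\nu_n(E,\mathrm{i}\epsilon)=0$ for \emph{every} $n$ once $0<|\epsilon|$ is small enough, depending only on $E$ and $v$ (not on $n$).

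The key point is that, although $H_n(x+\mathrm{i}\epsilon)$ is not self-adjoint, the unperturbed matrix $H_n(x)$ is real symmetric, hence normal. Writing $H_n(x+\mathrm{i}\epsilon)=H_n(x)+P_n(x,\epsilon)$ with the diagonal matrix $P_n(x,\epsilon)=\mathrm{diag}\bigl(v(x+\mathrm{i}\epsilon+j\alpha)-v(x+j\alpha)\bigr)_{1\le j\le n}$, one has $\|P_n(x,\epsilon)\|\le C_0|\epsilon|$ for $|\epsilon|\le h/2$, where $C_0:=\sup_{|y|\le h/2,\,s\in\mathbb{T}}|\partial_y v(s+\mathrm{i}y)|$ is a constant independent of $n$ and $x$. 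By the Bauer--Fike theorem for normal matrices, every eigenvalue $E_j(x+\mathrm{i}\epsilon)$ of $H_n(x+\mathrm{i}\epsilon)$ lies within distance $C_0|\epsilon|$ of $\sigma(H_n(x))\subset\mathbb{R}$, so $|\mathrm{Im}\,E_j(x+\mathrm{i}\epsilon)|\le C_0|\epsilon|$ uniformly in $n$, $x$ and $j$. Consequently, as soon as $0<|\epsilon|<\min\{h/2,\,|\mathrm{Im}\,E|/C_0\}$, all eigenvalue branches of $H_n(\cdot+\mathrm{i}\epsilon)$ stay inside the convex strip $\{w\in\mathbb{C}:|\mathrm{Im}\,w|<C_0|\epsilon|\}$, which does not contain $E$; hence, as $x$ runs over $[0,2\pi]$, the eigenvalues of $H_n(x+\mathrm{i}\epsilon)$ do not encircle $E$, and $\nu_n(E,\mathrm{i}\epsilon)=0$ by the expression for $\nu_n$ recalled in the Introduction. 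Letting $n\to\infty$ and then $\epsilon\to0^{\pm}$ gives $\nu^{\pm}(E,\mathrm{i}0)=0$, so $\omega^{\pm}(E,\mathrm{i}0)=0$ by Theorem \ref{windacc}, which proves \ref{item:zeroacc}.

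For \ref{item:zeroset}, fix $|y|<h$ and $E\in\mathcal{Z}_{L,y}$, and suppose for contradiction that $E\notin\mathbb{R}$. By Avila's global theory the function $g(\tilde y):=L(E,\mathrm{i}\tilde y)$ is convex on $(-h,h)$, and by \ref{item:zeroacc} its one-sided derivatives at $0$ are $g'_{+}(0)=\omega^{+}(E,\mathrm{i}0)=0$ and $g'_{-}(0)=\omega^{-}(E,\mathrm{i}0)=0$. Since the one-sided derivatives of a convex function are non-decreasing, $g$ is non-decreasing on $[0,h)$ and non-increasing on $(-h,0]$, whence $g(\tilde y)\ge g(0)$ for all $|\tilde y|<h$. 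In particular $0=L(E,\mathrm{i}y)=g(y)\ge g(0)=L(E,\mathrm{i}0)$, contradicting $L(E,\mathrm{i}0)>0$ (which holds because $E\notin\Sigma$, exactly as above). Hence $E\in\mathbb{R}$, i.e. $\mathcal{Z}_{L,y}\subset\mathbb{R}$.

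The delicate step is entirely in \ref{item:zeroacc}: one must keep the eigenvalues of the \emph{non-normal} truncations $H_n(x+\mathrm{i}\epsilon)$ within $O(|\epsilon|)$ of the real axis \emph{uniformly in} $n$. A naive first-order perturbation bound would only give this for $|\epsilon|$ small depending on $n$, which is useless, since in $\nu^{\pm}(E,\mathrm{i}0)$ one first lets $n\to\infty$ at a fixed $\epsilon$. It is precisely the normality of the unperturbed matrix $H_n(x)$ — forcing its eigenvector condition number to equal $1$ — that makes the Bauer--Fike estimate $n$-independent; once \ref{item:zeroacc} is established, \ref{item:zeroset} is a soft consequence of convexity.
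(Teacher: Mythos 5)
Your proposal is correct, but your argument for part \ref{item:zeroacc} takes a genuinely different route from the paper's. The paper never perturbs the truncated matrices: it applies its Jensen formula on the annulus (Proposition \ref{jensen}) with one boundary circle at $|z|=1$, uses the identity \eqref{vanish} (the winding of the self-adjoint truncation $H_n(x)$ around a non-real $E$ vanishes) together with Theorem \ref{converge} to obtain the two one-sided inequalities $\omega^{-}(E,\mathrm{i}0)\le 0\le \omega^{+}(E,\mathrm{i}0)$, and then closes the argument with $\omega^{+}(E,\mathrm{i}0)=\omega^{-}(E,\mathrm{i}0)$, which holds because $(\alpha,S_{E,v})\in\mathcal{UH}$ (Theorems \ref{regular} and \ref{cuh}). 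You instead prove that $\nu_n(E,\mathrm{i}\epsilon)=0$ for all $n$ and all small nonzero $\epsilon$ via a Bauer--Fike bound and then invoke Theorem \ref{windacc}. Your route buys an $n$-uniform localization of the finite-volume eigenvalues near $\mathbb{R}$, correctly identifying normality of $H_n(x)$ as the reason the bound is uniform in $n$, and it bypasses Jensen's formula entirely; the paper's route avoids any spectral perturbation theory for non-normal matrices and only needs information on the unperturbed circle $|z|=1$, but must then use regularity to upgrade the one-sided inequalities to equalities. Both arguments ultimately rest on Theorem \ref{converge} (yours through Theorem \ref{windacc}). Part \ref{item:zeroset} is the same convexity argument as in the paper.

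One step you should make explicit: for the non-normal matrices $H_n(x+\mathrm{i}\epsilon)$ the individual eigenvalues need not form continuous closed curves in $x$ (they may permute or branch), so ``the eigenvalues do not encircle $E$'' is not literally a statement about $n$ separate loops. The fix is exactly the convexity you already invoke: the strip $\{w:|\mathrm{Im}\,w|\le C_0|\epsilon|\}$ is simply connected and omits $E$, so $\log(w-E)$ admits a single-valued branch there, and the symmetric function $\sum_j\log\bigl(E_j(x+\mathrm{i}\epsilon)-E\bigr)$ is then a continuous, $2\pi$-periodic function of $x$, which forces the winding of $x\mapsto\det\bigl(H_n(x+\mathrm{i}\epsilon)-E\bigr)$ about the origin to vanish. (Alternatively, deform $\epsilon$ linearly to $0$: the same Bauer--Fike bound keeps the determinant nonvanishing along the homotopy, and the winding at $\epsilon=0$ is zero exactly as in the paper's \eqref{vanish}.) With that sentence added, the proof is complete.
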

\begin{proof}
	Let us prove Lemma \ref{zeroset}\ref{item:zeroacc}. For any $E\in\mathbb{C}\backslash\mathbb{R}$, the Schr\"odinger cocycle $(\alpha,S_{E,v})\in \mathcal{UH}$. It follows from Theorem \ref{regular} and Theorem \ref{cuh} that there exists $h'\in(0,h)$ such that for any $|\epsilon|<h'$
	\begin{equation}\label{leftright}
		\omega^{+}(E,\mathrm{i}\epsilon)=\omega^{-}(E,\mathrm{i}\epsilon).
	\end{equation}

Apply Proposition \ref{jensen} with $r_{1}=1$ and $r_{2}=\mathrm{e}^{\epsilon}$ with $\epsilon\in(0,h'/10)$, we have
\begin{equation}\label{jensen1}
	\frac{1}{2\pi\mathrm{i}n}\oint_{|z|=1}\frac{f'_{n}(z)}{f_{n}(z)} \mathrm{d}z\leq	\frac{1}{2\pi n\epsilon}\int_{0}^{2\pi}\bigg(\log|f_{n}(\mathrm{e}^{\mathrm{i}(x-\mathrm{i}\epsilon)})|-\log|f_{n}(\mathrm{e}^{\mathrm{i}x})|\bigg)\mathrm{d}x.
\end{equation}
Note that $H_{n}(x)$ is self-adjoint whose spectrum $\{E_{j}(x)\}$ is real, which means
\begin{equation}\label{vanish}
	\frac{1}{2\pi\mathrm{i}n}\oint_{|z|=1}\frac{f'_{n}(z)}{f_{n}(z)} \mathrm{d}z=\frac{1}{n}\sum_{j=1}^{n}\frac{-1}{2\pi}\int_{0}^{2\pi} \frac{\partial}{\partial x} \arg(E_{j}(x)-E)\mathrm{d}x=0.
\end{equation}
It follows from \eqref{jensen1}, \eqref{vanish}, and Theorem \ref{converge} that
\begin{equation}\label{leqzero}
	\begin{split}
		0&\leq \lim_{n\rightarrow \infty}\frac{1}{2\pi n\epsilon}\int_{0}^{2\pi}\bigg(\log|f_{n}(\mathrm{e}^{\mathrm{i}(x-\mathrm{i}\epsilon)})|-\log|f_{n}(\mathrm{e}^{\mathrm{i}x})|\bigg)\mathrm{d}x\\
		&= \frac{1}{\epsilon}(L(E,-\mathrm{i}\epsilon)-L(E,\mathrm{i}0))=-\omega^{-}(E,\mathrm{i}0).
	\end{split}
\end{equation}

On the other hand, apply Proposition \ref{jensen}  with $r_{2}=1$ and $r_{1}=\mathrm{e}^{\epsilon}$ with $\epsilon\in(-h'/10,0)$, we have
\begin{equation}\label{jensen2}
	\frac{1}{2\pi\mathrm{i}n}\oint_{|z|=1}\frac{f_{n}'(z)}{f_{n}(z)} \mathrm{d}z \geq\frac{1}{2\pi n\epsilon}\int_{0}^{2\pi}\bigg(\log|f_{n}(\mathrm{e}^{\mathrm{i}x})|-\log|f_{n}(\mathrm{e}^{\mathrm{i}(x-\mathrm{i}\epsilon)})|\bigg)\mathrm{d}x.
\end{equation}
Hence combine \eqref{jensen2} with \eqref{vanish} and Theorem \ref{converge} we have
\begin{equation}\label{geqzero}
	\begin{split}
		0&\geq \lim_{n\rightarrow \infty}\frac{1}{2\pi n\epsilon}\int_{0}^{2\pi}\bigg(\log|f_{n}(\mathrm{e}^{\mathrm{i}x})|-\log|f_{n}(\mathrm{e}^{\mathrm{i}(x+\mathrm{i}\epsilon)})|\bigg)\mathrm{d}x\\
		&=\frac{1}{\epsilon}(L(E,\mathrm{i}0)-L(E,\mathrm{i}\epsilon))=-\omega^{+}(E,\mathrm{i}0).
	\end{split}
\end{equation}
Thus Lemma \ref{zeroset}\ref{item:zeroacc} follows from \eqref{leftright}, \eqref{leqzero}, and \eqref{geqzero}.

As for Lemma \ref{zeroset}\ref{item:zeroset}, since $L(E,\mathrm{i}y)$ is convex and $L(E,\mathrm{i}0)> 0$ for $E\notin \mathbb{R}$, we deduce that $L(E,\mathrm{i}y)\geq L(E,\mathrm{i}0)$ for any $|y|<h$ by Lemma \ref{zeroset}\ref{item:zeroacc}, which means $\mathcal{Z}_{L,y}\subset \mathbb{R}$.
\end{proof}

\begin{proof}[Proof of Corollary \ref{relation}]
	According to Avila's global theory \cite{Av0}, $L(E,\mathrm{i}y)$ is convex and piecewise linear to $y$ with right derivatives $\omega^{+}(E,\mathrm{i}{y})$. 
	Thus
	\begin{equation*}
		L(E,\mathrm{i}y)=\int_{0}^{y} \omega^{+}(E,\mathrm{i}\tilde{y}) \mathrm{d}\tilde{y}+L(E,\mathrm{i}0).
	\end{equation*}
Note if  $E\in \mathbb{R}$,  since $L(E,\mathrm{i}y)$ is convex and even with respect to $y$,  it follows that 
\begin{equation}\label{posi} L(E,\mathrm{i}y)\geq L(E,\mathrm{i}0)>0,  \quad \text{for any}  \ y\in (0,h).
\end{equation}
  If $E\notin \mathbb{R}$, by Lemma \ref{zeroset}\ref{item:zeroacc}, we have $\omega^{+}(E,\mathrm{i}0)=0$, then \eqref{posi} again follows from convexity. 
Therefore, we can apply  Theorem \ref{windacc} and obtain
\begin{equation}\label{lhs}
	L(E,\mathrm{i}y)=\int_{0}^{y} \nu^{+}(E,\mathrm{i}\tilde{y}) \mathrm{d}\tilde{y}+L(E,\mathrm{i}0).
\end{equation}

Meanwhile, by Lemma \ref{zeroset}\ref{item:zeroset}, one can apply Theorem \ref{open}\ref{open2} to obtain that
	\begin{equation}\label{rhs}
		L(E,\mathrm{i}y)= \int_{\mathbb{C}} \log |\tilde{E}-E|\mathrm{d}\mathcal{N}^{\mathrm{i}y}(\tilde{E}).
	\end{equation}
So both two sides in \eqref{accden} are equal by \eqref{lhs} and \eqref{rhs}, thus Corollary \ref{relation} is true.
\end{proof}

\appendix
\section{Proof of Lemma \ref{three}}\label{app}
Recall that for any $A\in {\rm SL}(2,\mathbb{C})$ there exist unit vectors $u_{A}^{+}, u_{A}^{-}, v_{A}^{+}, v_{A}^{-}$ such that $Au_{A}^{+} =\|A\| v_{A}^{+}$, $Au_{A}^{-}=\|A\|^{-1} v_{A}^{-}$. Moreover, $|u_{A}^{+}\wedge u_{A}^{-}|= |v_{A}^{+} \wedge v_{A}^{-}|=1$. 

Now we finish the proof of Lemma \ref{three}. 
Let us first show for sufficiently large $K$,
\begin{equation}\label{inequ1}
	{\rm mes}(\mathcal{F}_{K}\cap \mathcal{G}_{K}) \leq {\rm e}^{-(K^{1-\Cr{bousigh}})^{\frac{1}{2d}}}.
\end{equation}
For any $x\in \mathcal{F}_{K} \cap \mathcal{G}_{K}$, by Corollary \ref{ldtexten} and the definition of $\mathcal{F}_{K}$,
\begin{equation}\label{fupper}
	\begin{split}
		&|f_{n}(x+\mathrm{i}y)| \leq {\rm e}^{nL_{n}(E,\mathrm{i}y)-5C_{\star}nK^{-\Cr{bousigh}}},\\
		&\|M_{n}(x+\mathrm{i}y)\| \geq {\rm e}^{nL_{n}(E,\mathrm{i}y)-\Cr{bouCh}nK^{-\Cr{bousigh}}}.
	\end{split}
\end{equation}
Since $ f_{n}(x+\mathrm{i}y)=M_{n}(x+\mathrm{i}y)\vec{e}_{1} \wedge \vec{e}_{2}$ and  $ |\langle u_{n}^{+}(x+\mathrm{i}y),\vec{e}_{1}\rangle|= |u_{n}^{-}(x+\mathrm{i}y)\wedge \vec{e}_{1}|$,
then
\begin{equation}\label{flower}
	\begin{split}
		|f_{n}(x+\mathrm{i}y)| \geq &\|M_{n}(x+\mathrm{i}y)\| |u_{n}^{-}(x+\mathrm{i}y)\wedge \vec{e}_{1}| |v_{n}^{+}(x+\mathrm{i}y)\wedge \vec{e}_{2}|\\ &-\|M_{n}(x+\mathrm{i}y)\|^{-1} |u_{n}^{+}(x+\mathrm{i}y)\wedge \vec{e}_{1}| |v_{n}^{-}(x+\mathrm{i}y)\wedge \vec{e}_{2}|.
	\end{split}
\end{equation}
Combine (\ref{fupper}) with (\ref{flower}) and $C_{\star}\geq \Cr{bouCh}$ one can get that
\begin{equation*}
	|u_{n}^{-}(x+\mathrm{i}y)\wedge \vec{e}_{1}| \cdot|v_{n}^{+}(x+\mathrm{i}y)\wedge \vec{e}_{2}| \leq {\rm e}^{-4C_{\star}n K^{-\Cr{bousigh}}}.
\end{equation*}
So either
\begin{equation}\label{either1}
	|u_{n}^{-}(x+\mathrm{i}y)\wedge \vec{e}_{1}| \leq  {\rm e}^{-2 C_{\star}nK^{-\Cr{bousigh}}}
\end{equation}
or
\begin{equation}\label{either2}
	|v_{n}^{+}(x+\mathrm{i}y)\wedge \vec{e}_{2}|  \leq {\rm e}^{-2 C_{\star}nK^{-\Cr{bousigh}}}.
\end{equation}

Fix any $x\in \mathcal{F}_{K}\cap \mathcal{G}_{K}$. By the preliminary discussion, among the three terms in $\mathcal{F}_{K}$ either two terms satisfy (\ref{either1}) or two terms satisfy (\ref{either2}) according to the pigeonhole principle. The following result was essentially proved in \cite{GS08, BV13}.
\begin{lemma}\label{wedgebug}
	If (\ref{either1}) occurs twice in $\mathcal{F}_{K}$, then
	\begin{equation*}
		\begin{split}
			|\vec{e}_{1}\wedge M_{j_{2}-j_{1}}(T^{j_{1}}x+\mathrm{i}y) \vec{e}_{1}|&\leq |u_{n}^{-}(T^{j_{2}}x+\mathrm{i}y)\wedge M_{j_{2}-j_{1}}(T^{j_{1}}x+\mathrm{i}y) u_{n}^{-}(T^{j_{1}}x+\mathrm{i}y)|\\
			&\quad + \sqrt{2}\|M_{j_{2}-j_{1}}(T^{j_{1}}x+\mathrm{i}y)^{-1}\|\cdot |\vec{e}_{1}\wedge u_{n}^{-}(T^{j_{1}}x+\mathrm{i}y)|\\
			&\quad +\sqrt{2}\|M_{j_{2}-j_{1}}(T^{j_{1}}x+\mathrm{i}y)\|\cdot |\vec{e}_{1}\wedge u_{n}^{-}(T^{j_{2}}x+\mathrm{i}y)|.
		\end{split}
	\end{equation*}
	If (\ref{either2}) occurs twice in $\mathcal{F}_{K}$, then
	\begin{equation*}
		\begin{split}
			|\vec{e}_{2}\wedge M_{j_{2}-j_{1}}(T^{n+j_{1}}x+\mathrm{i}y) \vec{e}_{2}|&\leq |v^{+}_{n}(T^{j_{2}}x+\mathrm{i}y) \wedge M_{j_{2}-j_{1}}(T^{n+j_{1}}x+\mathrm{i}y)v^{+}_{n}(T^{j_{1}}x+\mathrm{i}y)|\\
			&\quad+  \sqrt{2} \|M_{j_{2}-j_{1}}(T^{n+j_{1}}x+\mathrm{i}y)\|\cdot |\vec{e}_{2}\wedge v_{n}^{+}(T^{j_{2}}x+\mathrm{i}y)| \\
			&\quad + \sqrt{2}\|M_{j_{2}-j_{1}}(T^{n+j_{1}}x+\mathrm{i}y)^{-1}\|\cdot |\vec{e}_{2}\wedge v_{n}^{+}(T^{j_{1}}x+\mathrm{i}y)|.
		\end{split}
	\end{equation*}
	Moreover, the same type of estimates hold if  $(j_{1},j_{2})$ is replaced by $(0,j_{1})$ or $(0,j_{2})$.
\end{lemma}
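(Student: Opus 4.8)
The plan is to reduce Lemma \ref{wedgebug} to a purely linear-algebraic inequality valid for arbitrary unit vectors; the hypotheses \eqref{either1}/\eqref{either2} enter only through the fact that the relevant unit vectors ($u_{n}^{-}$ or $v_{n}^{+}$, at $T^{j_{1}}x+\mathrm{i}y$ and $T^{j_{2}}x+\mathrm{i}y$) are nearly parallel to $\vec{e}_{1}$ (resp.\ $\vec{e}_{2}$), the defect being exactly the wedge quantities that appear on the right-hand side.

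First I would record two elementary facts. (a) For unit vectors $a,b\in\mathbb{C}^{2}$ one has $|\langle a,b\rangle|^{2}+|a\wedge b|^{2}=1$ (reduce to $a=\vec{e}_{1}$ using the $U(2)$-invariance of both quantities), whence
\[
\min_{|\lambda|=1}|a-\lambda b|^{2}=2-2|\langle a,b\rangle|=2\bigl(1-\sqrt{1-|a\wedge b|^{2}}\bigr)\le 2|a\wedge b|^{2},
\]
so there is a unimodular $\lambda$ with $|a-\lambda b|\le\sqrt{2}\,|a\wedge b|$. (b) Every $A\in\mathrm{SL}(2,\mathbb{C})$ has $\|A^{-1}\|=\|A\|$, since its singular values $\sigma,\sigma^{-1}$ are shared by $A^{-1}$; thus $\|M_{j_{2}-j_{1}}(\cdot)\|$ and $\|M_{j_{2}-j_{1}}(\cdot)^{-1}\|$ may be used interchangeably in the conclusion.

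Now suppose \eqref{either1} holds at $j_{1}$ and $j_{2}$. Write $M=M_{j_{2}-j_{1}}(T^{j_{1}}x+\mathrm{i}y)$, $u_{1}=u_{n}^{-}(T^{j_{1}}x+\mathrm{i}y)$, $u_{2}=u_{n}^{-}(T^{j_{2}}x+\mathrm{i}y)$, and pick unimodular $\lambda_{1},\lambda_{2}$ with $|\vec{e}_{1}-\lambda_{1}u_{1}|\le\sqrt{2}\,|\vec{e}_{1}\wedge u_{1}|$ and $|\vec{e}_{1}-\lambda_{2}u_{2}|\le\sqrt{2}\,|\vec{e}_{1}\wedge u_{2}|$. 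Inserting $\vec{e}_{1}=\lambda_{1}u_{1}+(\vec{e}_{1}-\lambda_{1}u_{1})$ in the argument of $M$ and $\vec{e}_{1}=\lambda_{2}u_{2}+(\vec{e}_{1}-\lambda_{2}u_{2})$ in the remaining slot of $\vec{e}_{1}\wedge Mu_{1}$, bilinearity of $\wedge$ gives
\[
\vec{e}_{1}\wedge M\vec{e}_{1}=\lambda_{1}\lambda_{2}\,(u_{2}\wedge Mu_{1})+\lambda_{1}\,\bigl((\vec{e}_{1}-\lambda_{2}u_{2})\wedge Mu_{1}\bigr)+\vec{e}_{1}\wedge M(\vec{e}_{1}-\lambda_{1}u_{1}),
\]
and the last two terms are bounded, via $|c\wedge d|\le|c|\,|d|$ and $|Mw|\le\|M\|\,|w|$, by $\sqrt{2}\,\|M\|\,|\vec{e}_{1}\wedge u_{2}|$ and $\sqrt{2}\,\|M\|\,|\vec{e}_{1}\wedge u_{1}|$; rewriting the latter with $\|M\|=\|M^{-1}\|$ yields precisely the stated inequality.

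The case \eqref{either2} at $j_{1},j_{2}$ is formally identical, but one replaces the forward factorization by the output-side one: from $M_{a+b}(z)=M_{a}(T^{b}z)M_{b}(z)$ applied twice one gets $M_{n}(T^{j_{2}}x)=M_{j_{2}-j_{1}}(T^{n+j_{1}}x)\,M_{n}(T^{j_{1}}x)\,M_{j_{2}-j_{1}}(T^{j_{1}}x)^{-1}$, so it is $M_{j_{2}-j_{1}}(T^{n+j_{1}}x+\mathrm{i}y)$ that sends the image direction $v_{n}^{+}$ at $T^{j_{1}}x+\mathrm{i}y$ to a scalar multiple of that at $T^{j_{2}}x+\mathrm{i}y$; the same expansion with $v_{n}^{+}$, $\vec{e}_{2}$ in place of $u_{n}^{-}$, $\vec{e}_{1}$ then closes it. The $(0,j_{1})$ and $(0,j_{2})$ variants come from running the argument verbatim with the pair $(j_{1},j_{2})$ replaced by $(0,j_{1})$ or $(0,j_{2})$. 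I do not expect a genuine obstacle here; the only point requiring care is the index bookkeeping in the $v_{n}^{+}$ case (checking that the output-side factorization produces exactly $M_{j_{2}-j_{1}}(T^{n+j_{1}}x+\mathrm{i}y)$ and not a neighbouring shift), and noting that all constants are independent of $n$ because the estimate is an exact geometric identity together with the triangle inequality.
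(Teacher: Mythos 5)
Your proof is correct. The paper itself offers no proof of this lemma (it only cites Goldstein--Schlag and Binder--Voda), and your argument is precisely the standard one from those references: expand $\vec{e}_{1}\wedge M\vec{e}_{1}$ bilinearly after replacing each copy of $\vec{e}_{1}$ by its nearest unimodular multiple of $u_{n}^{-}$ plus a defect of size at most $\sqrt{2}\,|\vec{e}_{1}\wedge u_{n}^{-}|$, then apply the triangle inequality and $|c\wedge d|\le |c|\,|d|$. The only point worth flagging is that your conclusion naturally produces $\|M\|$ where the statement has $\|M^{-1}\|$, and you correctly dispose of this via $\|M\|=\|M^{-1}\|$ for $\mathrm{SL}(2,\mathbb{C})$; equivalently one could bound $|\vec{e}_{1}\wedge Mw|=|M^{-1}\vec{e}_{1}\wedge w|\le\|M^{-1}\|\,|w|$ directly, which is presumably how the cited sources arrive at the asymmetric form. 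As you note, the inequalities are unconditional linear algebra, so the hypotheses \eqref{either1}--\eqref{either2} play no role in the proof itself, only in the subsequent application.
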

By Lemma \ref{wedgebug}, \eqref{either1}, \eqref{either2}, for sufficient large $K$, we have
\begin{equation}\label{condict}
	\begin{split}
		&|f_{j_{2}-j_{1}-1}(T^{j_{1}}x+\mathrm{i}y)|=|\vec{e}_{1}\wedge M_{j_{2}-j_{1}}(T^{j_{1}}x+\mathrm{i}y) \vec{e}_{1}| <{\rm e}^{-C_{\star}nK^{-\Cr{bousigh}}},\\
		&|f_{j_{2}-j_{1}-1}(T^{n+j_{1}+1} x+\mathrm{i}y)|=|\vec{e}_{2}\wedge M_{j_{2}-j_{1}}(T^{n+j_{1}} x+\mathrm{i}y) \vec{e}_{2}|<{\rm e}^{-C_{\star}nK^{-\Cr{bousigh}}}.
	\end{split}
\end{equation}

Suppose \eqref{inequ1} does not hold, then ${\rm mes}(\mathcal{F}_{K}\cap \mathcal{G}_{K}) >{\rm e}^{-(K^{1-\Cr{bousigh}})^{\frac{1}{2d}}}$. 
By \eqref{condict}, we deduce that for the choice of $l$ from $j_{1}-1, j_{2}-1, j_{2}-j_{1}-1$,
\begin{equation*}
	{\rm mes}\{x\in \mathbb{T}^{d}:|f_{l}(x+\mathrm{i}y
	)|<  {\rm e}^{-C_{\star} nK^{-\Cr{bousigh}}}\} \geq  {\rm mes}(\mathcal{F}_{K}\cap \mathcal{G}_{K})>{\rm e}^{-(K^{1-\Cr{bousigh}})^{\frac{1}{2d}}}.
\end{equation*}
Recall that $n>K\delta^{-1}$, we note that for any $l<K^{\frac{1-\Cr{bousigh}}{2}}$,
\begin{equation*}
	{\rm e}^{-(K^{1-\Cr{bousigh}})^{\frac{1}{2d}}} > \mathrm{e}^{-(nK^{-\Cr{bousigh}})^{\frac{1}{2d}}}>{\rm e}^{- (nK^{-\Cr{bousigh}} l^{-1})^{\frac{1}{d}}}.
\end{equation*}
Hence we have $
	{\rm mes}\{x\in \mathbb{T}^{d}:|f_{l}(x+\mathrm{i}y)|< {\rm e}^{-C_{\star} nK^{-\Cr{bousigh}}}\}>{\rm e}^{- (nK^{-\Cr{bousigh}} l^{-1})^{\frac{1}{d}}}. $
Apply Lemma \ref{GS08} to $u(x)=\log|f_{l}(x+\mathrm{i}y)|$ with
\begin{equation*}
	\mathcal{M}=C_{E}^{v}l,\quad \mathcal{S}=C_{\star}nK^{-\Cr{bousigh}},\quad\varrho= {\rm e}^{-(nK^{-\Cr{bousigh}} l^{-1})^{\frac{1}{d}}}.
\end{equation*}
One can get that for $l\in (l_{0}, K^{\frac{1-\sigma_{3}}{2}})$ with $K$ sufficiently large,
\begin{equation*}
	\sup_{x\in \mathbb{T}^{d}} u(x)\leq C_{h-h',d} \mathcal{M}-\frac{\mathcal{S}}{C_{h-h',d} \log^{d} (C_{d}/\varrho)} \leq -\sqrt{C_{\star}} l.
\end{equation*}
Then it follows from Lemma \ref{closeone} that $\sup_{x\in\mathbb{T}^{d}}\|M_{l}(x+\mathrm{i}y)\|<6+2(|E|+\|v\|_{h})$, which contradicts to the assumption $L(E,\mathrm{i}y)>\gamma$. 

We are done with the proof by choosing $\Cr{trisig}=\min\{\frac{\Cr{bousigh}}{2}, \frac{1-\Cr{bousigh}}{4d}\}$ since for  large enough $K$,
\begin{equation*}
		{\rm mes} (\mathcal{F}_{K})
		\leq {\rm mes}(\mathbb{T}^{d}\backslash \mathcal{G}_{K})+  {\rm mes} (\mathcal{F}_{K}\cap \mathcal{G}_{K}) 
		\leq K{\rm e}^{-\Cr{bouCh}K^{\Cr{bousigh}}} + {\rm e}^{-(K^{1-\Cr{bousigh}})^{\frac{1}{2d}}}\leq \mathrm{e}^{-K^{\Cr{trisig}}}.
\end{equation*}
\qed

\section*{Acknowledgements}
X. Wang would like to thank M. Goldstein for useful discussions on Cantan estimates. Q. Zhou also would like to thank S. Jitomirskaya for  inspired discussions. 
This work was partially supported by National Key R\&D Program of China (2020 YFA0713300) and Nankai Zhide Foundation.  J. You was also partially supported by NSFC grant (11871286). Q. Zhou was supported by NSFC grant (12071232), the Science Fund for Distinguished Young Scholars of Tianjin (No. 19JCJQJC61300).

\end{document}